\newcommand{\FOL}{\textrm{\upshape FO}\xspace}
\newcommand{\FOtwo}{\textrm{\upshape FO$^2$}\xspace}
\newcommand{\FOk}{\textrm{\upshape FO$^k$}\xspace}
\newcommand{\Ctwo}{\textrm{\upshape C$^2$}\xspace}
\newcommand{\Ck}{\textrm{\upshape C$^k$}\xspace}
\newcommand{\BML}{\textrm{\upshape ML}\xspace}
\newcommand{\PBML}{\textrm{\upshape ML$^+$}\xspace}
\newcommand{\PML}{\textrm{\upshape ML$^+_\Diamond$}\xspace}
\newcommand{\PMLk}{\textrm{\upshape ML$^{+,k}_\Diamond$}\xspace}
\newcommand{\PMLb}{\textrm{\upshape ML$^{+,B}_\Diamond$}\xspace}
\newcommand{\PMLbk}{\textrm{\upshape ML$^{+,B,k}_\Diamond$}\xspace}
\newcommand{\PMLg}{\textrm{\upshape ML$^{+,G}_\Diamond$}\xspace}
\newcommand{\GML}{\textrm{\upshape ML$_{\#}$}\xspace}
\newcommand{\GMLk}{\textrm{\upshape ML$^k_{\#}$}\xspace}
\newcommand{\GMLb}{\textrm{\upshape ML$^B_{\#}$}\xspace}
\newcommand{\GMLbk}{\textrm{\upshape ML$^{B,k}_{\#}$}\xspace}
\newcommand{\GMLg}{\textrm{\upshape ML$^G_{\#}$}\xspace}
\newcommand{\HL}{\textrm{\upshape HL$(\downarrow,@)$}\xspace}
\newcommand{\HLe}{\textrm{\upshape HL$(\textrm{\upshape E}\xspace,\downarrow,@)$}\xspace}
\newcommand{\HLb}{\textrm{\upshape HL$^B(\downarrow,@)$}\xspace}
\newcommand{\homN}{\textrm{hom}_{\mathbb{N}}}
\newcommand{\homB}{\textrm{hom}_{\mathbb{B}}}
\newcommand{\homS}{\textrm{hom}_{\mathcal{S}}}
\newcommand{\Hom}{\textrm{Hom}}
\newcommand{\Ext}{\textrm{Ext}}
\newcommand{\Inj}{\textrm{Inj}}
\newcommand{\Sur}{\textrm{Sur}}
\newcommand{\cnt}{\textrm{count}}
\newcommand{\cntS}{\textrm{count}_\mathcal{S}}
\newcommand{\Prop}{\textrm{Prop}}
\newcommand{\Act}{\mathbb{A}}
\newcommand{\Succ}{\textrm{Succ}}
\newcommand{\Pred}{\textrm{Pred}}
\newcommand{\markform}{\textrm{mark}}
\newcommand{\marking}{\lambda}
\newcommand{\dback}{\blacklozenge}
\newcommand{\glob}{\textrm{\upshape E$^{\geq k}$}\xspace}
\newcommand{\tr}{\textrm{tr}}
\newcommand{\wvar}{\textrm{WVAR}}
\newcommand{\parent}{\textrm{parent}}
\newcommand{\depth}{\textrm{depth}}
\newcommand{\logic}{\mathcal{L}}
\newcommand{\TauModels}{\mathcal{M}^n_\tau}
\newcommand{\KT}{\mathcal{T}}
\newcommand{\GK}{\mathcal{PG}}
\newcommand{\AKT}{\mathcal{A}}
\newcommand{\FKT}{\mathcal{F}}
\newcommand{\MLequiv}{\equiv_{ML}}
\newcommand{\PMLequiv}{\equiv_{\PML}}
\newcommand{\PMLkequiv}{\equiv_{\PMLk}}
\newcommand{\PMLbkequiv}{\equiv_{\PMLbk}}
\newcommand{\PMLgequiv}{\equiv_{\PMLg}}
\newcommand{\PBMLequiv}{\equiv_{\PBML}}
\newcommand{\GMLequiv}{\equiv_{\GML}}
\newcommand{\GMLkequiv}{\equiv_{\GMLk}}
\newcommand{\GMLbkequiv}{\equiv_{\GMLbk}}
\newcommand{\GMLgequiv}{\equiv_{\GMLg}}
\newcommand{\HLequiv}{\equiv_{\HL}}
\newcommand{\HLbequiv}{\equiv_{\HLb}}
\newcommand{\bisim}{\leftrightarroweq}
\newcommand{\dirsim}{\leftrightarroweq_d}
\newcommand{\unr}{\texttt{unr}}
\newcommand{\gsub}{\texttt{gsub}}
\newcommand{\expfunc}{\texttt{exp}}
\newcommand{\expand}[1]{\texttt{exp}(#1)}
\newcommand{\expandk}[1]{\texttt{exp}^k(#1)}
\newcommand{\flip}{\texttt{flip}}
\newcommand{\inst}{\texttt{inst}}
\newcommand{\last}{\texttt{last}}
\newcommand{\reach}[1]{\mathcal{R}^{#1}}
\newcommand{\unreach}[1]{\mathcal{U}^{#1}}
\newcommand{\rec}{\mathcal{P}}
\newcommand{\PG}{\mathcal{PG}}
\newcommand{\el}{\textrm{el}}
\newcommand{\dom}{\textrm{dom}}
\newcommand{\rng}{\textrm{rng}}
\newcommand{\tup}[1]{\overline{#1}}
\newcommand{\C}{\mathcal{C}}
\begin{document}

\begin{frontmatter}
\title{Lov\'{a}sz Theorems for Modal Languages}
\author{Jesse Comer}
\address{University of Pennsylvania}

\begin{abstract}
A famous result due to Lov\'{a}sz states that two finite relational structures $M$ and $N$ are isomorphic if, and only if, for all finite relational structures $T$, the number of homomorphisms from $T$ to $M$ is equal to the number of homomorphisms from $T$ to $N$. Since first-order logic ($\FOL$) can describe finite structures up to isomorphism, this can be interpreted as a characterization of $\FOL$-equivalence via \emph{homomorphism-count indistinguishability} with respect to the class of finite structures. We identify classes of labeled transition systems (LTSs) such that homomorphism-count indistinguishability with respect to these classes, where ``counting'' is done within an appropriate semiring structure, captures equivalence with respect to positive-existential modal logic, graded modal logic, and hybrid logic, as well as the extensions of these logics with either backward or global modalities. Our positive results apply not only to finite structures, but also to certain well-behaved infinite structures. We also show that equivalence with respect to positive modal logic and equivalence with respect to the basic modal language are not captured by homomorphism-count indistinguishability with respect to any class of LTSs, regardless of which semiring is used for counting.
\end{abstract}

\begin{keyword}
Homomorphism, Semiring, Graded Modal Logic, Hybrid Logic.
\end{keyword}
\end{frontmatter}

\section{Introduction}
\label{sec:intro}
Lovász's theorem \cite{lovasz1967operations} grew out of the study of a fundamental computational problem in graph theory and complexity theory: the graph isomorphism problem. This problem is significant because it is not known to be solvable in polynomial time, but is also not known to be \textrm{\upshape \bf{NP}}-complete. In fact, recent work has shown that the problem can be resolved in quasipolynomial time \cite{babai2016graph}, and it is considered to be a potential member of the conjectured class of \textrm{\upshape \bf{NP}}-intermediate problems, which exist if and only if $\textrm{\upshape \bf{P}} \neq \textrm{\upshape \bf{NP}}$ \cite{ladner1975structure}. Due to the high running time of known exact algorithms for the problem, and the difficulty in determining a lower bound on its complexity, researchers have turned toward the study of
heuristic algorithms, such as the color-refinement algorithm, which can distinguish many (but not all) non-isomorphic graphs \cite{babai1980random}.

The Lov\'{a}sz theorem relates homomorphisms to isomorphisms; while originally stated for structures with a single relation of arbitrary finite arity, it will be convenient for our purposes to consider its generalization to arbitrary finite relational structures. A map between two finite relational structures is a \emph{homomorphism} if, whenever a tuple of elements in the first structure occurs in some relation, then the image of that tuple must also occur in the corresponding relation in the second structure. Given finite structures $M$ and $N$, we write $\homN(N,M)$ to denote the number of homomorphisms from $N$ to $M$. Given a class $\mathcal{C}$ of finite structures and a fixed finite structure $M$, we can form the \emph{homomorphism count vector} of $M$ with respect to the class $\mathcal{C}$: the sequence $\homN(\mathcal{C},M) = \langle \homN(A,M) \rangle_{A \in \mathcal{C}}$. Using this notation, Lov\'{a}sz's result in \cite{lovasz1967operations} can be stated as follows: two finite relational structures $M$ and $N$ are isomorphic if and only if $\homN(\mathcal{M},M) = \homN(\mathcal{M},N)$, where $\mathcal{M}$ is the class of all finite structures. Informally, this says that homomorphism count indistinguishability with respect to $\mathcal{M}$ \emph{captures} isomorphism between finite structures.

Every class of finite structures $\mathcal{C}$ induces an equivalence relation $\sim_\mathcal{C}$ on finite structures defined by $M \sim_\mathcal{C} N$ if and only if $\homN(\mathcal{C},M) = \homN(\mathcal{C},N)$. Dvo\v{r}ák initiated the study of such equivalence relations for proper subclasses $\mathcal{C}$ of $\mathcal{M}$, showing that two undirected graphs are homomorphism count indistinguishable with respect to the class of trees if and only if they are indistinguishable by the color-refinement algorithm \cite{dvovrak2010recognizing}. This was later proven independently by Dell et. al. \cite{dell2018lov}. In fact, Dvo\v{r}ák and Dell et. al. proved a more general result: homomorphism count indistinguishability with respect to graphs of tree-width at most $k$ captures indistinguishability by the $k$-dimensional Weisfeiler-Leman (WL) method, where the color-refinement algorithm is the special case for $k=1$.

Given two graphs with adjacency matrices $A$ and $B$, an isomorphism between them can be interpreted as a permutation matrix $X$ such that $AX = B$. If we drop the requirement that $X$ contain only binary values, allowing instead positive rational number entries such that each column and row sums to $1$, then $X$ is a \textit{fractional isomorphism} \cite{ramana1994fractional}. The existence of a fractional isomorphism between two graphs is strictly weaker than the existence of an isomorphism, and so induces a less-refined equivalence relation on the class of all graphs. Fractional isomorphisms are an inherently \textit{linear algebraic} notion, and yet it has also been shown that two graphs are indistinguishable by the color-refinement algorithm if and only if a fractional isomorphism exists between them \cite{tinhofer1986graph,tinhofer1991note}.

The \textit{two-variable fragment} (\FOtwo) is the fragment of first-order logic in which only two variables are allowed. An important extension of \FOtwo is the \textit{two-variable fragment with counting quantifiers} (\Ctwo), which contains quantifiers of the form $\exists^{\geq k}$, where $\exists^{\geq k} x \varphi(x)$ asserts the existence of at least $k$ elements satisfying $\varphi(x)$. \Ctwo is an expressive, but decidable, fragment of \FOL \cite{gradel1997two}. A theorem of Cai et. al. shows that two graphs are \Ctwo-equivalent if and only if they are indistinguishable by the the color-refinement algorithm \cite{cai1992optimal}. In fact, they show that two graphs are invariant under the $k$-variable fragment with counting quantifiers (\Ck), which naturally generalizes \Ctwo, if and only if they are indistinguishable by the $(k-1)$-dimensional WL method (for $k \geq 2$).

In artificial intelligence, \textit{graph neural networks} (GNNs) are a type of machine learning architecture which have found numerous applications in the social and physical sciences \cite{wu2021comprehensive,zhou2020graph}. In \cite{morris2019weisfeiler}, Morris et. al. showed that GNNs can distinguish precisely those graphs distinguishable by the color-refinement algorithm. Inspired by the observation that \Ctwo and the color-refinement algorithm can be generalized to \Ck and the $k$-dimensional WL method, respectively, the authors proposed \emph{$k$-dimensional GNNs}. They showed that these $k$-dimensional GNNs can distinguish non-isomorphic graphs with the same expressive power as the $k$-dimensional WL method.

We have now seen that several seemingly distinct notions -- the color-refinement algorithm from graph theory, fractional isomorphism from linear algebra, the two-variable fragment with counting quantifiers from logic, and graph neural networks from machine learning -- all induce the same equivalence class on the class of undirected graphs. We have also seen that similar equivalences also hold for the natural generalizations of these notions. Furthermore, they are all undergirded by the same phenomenon: the expressive power of homomorphism count vectors restricted to particular classes of structures.

Due to these connections, Atserias et. al. set out to study which equivalence relations on graphs can be expressed by restricting homomorphism vectors to some fixed class of graphs \cite{atserias2021expressive}. In particular, they provide negative results showing that chromatic equivalence and $\FOk$-equivalence cannot be captured by homomorphism count indistinguishability with respect to any class of graphs. They also introduce a more general perspective, which we also take, in which ``counting'' can be performed in an arbitrary semiring.

\paragraph{Main Contributions.}
This paper aims to characterize logical equivalence with respect to various modal languages via homomorphism count indistinguishability with respect to appropriate classes of labeled transition systems (LTSs). The main results are as follows.
\begin{enumerate}
\item Positive-existential modal equivalence is captured by homomorphism count indistinguishability over the Boolean semiring with respect to the class of \emph{trees}. The extended languages with backward and global modalities are captured by the classes of \emph{connected, acyclic LTSs} and \emph{forests}, respectively, over the Boolean semiring.
\item Graded modal equivalence is captured by homomorphism count indistinguishability over the natural semiring with respect to the class of trees. The extended languages with backward and global modalities are captured by the classes of \emph{connected, acyclic} LTSs and \emph{forests}, respectively, over the natural semiring.
\item Equivalence with respect to hybrid logic is captured by homomorphism count indistinguishability over the natural semiring with respect to the class of \emph{point-generated} LTSs. The extended language with backward modalities is captured by the class of \emph{connected} LTSs.
\item Equivalence of LTSs with respect to positive modal logic and the basic modal language cannot be captured by restricting the left homomorphism count vector over any semiring to any class of LTSs.
\end{enumerate}

These results capture equivalence relations even over certain infinite structures, which we specify in the respective sections. The negative result (iv) is similar in spirit to the negative results from \cite{atserias2021expressive} mentioned above, but is more general in that it rules out homomorphism count indistinguishability characterizations for arbitrary semirings. Some of these results were obtained in the author's MSc thesis \cite{comer2023homomorphism}.

\section{Preliminaries}
\label{sec:prelim}
We assume familiarity with the syntax and semantics of first-order logic ($\FOL$). We use $\sigma$ and $\tau$ to denote first-order signatures, and we work primarily over \emph{modal signatures} of the form $\sigma = \Prop \cup \Act$, where $\Prop$ is a finite set of unary predicate symbols (called \emph{proposition letters}) and $\Act$ is a finite set of binary predicate symbols (called \emph{actions} or \emph{transitions}). All of the modal languages discussed in this paper will be variants of the basic (multi)modal language $\BML$, which is defined by the following recursive syntax:
$$\varphi := p \mid \varphi \land \varphi \mid \varphi \lor \varphi \mid \lnot \varphi \mid \Diamond_i \varphi \mid \square_i \varphi,$$
where $p \in \Prop$ and $\Diamond_i, \square_i$ are \emph{modalities} for the action $R_i \in \Act$. We define the semantics of $\BML$ by the well-known \emph{standard translation} of $\BML$ to $\FOL$:
\begin{multicols}{2}
\noindent
\begin{align*}
ST_x(p) &:= P(x), \\
ST_x(\varphi \land \psi) &:= ST_x(\varphi) \land ST_x(\psi), \\
ST_x(\varphi \lor \psi) &:= ST_x(\varphi) \lor ST_x(\psi),
\end{align*}
\columnbreak
\begin{align*}
ST_x(\lnot \varphi) &:= \lnot ST_x(\varphi), \\
ST_x(\Diamond_i \varphi) &:= \exists y (R_i(x,y) \land ST_y(\varphi)), \\
ST_x(\square_i \varphi) &:= \forall y (R_i(x,y) \to ST_y(\varphi)).
\end{align*}
\end{multicols}
\vspace*{-2em}
We write $M,N,S,T$ to denote (possibly infinite) first-order structures and $a,b,c,d,m,n,s,t$ to denote elements of structures. Given a relation symbol $F$, we write $F^M$ to denote the interpretation of $F$ in the structure $M$. For a $k$-ary relation symbol $F$, we say that $F^M(m_1,\hdots,m_k)$ \emph{holds} if the tuple $\langle m_1,\hdots,m_k \rangle$ is in $F^M$, in which case we say that $F^M(m_1,\hdots,m_k)$ is a \emph{fact} of $M$. Given a fact $f$, we write $\el(f)$ for the set of elements occurring in $f$. A \emph{pointed} structure, denoted $(M,a_1,\hdots,a_n)$, is a first-order structure $M$ together with a tuple of \emph{distinguished elements} $a_1,\hdots,a_n \in \dom(M)$. A \emph{labeled transition system} (LTS) is a pointed structure $M_a = (M,a)$ over a modal signature with exactly one distinguished element. We write that $M_a$ is a $\sigma$-LTS to emphasize that it is defined over the modal signature $\sigma$. We refer to elements of $\dom(M)$ as \emph{states}. Given a $\sigma$-LTS $M$ and a state $m \in \dom(M)$, we define
\begin{align*}
\Succ^M_\sigma[m] &:= \{n \in M \mid R^M(m,n) ~\text{holds for some}~ R \in \sigma \}, ~\text{and} \\
\Pred^M_\sigma[m] &:= \{n \in M \mid R^M(n,m) ~\text{holds for some}~ R \in \sigma \}
\end{align*}
to be the sets of $\sigma$-\emph{successors} and $\sigma$-\emph{predecessors}, respectively, of $m$ in $M$. For $R \in \sigma$, we also write $\Succ^M_R[m]$ and $\Pred^M_R[m]$ for the successors (resp. predecessors) of $m$ in $M$ along an $R$ transition. A $\sigma$-LTS $M_a$ is \emph{image-finite} if $\Succ^M_\sigma[m]$ is finite for each $m \in \dom(M)$, and \emph{degree-finite} if both $\Succ^M_\sigma[m]$ and $\Pred^M_\sigma[m]$ are finite for each $m \in \dom(M)$. We also write $\marking^M_\sigma(m)$ to denote the set of proposition letters $p \in \sigma$ such that $M_m \models p$. Each modal language $\logic$ discussed in this paper has an associated \emph{satisfaction relation} $\models$ between LTSs and formulas of $\logic$. If two $\sigma$-LTSs $M_a$ and $N_b$ satisfy the same formulas of $\logic$ over signature $\sigma$, then we write $M_a \equiv^\sigma_{\logic} N_b$.

\paragraph{Homomorphism Count Vectors.}
Let $(M,\tup{a})$ and $(N,\tup{b})$ be pointed $\sigma$-structures, where $\tup{a} = a_1,\hdots,a_n \in \dom(M)$ and $\tup{b} = b_1,\hdots,b_n \in \dom(N)$. A map $h: dom(M) \to dom(N)$ is a \emph{homomorphism} from $(M,\tup{a})$ to $(N,\tup{b})$ if $a_i \mapsto b_i$ for each $i \leq n$ and, for each $k$-ary relation symbol $R \in \sigma$, we have that $R^N(h(s_1),\hdots,h(s_k))$ holds whenever $R^M(s_1,\hdots,s_k)$ holds. An \emph{isomorphism} is a bijective homomorphism whose inverse is also a homomorphism; if an isomorphism from $(M,\tup{a})$ to $(N,\tup{b})$ exists, we write $(M,\tup{a}) \cong (N,\tup{b})$. A homomorphism $h: (M,\tup{a}) \to (N,\tup{b})$ is \emph{fully surjective} if it is surjective and if, for all $k$-ary relation symbols $R \in \sigma$, whenever $\langle t_1,\hdots,t_k \rangle \in R^N$, there also exists a tuple $\langle s_1,\hdots,s_k \rangle \in R^M$ such that $\langle h(s_1),\hdots,h(s_k) \rangle = \langle t_1,\hdots,t_k \rangle$. We say that $(M,\tup{a})$ and $(N,\tup{b})$ are \emph{homomorphically equivalent} if there exist homomorphisms $h: (M,\tup{a}) \to (N,\tup{b})$ and $g: (N,\tup{b}) \to (M,\tup{a})$.

Borrowing from database-theoretic terminology, an $\FOL$ formula of the form
$$\varphi(x_1,\hdots,x_n) := \exists y_1, \hdots, y_m \left( ~\bigwedge_{j \in J} \alpha_j~ \right),$$
where $J$ is a finite index set and each $\alpha_j$ is an atomic formula, is called a \emph{conjunctive query} (CQ). Each CQ $\varphi$ corresponds to a finite pointed structure whose domain is the variables of the formula, where each free variable is a distinguished element, and whose facts are the atomic formulas occurring in the formula \cite{chandra1977optimal}. This structure is the \emph{canonical instance} of $\varphi$ (notation: $\inst(\varphi)$). Any $\FOL$ formula containing only atomic formulas, existential quantifiers, and conjunction can be converted to a CQ by pulling all quantifiers to the front and renaming variables as necessary, so we will use the notation $\inst$ for arbitrary formulas of this form. The following useful fact equates satisfying assignments for a conjunctive query with homomorphisms out of its canonical instance.

\begin{fact}
\label{fact:magic-lemma}
Let $\varphi(x_1,\hdots,x_n)$ be a CQ and $(M,a_1,\hdots,a_n)$ a structure over the same signature. A homomorphism $h: \inst(\varphi) \to (M,a_1,\hdots,a_n)$ is a satisfying assignment for $\varphi$ in $(M,a_1,\hdots,a_n)$ such that $x_i \mapsto a_i$ for each $i \leq n$.
\end{fact}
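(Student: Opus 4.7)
The plan is to argue the correspondence in both directions using the explicit definitions of $\inst(\varphi)$ and of homomorphism. Recall that if $\varphi(x_1,\dots,x_n) = \exists y_1,\dots,y_m \bigwedge_{j \in J} \alpha_j$, then $\inst(\varphi)$ is the pointed $\sigma$-structure whose domain is the set of variables $\{x_1,\dots,x_n,y_1,\dots,y_m\}$, whose distinguished tuple is $(x_1,\dots,x_n)$, and whose facts are exactly the atoms $\alpha_j$ read as tuples in the appropriate relations. I will unfold what a homomorphism $h : \inst(\varphi) \to (M,a_1,\dots,a_n)$ amounts to under this description and verify directly that it matches the notion of a satisfying assignment sending $x_i$ to $a_i$.

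For the forward direction, suppose $h$ is a homomorphism. By the definition of a homomorphism of pointed structures, $h(x_i) = a_i$ for every $i \leq n$, so I can define an assignment $v$ to the existential variables by $v(y_j) := h(y_j)$. For each $\alpha_j$, the atom is a fact of $\inst(\varphi)$ by construction, so preservation of relations under $h$ gives that the corresponding tuple of images lies in the interpretation of the relation symbol in $M$, i.e.\ $M \models \alpha_j[h]$. Taking the conjunction over $j \in J$ and then existentially closing over the $y_j$ witnessed by $v$ shows that $\varphi(a_1,\dots,a_n)$ holds in $M$, which is precisely what it means for $h$ (restricted to the bound variables) to be a satisfying assignment with $x_i \mapsto a_i$.

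For the converse, given a satisfying assignment $v$ for $\varphi$ in $(M,a_1,\dots,a_n)$ with $v(x_i) = a_i$ (obtained from witnesses $v(y_j)$ for the existential quantifiers), define $h : \inst(\varphi) \to M$ by $h(x_i) := a_i$ and $h(y_j) := v(y_j)$. The distinguished-element condition is immediate, and each fact of $\inst(\varphi)$ is some atom $\alpha_j$; because $v$ satisfies $\alpha_j$ in $M$, the tuple $h$ assigns to the variables of $\alpha_j$ lies in the corresponding relation of $M$, so $h$ preserves all facts. Hence $h$ is a homomorphism, and the two notions coincide.

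I do not anticipate any real obstacle: this is essentially the Chandra--Merlin observation, and the entire argument is a matter of matching definitions once the canonical instance is unfolded. The only point requiring mild care is keeping track of the two groups of variables (free vs.\ existentially quantified) and verifying that the pointed-homomorphism clause handles the free variables while the relation-preservation clause handles the atoms $\alpha_j$ uniformly, regardless of whether their arguments are free or bound.
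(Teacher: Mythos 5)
Your proposal is correct: it is the standard Chandra--Merlin argument, unfolding the definition of the canonical instance and checking both directions by matching the pointed-homomorphism clause against the free variables and the relation-preservation clause against the atoms. The paper states this as a \emph{Fact} without proof (citing Chandra and Merlin), and the definitional argument it implicitly relies on is exactly the one you give.
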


We write $\Hom((M,\tup{a}),(N,\tup{a}))$ to denote the collection of all homomorphisms from $(M,\tup{a})$ to $(N,\tup{b})$. A \emph{semiring} is an algebraic structure $\mathcal{S} = \langle S, +, \cdot, 0, 1 \rangle$, where $\langle S, +, 0 \rangle$ is a commutative monoid, $\langle S, \cdot, 1 \rangle$ is a monoid, $\cdot$ distributes over $+$, and $a \cdot 0 = 0 \cdot a = 0$ for all $a \in A$. We define the \emph{homomorphism count} from $(M,\tup{a})$ to $(N,\tup{b})$ over $\mathcal{S}$ to be
$$\homS((M,\tup{a}),(N,\tup{a})) := \cnt_\mathcal{S}(\lvert \Hom((M,\tup{a}),(N,\tup{a})) \rvert),$$
where $\cnt_\mathcal{S}: \mathbb{N} \to S$ is defined by
$$\cnt_\mathcal{S}(n) := \begin{cases}
0_\mathcal{S} &\text{if}~ n = 0 \\
\sum_{1 \leq i \leq n} 1_\mathcal{S}, &\text{otherwise,}
\end{cases}$$
where the summation is defined by iterated addition in $\mathcal{S}$. Note that $\homS((M,\tup{a}),(N,\tup{a}))$ is only defined when $\lvert \Hom((M,\tup{a}),(N,\tup{a})) \rvert$ is finite. Our notion of counting is essentially just iterated addition, within some semiring, of the multiplicative unit of that semiring with itself. The decision to use semirings is not a canonical choice, but is general enough to cover all known results on homomorphism count indistinguishability.

\begin{definition}
\label{def:hom-count-vector}
Let $(M,\tup{a})$ be a $\tau$-structure with $n$ distinguished elements, and let $\mathcal{C}$ be a class of finite $\tau$-structures, each with $n$ distinguished elements, such that $\Hom((N,\tup{b}),(M,\tup{a}))$ is finite for each $(N,\tup{b})$ in $\mathcal{C}$. The \emph{left homomorphism vector} (or \emph{left profile}) of $(M,\tup{a})$ over $\mathcal{S}$ restricted to $\mathcal{C}$ is the $\mathcal{C}$-indexed sequence
$$\hom_\mathcal{S}(\mathcal{C},(M,\tup{a})) := \langle \hom_\mathcal{S}((N,\tup{b}),(M,\tup{a})) \rangle_{(N,\tup{b}) \in \mathcal{C}}.$$
The term \emph{left} is used here because the sequence includes homomorphism counts \emph{from} structures in $\mathcal{C}$ \emph{to} the structure $(M,\tup{a})$.
\end{definition}

We work mostly with the Boolean semiring $\mathbb{B} = \langle \{0,1\}, \lor, \land, \top, \bot \rangle$ and the natural number semiring $\mathbb{N} = \langle \omega, +, \cdot, 0, 1 \rangle$. We write $\TauModels$ for the class of all finite $\tau$-structures with $n$ distinguished elements. Note that $\homB((M,\tup{a}),(N,\tup{b})) = 1$ when a homomorphism from $(M,\tup{a})$ to $(N,\tup{b})$ exists, and $\homB((M,\tup{a}),(N,\tup{b})) = 0$ otherwise. It follows easily that $\homB(\TauModels,(M,\tup{a})) = \homB(\TauModels,(N,\tup{b}))$ if and only if $(M,\tup{a})$ and $(N,\tup{b})$ are homomorphically-equivalent. Using the notation of Definition \ref{def:hom-count-vector}, Lov\'{a}sz's theorem can be stated as follows.

\begin{theorem}[Lov\'{a}sz's Theorem, \cite{lovasz1967operations}]
Let $(M,\tup{a})$ and $(N,\tup{b})$ be finite $\tau$-structures with $n$ distinguished elements, where $\tau$ is a finite relational signature. Then $\homN(\TauModels,(M,\tup{a})) = \homN(\TauModels,(N,\tup{b}))$ if and only if $(M,\tup{a}) \cong (N,\tup{b})$.
\end{theorem}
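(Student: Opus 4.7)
The plan is to prove both directions of the biconditional separately; the direction $(\Leftarrow)$ is immediate, since any isomorphism $\phi : (M,\tup{a}) \to (N,\tup{b})$ yields, by post-composition, a bijection between $\Hom((T,\tup{x}),(M,\tup{a}))$ and $\Hom((T,\tup{x}),(N,\tup{b}))$ for every $(T,\tup{x}) \in \TauModels$, so the counts agree pointwise. For the harder $(\Rightarrow)$ direction, the strategy is to first extract basic numerical invariants of $M$ and $N$ from the hypothesis, then promote agreement of homomorphism counts to agreement of \emph{injective} homomorphism counts via Möbius inversion, and finally assemble an isomorphism directly.

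First, standard test structures pin down cardinalities and fact counts. The structure $T_0$ consisting of the $n$ distinguished elements together with one additional free element and no facts satisfies $\homN(T_0,(M,\tup{a})) = |\dom(M)|$, forcing $|\dom(M)| = |\dom(N)|$. Similarly, for each $k$-ary $R \in \tau$ and each choice of which positions are occupied by distinguished versus fresh elements, letting $T_R$ be the structure containing a single $R$-fact of that shape gives $\homN(T_R,(M,\tup{a}))$ equal to the number of matching tuples in $R^M$, so these refined fact counts agree between $M$ and $N$.

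The main step is to show that equality of $\homN$-counts on $\TauModels$ implies equality of injective homomorphism counts on $\TauModels$. Given a pointed finite structure $(A,\tup{x})$, let $\Pi(A,\tup{x})$ be the finite lattice of equivalence relations on $\dom(A)$ each of whose blocks contains at most one distinguished element; for $\pi \in \Pi(A,\tup{x})$, define the quotient $(A/\pi,\tup{x})$ so that each fact of $A$ descends to the corresponding fact on blocks. Every homomorphism $h : (A,\tup{x}) \to (B,\tup{y})$ factors uniquely as the canonical quotient map onto $A/\ker(h)$ followed by an injective homomorphism, yielding
\begin{equation*}
\homN((A,\tup{x}),(B,\tup{y})) = \sum_{\pi \in \Pi(A,\tup{x})} \Inj((A/\pi,\tup{x}),(B,\tup{y})).
\end{equation*}
Möbius inversion on the finite lattice $\Pi(A,\tup{x})$ then expresses $\Inj((A,\tup{x}),(B,\tup{y}))$ as an integer linear combination of values $\homN((A/\pi,\tup{x}),(B,\tup{y}))$, so the hypothesized equality of $\homN$-counts transfers to equality of $\Inj$-counts on all of $\TauModels$. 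Verifying this decomposition carefully -- in particular, checking that the quotient structure is well-defined, that the kernel factorization is exact, and that $\Pi(A,\tup{x})$ really is closed under the meets needed for Möbius inversion -- is where I expect the main technical work to lie.

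To conclude, apply the transferred equality with $(A,\tup{x}) = (M,\tup{a})$: the identity map witnesses $\Inj((M,\tup{a}),(M,\tup{a})) \geq 1$, hence $\Inj((M,\tup{a}),(N,\tup{b})) \geq 1$, so some injective homomorphism $f : (M,\tup{a}) \to (N,\tup{b})$ exists. The cardinality equality forces $f$ to be a bijection on domains, and the refined fact-count equality forces $f$ to map $R^M$ bijectively onto $R^N$ for each $R \in \tau$; consequently $f^{-1}$ is also a homomorphism, so $f$ is an isomorphism.
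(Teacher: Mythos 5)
The paper does not prove this statement at all --- it is quoted from \cite{lovasz1967operations} as a known result --- so there is no in-paper argument to compare against. Your proof is the standard one (the same surjective/injective decomposition plus M\"obius inversion that underlies the paper's Theorem~\ref{thm:AKW-lovasz} from Atserias et al.), and its overall architecture is sound: the backward direction by post-composition, the cardinality and fact-count invariants from one-point and one-fact test structures, the transfer from $\homN$-counts to $\Inj$-counts, and the endgame where an injective homomorphism $(M,\tup{a})\to(N,\tup{b})$ together with $|\dom(M)|=|\dom(N)|$ and $|R^M|=|R^N|$ forces an isomorphism.

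There is one genuine (though repairable) slip in the key step: you restrict $\Pi(A,\tup{x})$ to equivalence relations whose blocks contain at most one distinguished element. The paper's pointed structures do not require the distinguished tuple to consist of pairwise distinct elements, so a target $(B,\tup{y})$ may have $y_i=y_j$ for $i\neq j$; any homomorphism $h$ then has $x_i$ and $x_j$ in the same block of $\ker(h)$, and such $h$ are simply missing from the right-hand side of your displayed identity, which therefore fails for such targets. The fix is to sum over the \emph{full} lattice of equivalence relations on $\dom(A)$, defining the quotient's distinguished tuple as the tuple of blocks $[x_1],\hdots,[x_n]$ (possibly with repetitions); the terms $\Inj((A/\pi,\tup{x}),(B,\tup{y}))$ in which $\pi$ merges $x_i$ with $x_j$ but $y_i\neq y_j$ vanish automatically, the factorization through $A/\ker(h)$ is then exact, and M\"obius inversion over the full partition lattice yields the target-independent integer linear combination you need. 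With that correction the rest of your argument goes through as written.
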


\noindent The following definition was introduced in \cite{atserias2021expressive} to generalize Lov\'{a}sz's result.

\begin{definition}
\label{def:ext-class}
If $\mathcal{C}$ is a class of $\tau$-structures, we write $\Inj(\mathcal{C})$ to denote the class of $\tau$-structures $(N,\tup{b})$ such that there exists some injective homomorphism $h: (N,\tup{b}) \to (M,\tup{a})$ for some $(M,\tup{a}) \in \mathcal{C}$. Similarly, we write $\Sur(\mathcal{C})$ to denote the class of $\tau$-structures $(N,\tup{b})$ such that there exists some fully-surjective homomorphism $h: (M,\tup{a}) \to (N,\tup{b})$ for some $(M,\tup{a}) \in \mathcal{C}$. We define the \emph{extension class} of $\mathcal{C}$ to be $\Ext(\mathcal{C}) := \Inj(\mathcal{C}) \cap \Sur(\mathcal{C})$.
\end{definition}

\begin{theorem}[\cite{atserias2021expressive}]
\label{thm:AKW-lovasz}
Let $\mathcal{C}$ be a non-empty class of finite pointed $\tau$-structures, each with the same number of distinguished elements. Then for all $(M,\tup{a}), (N,\tup{b}) \in \mathcal{C}$, we have $\homN(\Ext(\mathcal{C}),(M,\tup{a})) = \homN(\Ext(\mathcal{C}),(N,\tup{b}))$ if and only if $(M,\tup{a}) \cong (N,\tup{b})$.
\end{theorem}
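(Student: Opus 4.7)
The backward direction is immediate, since any isomorphism $(M,\tup{a}) \cong (N,\tup{b})$ induces a bijection on homomorphism sets by post-composition. For the forward direction, I would generalize the classical Lov\'{a}sz argument, relating $\homN$-counts inside $\Ext(\mathcal{C})$ to counts of \emph{injective} homomorphisms via a triangular (M\"obius-type) inversion. Observe first that $\mathcal{C} \subseteq \Ext(\mathcal{C})$, since the identity on any $(C,\tup{d}) \in \mathcal{C}$ witnesses membership in both $\Sur(\mathcal{C})$ and $\Inj(\mathcal{C})$; in particular, $(M,\tup{a}),(N,\tup{b}) \in \Ext(\mathcal{C})$.

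The central decomposition: every homomorphism $h \colon (T,\tup{c}) \to (M,\tup{a})$ factors uniquely as $h = i \circ p$, where $(T',\tup{c}')$ is the image of $h$ carrying exactly the facts $R(h(s_1),\ldots,h(s_k))$ for $R(s_1,\ldots,s_k) \in R^T$, the map $p$ is fully surjective, and $i$ is an injective homomorphism. Since $(T,\tup{c}) \in \Sur(\mathcal{C})$, composition with $p$ gives $(T',\tup{c}') \in \Sur(\mathcal{C})$; and since $(T',\tup{c}')$ embeds injectively into $(M,\tup{a}) \in \mathcal{C}$, it is also in $\Inj(\mathcal{C})$; thus $(T',\tup{c}') \in \Ext(\mathcal{C})$. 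Grouping homomorphisms by the isomorphism class of their image and accounting for the free action of $\mathrm{Aut}(T',\tup{c}')$ on factorizations yields
\[
\homN((T,\tup{c}),(M,\tup{a})) = \sum_{[(T',\tup{c}')] \in \Ext(\mathcal{C})/{\cong}} \frac{|\mathrm{FSur}((T,\tup{c}),(T',\tup{c}'))| \cdot |\mathrm{IHom}((T',\tup{c}'),(M,\tup{a}))|}{|\mathrm{Aut}(T',\tup{c}')|},
\]
where $\mathrm{FSur}(\cdot,\cdot)$ and $\mathrm{IHom}(\cdot,\cdot)$ denote the sets of fully surjective and of injective homomorphisms between pointed structures.

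Viewed as a linear system in the unknowns $|\mathrm{IHom}((T',\tup{c}'),(M,\tup{a}))|$ indexed by $\Ext(\mathcal{C})/{\cong}$ and ordered by $|T|$, the coefficients vanish whenever $|T'|>|T|$, and also whenever $|T'|=|T|$ with $(T',\tup{c}')\not\cong(T,\tup{c})$, since a fully surjective map between finite pointed structures of equal size is an isomorphism. The diagonal coefficient equals $|\mathrm{FSur}((T,\tup{c}),(T,\tup{c}))|/|\mathrm{Aut}(T,\tup{c})| = 1$, using that every fully surjective endomorphism of a finite pointed structure is an automorphism. Solving row by row expresses each $|\mathrm{IHom}((T,\tup{c}),(M,\tup{a}))|$ as a rational combination of finitely many $\homN((T',\tup{c}'),(M,\tup{a}))$ with $(T',\tup{c}')\in\Ext(\mathcal{C})$, so the hypothesis forces $|\mathrm{IHom}((T,\tup{c}),(M,\tup{a}))| = |\mathrm{IHom}((T,\tup{c}),(N,\tup{b}))|$ for every $(T,\tup{c}) \in \Ext(\mathcal{C})$.

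Applying this with $(T,\tup{c}) := (M,\tup{a})$ (whose left-hand count is positive via the identity) and with $(T,\tup{c}) := (N,\tup{b})$ produces injective homomorphisms $f \colon (M,\tup{a}) \to (N,\tup{b})$ and $g \colon (N,\tup{b}) \to (M,\tup{a})$. Finiteness then forces $|M|=|N|$ and each of $f, g$ to be a bijection on elements; comparing cardinalities of each relation (each map injects tuples of $R$-facts from one side into the other) gives $|R^M|=|R^N|$, so $f$ restricts to a bijection $R^M \to R^N$, making $f^{-1}$ a homomorphism and yielding $(M,\tup{a}) \cong (N,\tup{b})$. The main obstacle is verifying that the factorization-based decomposition stays inside $\Ext(\mathcal{C})$, so that the triangular inversion never requires values outside the given index class; once this closure is established, both the inversion and the concluding sandwich argument are routine.
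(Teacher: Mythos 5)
The paper does not prove Theorem~\ref{thm:AKW-lovasz}; it imports it from \cite{atserias2021expressive}, so there is no in-paper proof to compare against. Your argument is correct and is essentially the argument from that reference: the surjective--injective factorization of a homomorphism through its image, the observation that $\Ext(\mathcal{C})$ is closed under taking such images (which is exactly the point of intersecting $\Inj(\mathcal{C})$ with $\Sur(\mathcal{C})$, and which you rightly flag as the crux), the unitriangular inversion recovering injective-homomorphism counts, and the concluding mutual-embeddability argument for finite structures. All the supporting claims check out: composition preserves full surjectivity, a fully surjective map between finite structures of equal cardinality is an isomorphism (so the diagonal entries are $1$ and the system is genuinely triangular), $\mathcal{C} \subseteq \Ext(\mathcal{C})$ guarantees the inversion reaches $(M,\tup{a})$ and $(N,\tup{b})$ themselves, and finiteness of $\tau$ and of the structures keeps every row a finite sum. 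No gaps.
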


\paragraph{Important Classes of Structures.}
We now define the classes of structures relevant to our results (examples of each can be found in Figure \ref{fig:lts-examples}). Let $M_a$ be a $\sigma$-LTS. Given states $m, n \in \dom(M)$, a \emph{$\sigma$-path of length $k$ from $m$ to $n$} is a sequence $\pi = \langle f_1,\hdots,f_k \rangle$ of binary facts such that $m \in \el(f_1)$, $n \in \el(f_k)$, and $\el(f_i) \cap \el(f_{i+1}) \neq \emptyset$ for each $i < k$. A $\sigma$-path is \emph{simple} if it contains no duplicate facts. A \emph{connected component} of $M_a$ is a maximal set $S \subseteq \dom(M)$ such that, for each distinct pair of states $m,n \in S$, there exists a $\sigma$-path $\pi$ from $m$ to $n$. We say that $M_a$ is \emph{connected} if $\dom(M)$ is a connected component of $M_a$, and we say that $M_a$ is \emph{acyclic} if there are no simple $\sigma$-paths from some $m \in \dom(M)$ to itself. A \emph{directed $\sigma$-path of length $k$} from $m$ to $n$ is a length-$k$ tuple $\langle (b_0,b_1), (b_1,b_2), \hdots, (b_{k-1},b_k) \rangle$ such that for each $j < k$, there is some $R \in \sigma$ such that $R^M(b_j,b_{j+1})$ holds. Note that all directed $\sigma$-paths can be seen as a special case of $\sigma$-paths.

A $\sigma$-LTS $M_a$ is \emph{point-generated} if, for each $m \in \dom(M)$, there's a directed $\sigma$-path from $a$ to $m$. If there is a unique directed $\sigma$-path from $a$ to each $m \in \dom(M)$, then $M_a$ is a \emph{$\sigma$-tree}. The \emph{depth} of a state $m$ in a point-generated $\sigma$-LTS $M_a$ is the length $\depth(m)$ of the shortest directed $\sigma$-path from $a$ to $m$; we set $\depth(a) = 0$. The \emph{depth} of a point-generated $\sigma$-LTS is the supremum of the depths of its elements. Given a point-generated $\sigma$-LTS $M_a$, we define $M^k_a$ to be the pointed substructure of $M_a$ containing all elements in $\dom(M)$ of depth at most $k$. Depth for connected structures is defined analogously via $\sigma$-paths. If $(M_j,a_j)$ is a $\sigma$-tree for each $j \in J$, where $J$ is some finite index set, and $(M,a)$ is a $\sigma$-LTS obtained by taking the disjoint union $M = \biguplus_{j \in J} M_j$ and setting $a = a_j$ for some $j \in J$, then $(M,a)$ is a \emph{$\sigma$-forest}.

\begin{definition}
\label{def:classes}
We use the following notation for these classes of structures.
\begin{enumerate}
\item $\KT^k_\sigma$ is the class of finite $\sigma$-trees of depth at most $k$.
\item $\AKT^k_\sigma$ is the class of finite connected, acyclic $\sigma$-LTSs of depth at most $k$.
\item $\FKT_\sigma$ is the class of finite $\sigma$-forests.
\item $\GK^k$ is the class of finite point-generated $\sigma$-LTSs of depth at most $k$.
\item $\C^k_\sigma$ is the class of finite connected $\sigma$-LTSs of depth at most $k$.
\item We set $\KT_\sigma := \bigcup_{k \in \omega} \KT^k_\sigma$; we define $\AKT_\sigma$, $\PG_\sigma$, and $\C_\sigma$ similarly.
\end{enumerate}
\end{definition}

\begin{figure}
\centering
\begin{subfigure}[b]{0.20\textwidth}
\centering
\begin{tikzpicture}
\node (1) at (0,0) {$\bullet$};
\node (2) at (1,0) {$\bullet$};
\node (3) at (0.5,1) {$a$};
\node (4) at (-0.4,-1) {$\bullet$};
\node (5) at (0.4,-1) {$\bullet$};
\node (6) at (1.4,-1) {$\bullet$};
\draw[->] (3) -- (1);
\draw[->] (3) -- (2);
\draw[->] (1) -- (4);
\draw[->] (1) -- (5);
\draw[->] (2) -- (6);
\end{tikzpicture}
\caption{A $\sigma$-tree}
\end{subfigure}
\hspace*{25pt}
\begin{subfigure}[b]{0.20\textwidth}
\centering
\begin{tikzpicture}
\node (1) at (0,0) {$\bullet$};
\node (2) at (1,0) {$\bullet$};
\node (3) at (0.5,1) {$a$};
\node (4) at (-0.4,-1) {$\bullet$};
\node (5) at (0.4,-1) {$\bullet$};
\node (6) at (1.4,-1) {$\bullet$};
\draw[->] (3) -- (1);
\draw[->] (2) -- (3);
\draw[->] (1) -- (4);
\draw[->] (5) -- (1);
\draw[->] (2) -- (6);
\end{tikzpicture}
\caption{A connected, acyclic $\sigma$-LTS}
\end{subfigure}
\hspace*{25pt}
\begin{subfigure}[b]{0.285\textwidth}
\centering
\begin{tikzpicture}
\node (1) at (0,0) {$\bullet$};
\node (2) at (1,0) {$\bullet$};
\node (3) at (0.5,1) {$a$};
\node (4) at (-0.4,-1) {$\bullet$};
\node (5) at (0.4,-1) {$\bullet$};
\node (6) at (1.4,-1) {$\bullet$};
\draw[->] (3) -- (1);
\draw[->] (3) -- (2);
\draw[->] (2) edge [loop above, out=-20, in=75, looseness = 5] (2);
\draw[->] (1) -- (4);
\draw[->] (1) -- (5);
\draw[->] (2) -- (6);
\draw[->] (2) -- (5);
\draw[->] (4) edge [loop above, out=-50, in=-130, looseness = 5] (4);
\end{tikzpicture}
\caption{A point-generated $\sigma$-LTS}
\end{subfigure}
    
\vspace{0.5cm}
\begin{subfigure}[b]{0.35\textwidth}
\centering
\begin{tikzpicture}
\node (1) at (0,0) {$\bullet$};
\node (2) at (1,0) {$a$};
\node (3) at (0.5,1) {$\bullet$};
\node (4) at (-0.4,-1) {$\bullet$};
\node (5) at (0.4,-1) {$\bullet$};
\node (6) at (1.4,-1) {$\bullet$};
\draw[->] (3) -- (1);
\draw[->] (2) -- (3);
\draw[->] (2) edge [loop above, out=-20, in=75, looseness = 5] (2);
\draw[->] (1) -- (4);
\draw[->] (5) -- (1);
\draw[->] (6) -- (2);
\draw[->] (2) -- (5);
\draw[->] (4) edge [out=130, in=170] (1);
\end{tikzpicture}
\caption{A connected $\sigma$-LTS}
\end{subfigure}
\hspace*{20pt}
\begin{subfigure}[b]{0.35\textwidth}
\centering
\begin{tikzpicture}
\node (1) at (0,0) {$\bullet$};
\node (2) at (1,0) {$\bullet$};
\node (3) at (0.5,1) {$a$};
\node (4) at (-0.4,-1) {$\bullet$};
\node (5) at (0.4,-1) {$\bullet$};
\node (6) at (1.4,-1) {$\bullet$};
\draw[->] (3) -- (1);
\draw[->] (3) -- (2);
\draw[->] (1) -- (4);
\draw[->] (1) -- (5);
\draw[->] (2) -- (6);
\node (7) at (2.6,1) {$\bullet$};
\node (8) at (2.6,0) {$\bullet$};
\node (9) at (2.2,-1) {$\bullet$};
\node (10) at (3,-1) {$\bullet$};
\draw[->] (7) -- (8);
\draw[->] (8) -- (9);
\draw[->] (8) -- (10);
\end{tikzpicture}
\caption{A $\sigma$-forest}
\end{subfigure}
\caption{Examples of $\sigma$-LTSs.}
\label{fig:lts-examples}
\end{figure}

\noindent The following class inclusions are clear from the definitions:
$$\KT_\sigma \subseteq \PG_\sigma \subseteq \C_\sigma, \quad \KT_\sigma \subseteq \AKT_\sigma \subseteq \C_\sigma, \quad \text{and} \quad \KT_\sigma \subseteq \FKT_\sigma.$$
The following two facts are easily verified (cf. Definition \ref{def:ext-class}).

\begin{fact}
\label{prop:tree-ext-equivalence}
$\KT^k_\sigma = \Ext(\KT^k_\sigma)$.
\end{fact}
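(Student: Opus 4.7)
The plan is to prove the two inclusions separately. The direction $\KT^k_\sigma \subseteq \Ext(\KT^k_\sigma)$ is immediate: for any $(T,a) \in \KT^k_\sigma$, the identity map $\textrm{id}: (T,a) \to (T,a)$ is simultaneously injective and fully surjective, placing $(T,a)$ in both $\Inj(\KT^k_\sigma)$ and $\Sur(\KT^k_\sigma)$.

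The reverse inclusion $\Ext(\KT^k_\sigma) \subseteq \KT^k_\sigma$ is where the content lies. I would fix an arbitrary $(N,b) \in \Ext(\KT^k_\sigma)$, which by Definition~\ref{def:ext-class} yields a tree $(T_1,a_1) \in \KT^k_\sigma$ with a fully surjective homomorphism $g: (T_1,a_1) \to (N,b)$ and a tree $(T_2,a_2) \in \KT^k_\sigma$ with an injective homomorphism $h: (N,b) \to (T_2,a_2)$. The goal is to show that $(N,b)$ is itself a $\sigma$-tree of depth at most $k$, which amounts to verifying three properties: (i) every state in $\dom(N)$ is reachable from $b$ by a directed $\sigma$-path, (ii) each such path has length at most $k$, and (iii) such a path is unique.

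Properties (i) and (ii) will follow from $g$. Given any $n \in \dom(N)$, I pick some preimage $m \in \dom(T_1)$ by surjectivity of $g$, take the unique directed $\sigma$-path of length at most $k$ from $a_1$ to $m$ in the tree $T_1$, and push it forward under the homomorphism $g$ to obtain a directed $\sigma$-path of the same length from $g(a_1)=b$ to $g(m)=n$ in $N$. Note that full surjectivity is not even needed here -- element-surjectivity suffices.

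Property (iii), uniqueness of paths, is the main point to verify and uses the injective homomorphism $h$. Suppose for contradiction that there are two distinct directed $\sigma$-paths $\pi_1,\pi_2$ from $b$ to some $n \in \dom(N)$. Applying $h$ componentwise to the vertices of each path yields directed $\sigma$-paths $h(\pi_1),h(\pi_2)$ from $a_2$ to $h(n)$ in $T_2$, since $h$ preserves every binary fact. The crucial step -- and the one I would verify most carefully -- is that injectivity of $h$ guarantees $h(\pi_1) \neq h(\pi_2)$ as sequences: if $\pi_1$ and $\pi_2$ differ in length the images trivially differ, and if they share a length but differ at some intermediate vertex, injectivity forces the image paths to differ at that position. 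Since $T_2$ is a tree, it admits at most one directed $\sigma$-path from $a_2$ to $h(n)$, giving the desired contradiction. Combining (i)--(iii) shows $(N,b) \in \KT^k_\sigma$.
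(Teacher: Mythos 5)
Your proof is correct. The paper does not actually supply an argument here---it records the statement as a fact that is ``easily verified''---so there is nothing to compare against beyond the intended routine check, and your decomposition is exactly the natural one: the (element-)surjective homomorphism out of a tree gives reachability of every state from $b$ together with the depth bound, and the injective homomorphism into a tree gives uniqueness of directed paths, with injectivity correctly used to see that distinct paths have distinct images. The only detail you omit is that $\KT^k_\sigma$ consists of \emph{finite} trees, so you should also note that $(N,b)$ is finite; this is immediate from the existence of an injective homomorphism into the finite structure $(T_2,a_2)$.
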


\begin{fact}
\label{prop:pg-ext-equivalence}
$\PG^k_\sigma = \Ext(\PG^k_\sigma)$.
\end{fact}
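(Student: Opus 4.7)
The plan is to verify the two set inclusions separately. The forward inclusion $\PG^k_\sigma \subseteq \Ext(\PG^k_\sigma)$ is immediate: for any $(M,a) \in \PG^k_\sigma$, the identity map $\mathrm{id}: (M,a) \to (M,a)$ is simultaneously an injective homomorphism and a fully-surjective homomorphism, witnessing $(M,a) \in \Inj(\PG^k_\sigma) \cap \Sur(\PG^k_\sigma)$.

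For the reverse inclusion $\Ext(\PG^k_\sigma) \subseteq \PG^k_\sigma$, I would in fact prove the stronger statement $\Sur(\PG^k_\sigma) \subseteq \PG^k_\sigma$, since the injective component of $\Ext$ plays no role. Fix $(N,b) \in \Sur(\PG^k_\sigma)$, with a fully-surjective homomorphism $g: (M,a) \to (N,b)$ for some $(M,a) \in \PG^k_\sigma$. First, $N$ is finite since $g$ is surjective and $M$ is finite. Next, to see $(N,b)$ is point-generated, pick any $n \in \dom(N)$; by surjectivity of $g$, choose $m \in \dom(M)$ with $g(m) = n$, take a directed $\sigma$-path of length at most $k$ from $a$ to $m$ in $M$ (which exists because $(M,a) \in \PG^k_\sigma$), and push it forward under $g$. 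Since $g$ is a homomorphism and $g(a) = b$, this produces a directed $\sigma$-path from $b$ to $n$ in $N$ of length at most $k$. This simultaneously establishes point-generatedness and the depth bound.

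There is no real obstacle: the argument is a straightforward unpacking of the definitions of $\Sur$, point-generated, and depth. The only point worth noting is that ``fully surjective'' is slightly stronger than what we need here — plain surjectivity on elements is already enough because directed paths are built from facts $R^M(m, m')$, and any homomorphism automatically transports such facts. The same remark applies to Fact~\ref{prop:tree-ext-equivalence}, except that there one must additionally verify that surjective homomorphic images of trees are trees, i.e., uniqueness of directed paths from the root is preserved — a separate, easy check that is not needed in the present point-generated setting.
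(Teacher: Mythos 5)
Your proof is correct and is exactly the routine verification the paper intends (the paper states this fact without proof, as ``easily verified''): the identity map gives the forward inclusion, and for the reverse inclusion surjectivity alone suffices, since a homomorphism carries any directed $\sigma$-path from $a$ to $m$ to a directed $\sigma$-path of the same length from $b$ to $g(m)$, which yields finiteness, point-generatedness, and the depth bound simultaneously.

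One caution about your closing aside: for Fact~\ref{prop:tree-ext-equivalence} it is \emph{not} true that surjective homomorphic images of $\sigma$-trees are $\sigma$-trees (identifying two leaves lying under distinct parents produces a node reachable from the root by two distinct directed $\sigma$-paths), so $\Sur(\KT^k_\sigma) \not\subseteq \KT^k_\sigma$ and the $\Inj$ half of $\Ext$ is genuinely needed there to exclude such configurations. This does not affect the present fact, where the $\Sur$ half alone does the job.
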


\noindent The next lemma, used in Sections \ref{sec:graded-ml} and \ref{sec:hybrid-logic}, is proven by constructing an ascending chain of local isomorphisms whose union is a full isomorphism.

\begin{restatable}{lemma}{isoext}
\label{lemma:iso-ext}
If $M_a$ and $N_b$ are point-generated $\sigma$-LTSs such that $M^k_a$ and $N^k_b$ are finite and isomorphic for all $k \in \mathbb{N}$, then $M_a \cong N_b$.
\end{restatable}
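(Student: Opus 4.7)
The plan is to build the isomorphism $f: M_a \to N_b$ as the union of an $\omega$-chain of isomorphisms $f_k: M^k_a \cong N^k_b$ extracted via K\"onig's lemma. For each $k \in \mathbb{N}$, let $\Phi_k$ denote the set of isomorphisms from $M^k_a$ to $N^k_b$; by hypothesis $\Phi_k$ is nonempty, and since $M^k_a$ and $N^k_b$ are finite, each $\Phi_k$ is finite.

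The first key step is a depth-preservation lemma: every $g \in \Phi_k$ sends each $m \in \dom(M^k_a)$ to an element of $N^k_b$ of the same depth. Indeed, a shortest directed $\sigma$-path $a = m_0, m_1, \ldots, m_d = m$ witnessing $\depth(m) = d$ stays inside $M^k_a$ since every intermediate state has depth at most $d \leq k$; hence $g$ maps it to a directed $\sigma$-path $b, g(m_1), \ldots, g(m)$ of length $d$ in $N^k_b$, and applying the same argument to $g^{-1}$ gives equality. Consequently, for any $f \in \Phi_{k+1}$, the restriction of $f$ to $M^k_a$ is a bijection onto $N^k_b$ that lies in $\Phi_k$, yielding a well-defined restriction map $r_k: \Phi_{k+1} \to \Phi_k$.

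Next, I would organize the $\Phi_k$ into a tree $T$ whose level-$k$ vertices are the elements of $\Phi_k$, with parent-of given by $r_k$. Each level is finite, so $T$ is finitely branching, and each level is nonempty, so $T$ is infinite; K\"onig's lemma then delivers an infinite branch $(f_k)_{k \in \mathbb{N}}$ satisfying that $f_{k+1}$ restricted to $M^k_a$ equals $f_k$. Set $f := \bigcup_k f_k$. Since $M_a$ is point-generated, every $m \in \dom(M)$ has finite depth and hence lies in some $M^k_a$, where $f$ coincides with $f_k$; the coherence of the branch makes $f$ well-defined and total, and the dual argument for $N_b$ yields surjectivity. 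Injectivity, $f(a) = b$, and back-and-forth preservation of every proposition letter and action relation are immediate from the fact that each $f_k$ is already an isomorphism of pointed structures. The main obstacle is really the depth-preservation step: without it, restricting an iso of $M^{k+1}_a$ to $M^k_a$ need not land inside $N^k_b$ and the K\"onig tree would be ill-defined; this is precisely where point-generatedness (which makes depth structurally definable from the distinguished element) is essential, and it is also why the lemma is stated in the point-generated setting rather than for arbitrary LTSs.
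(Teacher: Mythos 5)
Your proposal is correct and matches the paper's approach: the paper likewise builds an ascending chain of local isomorphisms $M^k_a \cong N^k_b$ (coherent under restriction) and takes its union, with the depth-preservation observation justifying that restrictions land in $\Phi_k$ and K\H{o}nig's lemma supplying the coherent branch from the finite, nonempty levels.
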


For the remainder of the paper, we fix a modal signature $\sigma = \Prop \cup \Act$, where $\Prop$ is a finite set of (unary) proposition letters and $\Act = \{ R_i \mid i \in I \}$ is a set of (binary) \emph{actions} (or \emph{transitions}) indexed by some finite set $I$.

\section{Positive-Existential Modal Logic}
\label{sec:pos-ext-ml}
We begin with a characterization of equivalence with respect to positive-existential modal logic (notation: $\PML$) by restricting the left homomorphism vector over the Boolean semiring to the class of $\sigma$-trees. $\PML$ is the fragment of $\BML$ lacking both negation and the $\square$ modality, and we write $\PMLk$ for the collection of $\PML$ formulas of modal depth at most $k$. The key observation leading to this theorem is the following proposition.

\begin{proposition}
\label{prop:trees-pml-equiv}
A $\sigma$-LTS $T_c$ is in $\KT^k_\sigma$ if and only if $T_c \cong \inst(ST_x(\varphi))$ for some disjunction-free $\varphi \in \PMLk$, where $ST_x$ denotes the standard translation.
\end{proposition}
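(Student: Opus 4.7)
I would prove both directions by recursion on the tree structure and by structural induction on the formula, respectively, leveraging the fact that the standard translation of a disjunction-free $\PML$ formula is a conjunctive query whose canonical instance is inherently tree-shaped.

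For the forward direction ($\Rightarrow$), let $T_c$ be a finite $\sigma$-tree of depth at most $k$. I would define, for each state $m$ of $T$, a disjunction-free formula $\varphi_{T,m}$ bottom-up by
$$\varphi_{T,m} := \bigwedge_{p \in \marking^T_\sigma(m)} p \;\land\; \bigwedge_{i \in I} \bigwedge_{m' \in \Succ^T_{R_i}[m]} \Diamond_i\,\varphi_{T,m'}.$$
Finiteness of $T$ ensures that $\varphi_{T,c}$ is a well-defined $\PMLk$ formula whose modal depth equals $\depth(T_c) \leq k$. The desired isomorphism $T_c \cong \inst(ST_x(\varphi_{T,c}))$ is then defined recursively, sending $c$ to $x$ and each successor $m' \in \Succ^T_{R_i}[m]$ to the fresh existential variable introduced for the corresponding $\Diamond_i$-conjunct in $\varphi_{T,m}$. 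The uniqueness of directed $\sigma$-paths in a $\sigma$-tree guarantees that this map is well-defined and bijective, and the construction preserves all $\sigma$-facts in both directions by Fact~\ref{fact:magic-lemma}.

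For the backward direction ($\Leftarrow$), I would proceed by structural induction on disjunction-free $\varphi \in \PMLk$, showing that $\inst(ST_x(\varphi))$ is a $\sigma$-tree of depth at most the modal depth of $\varphi$. The base case $\varphi = p$ yields a one-element structure $x$ with $p \in \marking(x)$, a tree of depth $0$. For $\varphi = \psi_1 \land \psi_2$, $\inst(ST_x(\varphi))$ is obtained by identifying the distinguished elements of $\inst(ST_x(\psi_1))$ and $\inst(ST_x(\psi_2))$ at $x$ while keeping their other variables disjoint, yielding a tree whose depth is the maximum of the depths of the two instances. For $\varphi = \Diamond_i \psi$, since $ST_x(\varphi) = \exists y(R_i(x,y) \land ST_y(\psi))$, the canonical instance consists of a fresh root $x$ with an $R_i$-edge to the root $y$ of $\inst(ST_y(\psi))$, which by induction is a tree of depth at most $k-1$; the result is a tree of depth at most $k$.

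The main technical subtlety lies in careful variable management when pulling existential quantifiers to the front to form the canonical instance: distinct $\Diamond$-subformulas must contribute disjoint fresh variables so that the resulting structure is genuinely a tree (i.e., has a unique directed $\sigma$-path from the distinguished element to each other element) rather than a DAG. A minor edge case arises when $\marking^T_\sigma(m) = \emptyset$ for a childless $m$: the associated conjunction is empty, which requires either the convention that an empty conjunction denotes $\top$ (a single-variable CQ with no atoms, whose canonical instance is a single unmarked element) or a slight extension of the grammar; under either reading the argument goes through unchanged.
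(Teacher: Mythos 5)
Your proof is correct and follows essentially the same route as the paper's: the forward direction builds the formula recursively from the marking at each node together with one $\Diamond_i$-conjunct per successor, and the backward direction is the same structural induction with the same three cases (atom; conjunction identifying the two roots; diamond adding a fresh root). Your explicit attention to variable disjointness when forming the canonical instance and to the empty-conjunction edge case is a minor refinement that the paper's proof glosses over.
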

\begin{proof}
For the forward direction, we show by induction on the depth of $\sigma$-trees $T_c$ that $T_c$ is isomorphic to the canonical instance of some $\PMLk$ formula. For each element $s \in T$, define $\textrm{mark}^{+,T}_s:= \bigwedge_{p \in \marking^T_\sigma(s)} p$. For the base case, if $\depth(T_c) = 0$, then $\dom(T) = \{ c \}$. Then clearly $\inst(ST_x(\textrm{mark}^{+,T}_c)) \cong T_c$. Now suppose that every $\sigma$-tree of depth $j < k$ is isomorphic to the canonical instance of some $\PMLk$ formula, and let $T_c$ be an arbitrary $\sigma$-tree of depth $k$. Let $\Succ^T_\sigma[c] = \{ s_1,\hdots,s_n \}$, and let $T^1_{s_1},\hdots,T^n_{s_n}$ denote the corresponding rooted subtrees of $T_c$. By the inductive hypothesis, there exist formulas $\varphi_1,\hdots,\varphi_n$ such that $\inst(ST_x(\varphi_i)) \cong T^i_{s_i}$ for each $i \leq n$. For each $i \leq n$, let $j_i$ be the unique index in $I$ such that $R^T_{j_i}(c,s_i)$ holds. Then $\inst(ST_x(\textrm{mark}^{+,T}_s \land \bigwedge_{i \leq n} \Diamond_{j_i} \varphi_i))$ is easily seen to be isomorphic to $T_c$. 

For the reverse direction, we show by induction on the complexity of $\PML$ formulas $\varphi$ that $\inst(ST_x(\varphi))$ is a $\sigma$-tree. For the base case, if $\varphi = p$ for some $p \in \Prop$, then $\inst(ST_x(\varphi))$ is a single state at which the proposition letter $p$ is true, which is a $\sigma$-tree. For the inductive step, either $ST_x(\varphi) = ST_x(\psi_1) \land ST_x(\psi_2)$ for some formulas $\psi_1,\psi_2$, or $ST_x(\varphi) = \exists y (R_i(x,y) \land \psi)$ for some formula $\psi$. In the first case, $\inst(ST_x(\varphi))$ is the $\sigma$-tree obtained by equating the roots of $\inst(ST_x(\psi_1))$ and $\inst(ST_x(\psi_2))$. In the second case, $\inst(ST_x(\varphi))$ is the $\sigma$-tree obtained by adding a new root to $\inst(ST_x(\psi))$, with the old root as its unique $R_i$-successor.
\end{proof}

Note that if $T_c$ is a finite $\sigma$-tree and $M_a$ is an image-finite $\sigma$-LTS, then there are only finitely many homomorphisms from $T_c$ to $M_a$. It follows that if $M_a$ is image-finite, then $\homS(\KT^k,M_a)$ is well-defined for all semirings $\mathcal{S}$.

\begin{theorem}
\label{thm:sim-eq}
If $M_a$ and $N_b$ are image-finite $\sigma$-LTSs, then $M_a \PMLkequiv N_b$ if and only if $\homB(\KT^k_\sigma,M_a) = \homB(\KT^k_\sigma,N_b)$.
\end{theorem}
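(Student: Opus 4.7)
The plan is to leverage Proposition \ref{prop:trees-pml-equiv} and Fact \ref{fact:magic-lemma} to translate between homomorphisms out of trees in $\KT^k_\sigma$ and satisfaction of disjunction-free $\PMLk$-formulas. Over the Boolean semiring we have $\homB(T_c,M_a)=1$ precisely when at least one homomorphism $T_c\to M_a$ exists; by Fact \ref{fact:magic-lemma} applied to the conjunctive query $ST_x(\varphi_T)$, this is equivalent to $M_a\models\varphi_T$ for any disjunction-free $\varphi_T\in\PMLk$ satisfying $\inst(ST_x(\varphi_T))\cong T_c$. This is the dictionary the rest of the proof will use.

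For the forward direction, fix $T_c\in\KT^k_\sigma$. By Proposition \ref{prop:trees-pml-equiv}, I would pick a disjunction-free $\varphi_T\in\PMLk$ with $T_c\cong\inst(ST_x(\varphi_T))$. Since $M_a\PMLkequiv N_b$, we have $M_a\models\varphi_T$ iff $N_b\models\varphi_T$, and the dictionary immediately yields $\homB(T_c,M_a)=\homB(T_c,N_b)$. Image-finiteness is invoked here (as per the remark preceding the theorem) only to ensure that both Boolean counts are well defined.

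For the reverse direction, fix $\varphi\in\PMLk$. The first step is to rewrite $\varphi$ into disjunctive normal form $\varphi\equiv\bigvee_{i\leq n}\psi_i$, where each $\psi_i$ is a disjunction-free $\PMLk$-formula. This uses the Kripke-valid equivalences $(\alpha\lor\beta)\land\gamma\equiv(\alpha\land\gamma)\lor(\beta\land\gamma)$ and $\Diamond_i(\alpha\lor\beta)\equiv\Diamond_i\alpha\lor\Diamond_i\beta$, both of which stay inside $\PML$ and both of which preserve modal depth (in particular each new outer $\Diamond_i$ contributes exactly one level, matching the single level in the original). By Proposition \ref{prop:trees-pml-equiv}, each $\inst(ST_x(\psi_i))$ is a $\sigma$-tree in $\KT^k_\sigma$. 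Hypothesis and dictionary together give $M_a\models\psi_i$ iff $N_b\models\psi_i$ for every $i$, hence $M_a\models\varphi$ iff $N_b\models\varphi$, as required.

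The main obstacle is the DNF step: one must confirm that disjunction can be pulled outside of $\land$ and $\Diamond$ without leaving $\PMLk$ and without inflating modal depth, which is what the two distributive laws above accomplish. Everything else is a direct application of the Boolean-to-satisfaction correspondence that Proposition \ref{prop:trees-pml-equiv} and Fact \ref{fact:magic-lemma} together provide.
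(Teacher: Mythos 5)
Your proof is correct and follows essentially the same route as the paper: use Proposition \ref{prop:trees-pml-equiv} together with Fact \ref{fact:magic-lemma} to translate between Boolean homomorphism counts from trees in $\KT^k_\sigma$ and satisfaction of disjunction-free $\PMLk$ formulas. Your explicit disjunctive-normal-form step in the reverse direction (distributing $\lor$ over $\land$ and over $\Diamond_i$, both validity- and depth-preserving) correctly fills in a detail the paper leaves implicit under ``the other direction is symmetric.''
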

\begin{proof}
For the left-to-right direction, suppose that $M_a \PMLkequiv N_b$, and let $T_c$ be an arbitrary finite $\sigma$-tree of depth at most $k$. By Proposition \ref{prop:trees-pml-equiv}, let $\varphi$ be a disjunction-free $\PMLk$ formula such that $T_c \cong \inst(ST_x(\varphi))$. Then
\begin{align*}
\homB(T_c,M_a) = 1 &\iff M_a \models \varphi \hspace*{40pt} &\text{(Fact \ref{fact:magic-lemma})} \\
&\iff N_b \models \varphi &\text{(Assumption)} \\
&\iff \homB(T_c,N_b) = 1 &\text{(Fact \ref{fact:magic-lemma})}.
\end{align*}
Hence $\homB(\KT^k_\sigma,M_a) = \homB(\KT^k_\sigma,N_b)$. The other direction is symmetric.
\end{proof}

\begin{corollary}
If $M_a$ and $N_b$ are image-finite $\sigma$-LTSs, then $M_a \PMLequiv N_b$ if and only if $\homB(\KT_\sigma,M_a) = \homB(\KT_\sigma,N_b)$.
\end{corollary}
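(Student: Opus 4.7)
The plan is to reduce the unbounded-depth corollary directly to the bounded-depth Theorem \ref{thm:sim-eq} by decomposing $\PML$ into its modal-depth-bounded fragments $\PMLk$ and decomposing the class $\KT_\sigma$ into its depth-bounded subclasses $\KT^k_\sigma$, using the fact that $\KT_\sigma = \bigcup_{k \in \omega} \KT^k_\sigma$ (part (6) of Definition \ref{def:classes}).

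First I would observe that every formula $\varphi \in \PML$ has some finite modal depth $k = k(\varphi) \in \omega$, so that $\varphi \in \PMLk$. It follows immediately that $M_a \PMLequiv N_b$ holds if and only if $M_a \PMLkequiv N_b$ holds for every $k \in \omega$. Next I would observe, from Definition \ref{def:hom-count-vector}, that the left homomorphism vector indexed by $\KT_\sigma$ is just the concatenation (up to reindexing) of the vectors indexed by the $\KT^k_\sigma$, so that $\homB(\KT_\sigma,M_a) = \homB(\KT_\sigma,N_b)$ if and only if $\homB(\KT^k_\sigma,M_a) = \homB(\KT^k_\sigma,N_b)$ for every $k \in \omega$. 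Since $M_a$ and $N_b$ are image-finite, each $\homB(\KT^k_\sigma,\cdot)$ is well-defined, as noted just before Theorem \ref{thm:sim-eq}.

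Combining these two observations with Theorem \ref{thm:sim-eq} applied at every level $k$ gives the chain
\[
M_a \PMLequiv N_b \iff \forall k\, (M_a \PMLkequiv N_b) \iff \forall k\, (\homB(\KT^k_\sigma,M_a) = \homB(\KT^k_\sigma,N_b)) \iff \homB(\KT_\sigma,M_a) = \homB(\KT_\sigma,N_b),
\]
which is the desired equivalence. Since the corollary is purely a bookkeeping consequence of Theorem \ref{thm:sim-eq} and the definitions, there is no real obstacle; the only small thing to be careful about is ensuring that the finiteness hypothesis needed for $\homB$ to be defined is preserved at every depth, which follows from image-finiteness of $M_a$ and $N_b$ together with the finiteness of each $T_c \in \KT^k_\sigma$.
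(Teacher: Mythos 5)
Your proof is correct and matches the paper's (implicit) argument: the corollary is stated without proof precisely because it follows by the depth-stratification you describe, namely that $M_a \PMLequiv N_b$ iff $M_a \PMLkequiv N_b$ for all $k$, and $\homB(\KT_\sigma,M_a) = \homB(\KT_\sigma,N_b)$ iff $\homB(\KT^k_\sigma,M_a) = \homB(\KT^k_\sigma,N_b)$ for all $k$, with Theorem \ref{thm:sim-eq} applied at each level. Your attention to well-definedness of the Boolean counts under image-finiteness is also consistent with the remark the paper makes just before the theorem.
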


\paragraph*{$\PML$ with backward and global modalities.} 
We now state two results for $\PML$ extended with backward and global modalities. The proofs are similar to, yet simpler than, those for $\GML$ with the backward and global modalities given in Section 4, and so we omit them.

\begin{definition}
\label{def:backward-modalities}
Given a pointed $\sigma$-LTS $M_a$ and a formula $\varphi$, we define
\begin{center}
\begin{tabular}{l l l}
$M,a \models \dback_i^{\geq k} \varphi$ &if there exist at least $k$ many elements \\
&$b \in \Pred^M_{R_i}[a]$ such that $M,b \models \varphi$.
\end{tabular}
\end{center}
\end{definition}

We call $\dback^{\geq k}_i$ a \emph{backward modality} for the action $R_i$, where $i \in I$. We write $\PMLb$ for the extension of $\PML$ with the modalities $\dback_i^{\geq 1}$, and $\PMLbk$ for the fragment of $\PMLb$ containing formulas of modal depth at most $k$.

\begin{theorem}
\label{thm:back-sim}
If $M_a$ and $N_b$ are degree-finite $\sigma$-LTSs, then $M_a \PMLbkequiv^\sigma N_b$ if and only if $\homB(\AKT^k_\sigma,M_a) = \homB(\AKT^k_\sigma,N_b)$.
\end{theorem}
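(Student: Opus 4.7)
The plan is to mimic the proof of Theorem \ref{thm:sim-eq}, replacing the role of Proposition \ref{prop:trees-pml-equiv} with an analogous characterization of $\AKT^k_\sigma$. Specifically, I would first establish the following.

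\begin{quote}
\emph{Claim.} A $\sigma$-LTS $T_c$ is in $\AKT^k_\sigma$ if and only if $T_c \cong \inst(ST_x(\varphi))$ for some disjunction-free $\varphi \in \PMLbk$, where $ST_x$ is extended by setting $ST_x(\dback_i^{\geq 1} \varphi) := \exists y (R_i(y,x) \land ST_y(\varphi))$.
\end{quote}

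For the forward direction, I would induct on the depth of $T_c$. The base case ($\depth(T_c) = 0$) is handled exactly as in Proposition \ref{prop:trees-pml-equiv}, taking $\markform^{+,T}_c$. For the inductive step, let $c$ have neighbors $s_1, \ldots, s_n$ in $T$ (successors or predecessors under some $R_i$). Since $T_c$ is connected and acyclic, deleting $c$ partitions $T \setminus \{c\}$ into disjoint connected acyclic LTSs $T^1_{s_1}, \ldots, T^n_{s_n}$, each of depth strictly less than $k$. Each $T^i_{s_i}$ is handled by the inductive hypothesis, yielding a disjunction-free $\PMLbk$ formula $\varphi_i$ whose canonical instance is isomorphic to $T^i_{s_i}$. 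For each $i$, one then prefixes $\varphi_i$ by $\Diamond_{j_i}$ if $R^T_{j_i}(c, s_i)$ and by $\dback_{j_i}^{\geq 1}$ if $R^T_{j_i}(s_i, c)$, conjoining the results with $\markform^{+,T}_c$. The reverse direction also mirrors Proposition \ref{prop:trees-pml-equiv}: induct on formula complexity, noting that both $\Diamond_i \psi$ and $\dback_i^{\geq 1} \psi$ extend $\inst(ST_x(\psi))$ by attaching a single new fresh root via a single edge (forward or backward), which preserves connectedness and acyclicity; conjunction fuses canonical instances at their roots.

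With the claim in hand, the proof of the theorem itself is a straightforward reenactment of Theorem \ref{thm:sim-eq}. Assuming $M_a \PMLbkequiv^\sigma N_b$, for any $T_c \in \AKT^k_\sigma$ pick a disjunction-free $\varphi \in \PMLbk$ with $T_c \cong \inst(ST_x(\varphi))$; then Fact \ref{fact:magic-lemma} gives $\homB(T_c, M_a) = 1 \iff M_a \models \varphi \iff N_b \models \varphi \iff \homB(T_c, N_b) = 1$. The converse is obtained by observing that every disjunction-free $\PMLbk$ formula arises (up to isomorphism of its canonical instance) from some $T_c \in \AKT^k_\sigma$, and that arbitrary $\PMLbk$ formulas can be reduced to disjunction-free ones via the usual case analysis on $\psi_1 \lor \psi_2$ (both sides are equisatisfied, hence the two profile equalities transfer to the disjunction). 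Here degree-finiteness is exactly what is needed to guarantee that $\homB(T_c, M_a)$ is well-defined: a homomorphism from a finite connected acyclic LTS is determined by propagating from the root through finitely many successor and predecessor choices, so degree-finiteness ensures a finite total.

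The main obstacle is the forward direction of the claim: one must be careful that deleting $c$ from a connected acyclic $T_c$ indeed produces vertex-disjoint connected acyclic components, each rooted at a neighbor of $c$, and that the inductive reassembly using $\Diamond$ and $\dback$ does not accidentally reintroduce a cycle or collapse two components. Acyclicity of $T_c$ is exactly what prevents two neighbors of $c$ from lying in the same component after the deletion, so the argument goes through, but this is the one spot where the geometry of $\AKT^k_\sigma$ (as opposed to $\KT^k_\sigma$) genuinely matters. Everything else is essentially a bookkeeping variant of the tree case already handled.
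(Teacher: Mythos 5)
Your argument is correct, but it follows a different route from the one the paper intends. The paper omits the proof of Theorem \ref{thm:back-sim} and states that it parallels the $\GMLb$ case: one passes to the expanded signature $\sigma_B$, replaces $M_a$ and $N_b$ by their backward expansions $M^B_a$ and $N^B_b$ (Definition \ref{def:backward-expansion}), uses the mutually inverse operations $(\cdot)^\downarrow$ and $\flip$ to set up a depth-preserving bijection between $\AKT^k_\sigma$ and $\KT^k_{\sigma_B}$ under which Boolean homomorphism counts are preserved (the analogue of Lemma \ref{lemma:gml-back-count-equiv}), shows that $M_a \PMLbkequiv^\sigma N_b$ iff $M^B_a \PMLkequiv^{\sigma_B} N^B_b$ via the translation that replaces $\dback_i$ by the forward modality for $B_i$ (the analogue of Lemma \ref{lemma:gml-back-logic-equiv}), and then invokes Theorem \ref{thm:sim-eq} over $\sigma_B$ as a black box. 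You instead prove the result directly by generalizing Proposition \ref{prop:trees-pml-equiv}: connected acyclic $\sigma$-LTSs of depth at most $k$ are exactly the canonical instances of disjunction-free $\PMLbk$ formulas under the extended standard translation, after which the argument of Theorem \ref{thm:sim-eq} replays verbatim. Both work. Your key combinatorial points are sound: acyclicity forces each neighbor of $c$ to be joined to $c$ by a single fact and forces the components of $T \setminus \{c\}$ to be pairwise disconnected, and each such component has depth at most $k-1$ from its attaching vertex because the (unique) simple path from $c$ to any element passes through exactly one neighbor. The paper's reduction has the advantage of modularity (the same $(\cdot)^\downarrow$/$\flip$ machinery serves $\PMLb$, $\GMLb$, and, mutatis mutandis, the global-modality variants), whereas your direct argument is self-contained, avoids the signature expansion entirely, and makes explicit where the geometry of $\AKT^k_\sigma$ enters; the only cost is that the syntactic characterization of the class must be reproved rather than inherited from the tree case.
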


\begin{definition}
\label{def:global-modality}
Given a pointed $\sigma$-LTS $M_a$ and a formula $\varphi$, we define
\begin{center}
\begin{tabular}{l l l}
$M_a \models \glob \varphi$ &if there exist at least $k$ many elements \\
&$b \in M$ such that $M_b \models \varphi$.
\end{tabular}
\end{center}
\end{definition}

\noindent We refer to $\glob$ as a \emph{global modality}. Let $\PMLg$ denote the extension of $\PML$ with the global modality for $k=1$.

\begin{theorem}
\label{thm:global-sim}
If $M_a$ and $N_b$ are finite $\sigma$-LTSs, then $M_a \PMLgequiv^\sigma N_b$ if and only if $\homB(\FKT_\sigma,M_a) = \homB(\FKT_\sigma,N_b)$.
\end{theorem}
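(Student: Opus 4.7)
My plan is to mirror the proof of Theorem \ref{thm:sim-eq}, with Proposition \ref{prop:trees-pml-equiv} replaced by a forest analog. Specifically, I would first prove that a finite $\sigma$-LTS $F_c$ is a $\sigma$-forest if and only if $F_c \cong \inst(ST_x(\varphi))$ for some disjunction-free $\varphi \in \PMLg$, and then derive the theorem via Fact \ref{fact:magic-lemma} exactly as in the tree case (converting arbitrary $\PMLg$ formulas to disjunctive normal form when needed, which is possible since $\PMLg$ has no negation).

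\textbf{The forest-to-formula direction.} Let $F_c$ be a finite $\sigma$-forest with pointed component $T^0_c$ and remaining trees $T^1_{c_1}, \ldots, T^n_{c_n}$. By Proposition \ref{prop:trees-pml-equiv}, each $T^i_{c_i}$ is isomorphic to $\inst(ST_x(\varphi_i))$ for some disjunction-free $\varphi_i \in \PML$. Setting $\varphi := \varphi_0 \wedge \bigwedge_{1 \leq i \leq n} \textrm{E}\varphi_i$ (with the bound variables in each $\textrm{E}\varphi_i$ suitably renamed), I would then verify that $\inst(ST_x(\varphi)) \cong F_c$: the conjunct $\varphi_0$ contributes the pointed tree rooted at $x$, while each $\textrm{E}\varphi_i$ translates to an existential quantifier over a fresh variable $y_i$ whose subformula $ST_{y_i}(\varphi_i)$ contributes a disconnected copy of $T^i_{c_i}$.

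\textbf{The formula-to-forest direction.} I would induct on disjunction-free $\varphi \in \PMLg$. The atomic, conjunction, and $\Diamond_j$ cases go through as in Proposition \ref{prop:trees-pml-equiv}; note in particular that gluing two forests at their distinguished roots (in the conjunction case) merges their two pointed trees into a single tree while keeping the remaining components disjoint, yielding a forest. The new case $\varphi = \textrm{E}\psi$ is the crux: since the standard translation $\exists y\, ST_y(\psi)$ posits no atomic fact jointly involving $x$ and $y$, the canonical instance is the disjoint union of the singleton tree $\{x\}$ with $\inst(ST_y(\psi))$, and the latter is a forest by the induction hypothesis.

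\textbf{Main obstacle.} The main subtlety lies in the $\textrm{E}$ case above, specifically in recognizing that global-modality subformulas correspond precisely to disconnected components of the canonical instance, so that iteration of $\textrm{E}$ suffices to realize every finite forest. The finiteness assumption on $M_a$ and $N_b$ (rather than merely image-finiteness, as in Theorem \ref{thm:sim-eq}) is also essential here: a homomorphism from a finite forest can place non-pointed tree components anywhere in the target, so image-finiteness alone does not ensure that $\homB(\FKT_\sigma, \cdot)$ is well-defined.
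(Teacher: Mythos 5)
Your proposal is correct, but it follows a genuinely different route from the one the paper intends. The paper omits the proof of Theorem \ref{thm:global-sim} and points to the $\GMLg$ argument in Section \ref{sec:graded-ml} as its template: pass to the expanded signature $\sigma_G$, use the global expansion $M_a \mapsto M^G_a$ and the correspondence between $\sigma$-forests and their $R_G$-connections (which are $\sigma_G$-trees) to transfer both homomorphism counts and logical equivalence, and then invoke the already-proved tree theorem over $\sigma_G$ (the analogues of Lemmas \ref{lemma:glob-equiv} and \ref{prop:glob-exp-lemma}). You instead argue directly, extending Proposition \ref{prop:trees-pml-equiv} to a characterization of finite $\sigma$-forests as exactly the canonical instances of disjunction-free $\PMLg$ formulas, and then running the Fact \ref{fact:magic-lemma} chain verbatim as in Theorem \ref{thm:sim-eq}. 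Your handling of the delicate points is sound: $ST_x(\textrm{E}\psi) = \exists y\, ST_y(\psi)$ contributes a disconnected component, with $x$ retained as an isolated distinguished node even though it occurs in no atom; $\textrm{E}$ distributes over disjunction, so the reduction of arbitrary $\PMLg$ formulas to disjunctions of disjunction-free ones still goes through; and finiteness of $M_a$ (rather than image-finiteness) is exactly what keeps $\Hom(F_c,M_a)$ finite when $F_c$ has non-pointed components, matching the paper's own remark to this effect. The trade-off is that your argument is self-contained and somewhat more elementary, requiring a fresh structural lemma about the class $\FKT_\sigma$, whereas the paper's signature-expansion reduction is uniform --- the same machinery disposes of the backward and global modalities for both $\PML$ and $\GML$ without re-characterizing the relevant class of instances for each language.
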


Note that Theorems \ref{thm:sim-eq}, \ref{thm:back-sim}, and \ref{thm:global-sim} are stated for image-finite, degree-finite, and finite LTSs, respectively, \emph{only} due to the fact that our notion of counting requires a finite number of homomorphisms in order to be well-defined. However, this is an artificial constraint: if we were to treat the Boolean homomorphism count only as an indicator that $\Hom(T_c,M_a)$ is non-empty, then the Boolean left profiles with respect to $\KT_\sigma$, $\AKT_\sigma$, and $\FKT_\sigma$ would be well-defined for, and hence all of these results would apply to, \emph{arbitrary} $\sigma$-LTSs.

\section{Graded Modal Logic}
\label{sec:graded-ml}
We now turn to graded modal logic (notation: $\GML$), which is the extension of the basic modal language with \emph{graded modalities} $\Diamond_i^{\geq k}$ for each $k \in \mathbb{Z}^+$ and $i \in I$, where $\Diamond_i^{\geq k} \varphi$ asserts that there are at least $k$ many $R_i$-successors of the current state at which $\varphi$ is true \cite{fine1972in,goble1970grades}. Recall the notion of a tree-unraveling.

\begin{definition}
The \textit{unraveling} of a $\sigma$-LTS $M_a$ is the $\sigma$-LTS $\unr(M_a)$ where
\begin{enumerate}
\item $\dom(\unr(M_a))$ is the set of strings $w = \langle w_1,\hdots,w_n \rangle$ over $\dom(M)$ with $a = w_1$ and for each $i < n$, there is $R \in \Act$ such that $R^M(w_i,w_{i+1})$ holds,
\item $R^{\unr(M_a)} = \{ (w,w^{\frown}\langle u \rangle) \mid (\last(w),u) \in R^M \}$ for each $R \in \Act$, and
\item $p^{\unr(M_a)} = \{ w \in \dom(\unr(M_a)) \mid \last(w) \in p^M \}$ for $p \in \Prop$,
\end{enumerate}
where $\langle a \rangle$ is the unique distinguished element of the model, $\last$ is the function mapping strings to their last element, and $w^{\frown}w'$ denotes string concatenation.
\end{definition}

If $M_a$ is a $\sigma$-LTS, then $\unr(M_a)$ is a (possibly infinite) $\sigma$-tree. Furthermore, if $M_a$ is image-finite, then $\unr^k(M_a)$, the substructure of $\unr(M_a)$ containing only states of depth at most $k$, is a finite $\sigma$-tree of depth $k$. The unraveling construction is known to preserve the truth of $\GML$ formulas.

\begin{theorem}[Unraveling Invariance, \cite{van2009lindstrom}]
\label{thm:unr-inv}
If $M_a$ is a $\sigma$-LTS, then for each $\varphi \in \GML$, we have that $M_a \models \varphi$ if and only if $\unr(M_a) \models \varphi$.
\end{theorem}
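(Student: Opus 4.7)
The natural approach is to exhibit a map $\pi : \dom(\unr(M_a)) \to \dom(M)$ that is not merely a homomorphism but a \emph{successor-wise bijection}, then induct on formula complexity. Define $\pi(w) := \last(w)$. I will establish, by induction on $\varphi \in \GML$, the stronger claim that for every $w \in \dom(\unr(M_a))$,
$$\unr(M_a), w \models \varphi \iff M, \pi(w) \models \varphi.$$
Since the distinguished element of $\unr(M_a)$ is $\langle a \rangle$ and $\pi(\langle a \rangle) = a$, specializing to $w = \langle a \rangle$ yields the theorem.

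The base case for $\varphi = p$ is immediate from clause (3) of the definition of $\unr$: $w \in p^{\unr(M_a)}$ iff $\last(w) \in p^M$. The Boolean cases pass through by the induction hypothesis with no effort.

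The only substantive step is the graded modality $\Diamond_i^{\geq k} \psi$. Clause (2) of the definition of $\unr$ gives
$$\Succ^{\unr(M_a)}_{R_i}[w] = \{\, w^{\frown}\langle u \rangle \mid u \in \Succ^M_{R_i}[\last(w)] \,\},$$
and the map $w^{\frown}\langle u \rangle \mapsto u$ is a \emph{bijection} between $\Succ^{\unr(M_a)}_{R_i}[w]$ and $\Succ^M_{R_i}[\pi(w)]$ that commutes with $\pi$ (in the sense that $\pi(w^{\frown}\langle u \rangle) = u$). Applying the induction hypothesis pointwise along this bijection, the set of $R_i$-successors of $w$ in $\unr(M_a)$ satisfying $\psi$ has the same cardinality as the set of $R_i$-successors of $\pi(w)$ in $M$ satisfying $\psi$; hence one set has size at least $k$ iff the other does.

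The main conceptual point -- and what distinguishes this from the corresponding invariance result for $\BML$, where only preservation of successor \emph{existence} is needed -- is that the unraveling definition uses sequence extension $w^{\frown}\langle u \rangle$ rather than any quotient-like construction, so no two distinct successors of $\last(w)$ in $M$ ever get collapsed into a single successor of $w$ in $\unr(M_a)$, nor does a single successor of $\last(w)$ spawn multiple copies. This exact matching of multiplicities is what makes the induction go through for the graded modality, and it is the only place where counting, rather than mere existence, is exploited.
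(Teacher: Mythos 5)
Your proof is correct. The paper does not prove this theorem itself (it imports it by citation from the van Benthem--ten Cate--V\"{a}\"{a}n\"{a}nen line of work), and your argument is exactly the standard one: the projection $\pi(w) = \last(w)$ restricts to a bijection between $\Succ^{\unr(M_a)}_{R_i}[w]$ and $\Succ^{M}_{R_i}[\last(w)]$ for each $w$, which transfers multiplicities and hence the semantics of $\Diamond_i^{\geq k}$ (and, as degenerate cases, $\Diamond_i$ and $\square_i$) under the pointwise induction hypothesis.
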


\noindent In fact, $\GMLk$ formulas can describe finite $\sigma$-trees of depth $k$ up to isomorphism. Recall that if $M_a$ is a $\sigma$-tree, then $M^k_a$ denotes the substructure of $M_a$ containing only elements of depth at most $k$. We write $\unr^k(M_a)$ for the substructure of $\unr(M_a)$ containing only elements of depth at most $k$.

\begin{proposition}
\label{prop:finite-tree-iso}
For each $T_c \in \KT^k_\sigma$, there is a formula $\varphi \in \GMLk$ such that, if $M_a$ is a $\sigma$-tree, then $M_a \models \varphi$ if and only if $M^k_a \cong T_c$.
\end{proposition}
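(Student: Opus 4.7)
The plan is to proceed by induction on $k$. For the base case $k=0$, a $\sigma$-tree $T_c \in \KT^0_\sigma$ is just its root together with its propositional labeling, so I would take
$$\chi_c \;:=\; \bigwedge_{p \in \marking^T_\sigma(c)} p \,\land\, \bigwedge_{p \in \Prop \setminus \marking^T_\sigma(c)} \lnot p,$$
which has modal depth $0$. For any $\sigma$-tree $M_a$, the truncation $M^0_a$ is just the root $a$ with its labeling, so $M_a \models \chi_c$ iff $M^0_a \cong T_c$.

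For the inductive step, assume the claim for depths less than $k$ and fix $T_c$ of depth at most $k$. For each action $R_i \in \Act$, I would partition $\Succ^T_{R_i}[c]$ according to the isomorphism type of the subtree of $T_c$ rooted at each successor; each such subtree lies in $\KT^{k-1}_\sigma$. Let the distinct types appearing be $\tau^{(i)}_1, \ldots, \tau^{(i)}_{m_i}$, with multiplicities $n^{(i)}_j$ and total $N_i := \sum_{j} n^{(i)}_j$. The inductive hypothesis supplies formulas $\psi^{(i)}_j \in \GMLk$ of modal depth at most $k-1$ such that for any $\sigma$-tree $N_d$, $N_d \models \psi^{(i)}_j$ iff $N^{k-1}_d \cong T_{\tau^{(i)}_j}$. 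Setting
$$\varphi \;:=\; \chi_c \,\land\, \bigwedge_{R_i \in \Act} \Bigl(\, \lnot \Diamond_i^{\geq N_i+1} \top \,\land\, \bigwedge_{j=1}^{m_i} \Diamond_i^{\geq n^{(i)}_j} \psi^{(i)}_j \,\Bigr)$$
yields a formula of modal depth at most $k$.

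For correctness, the direction $M^k_a \cong T_c \Rightarrow M_a \models \varphi$ is routine, since $\varphi$ has modal depth $k$ and so evaluates identically on $M_a$ and $M^k_a$. In the reverse direction, the crucial observation is that the types $\tau^{(i)}_j$ are pairwise distinct, so no $\sigma$-tree can simultaneously satisfy two of the formulas $\psi^{(i)}_j$; hence $\bigwedge_j \Diamond_i^{\geq n^{(i)}_j} \psi^{(i)}_j$ forces at least $N_i$ distinct $R_i$-successors at $a$. Combined with the cap $\lnot \Diamond_i^{\geq N_i+1}\top$, this forces exactly $n^{(i)}_j$ successors of each type and none of any spurious type, recovering the branching pattern of $c$ in $T_c$. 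Applying the inductive hypothesis to assemble isomorphisms of the corresponding depth-$(k-1)$ subtrees then yields $M^k_a \cong T_c$. The main obstacle I anticipate is justifying the cap: without $\lnot \Diamond_i^{\geq N_i+1}\top$, $M_a$ could carry extraneous $R_i$-successors of subtree types not appearing in $T_c$ while still satisfying the lower-bound conjuncts, so both the cap and the mutual exclusivity of depth-$(k-1)$ isomorphism types are essential for the counts to match exactly.
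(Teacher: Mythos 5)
Your proposal is correct and follows essentially the same route as the paper's proof: induction on depth, an exact description of the root's propositional marking, formulas for the distinct isomorphism types of the successor subtrees obtained from the inductive hypothesis, and graded modalities pinning down the multiplicity of each type together with a cap on the total number of successors. The only (cosmetic) differences are that you track each action $R_i$ with its own indices and express exact counts as lower bounds plus a cap $\lnot\Diamond_i^{\geq N_i+1}\top$ using mutual exclusivity of the type formulas, whereas the paper writes $\Diamond_i^{=n_i}$ directly.
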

\begin{proof}
For any state $s \in T$, define $\markform^T_s:= \left( \bigwedge_{p \in \marking^T_\sigma(s)} p \right) \land \left( \bigwedge_{p \not \in \marking^T_\sigma(s)} \lnot p \right)$. We proceed by strong induction on $k$. For $T_c \in \KT^0_\sigma$, clearly $\markform^T_s$ meets the requirements of the claim. Now suppose the claim holds for $\KT^j_\sigma$ for all $j < k$, and let $T_c \in \KT^k_\sigma$. Let $\Succ^T_\sigma[c] = \{s_1,\hdots,s_n\}$, and let $T^i_{s_i}$ denote the subtree of $T_c$ rooted at $s_i$ for each $i \leq n$. By the inductive hypothesis, there is a formula $\varphi_i$ satisfying the claim for each $T^i_{s_i}$. Some of these may be equivalent, so let $\varphi'_1,\hdots,\varphi'_m$ be the sequence of formulas obtained by removing duplicates. For each $i \leq m$, let $n_i$ denote the number of elements $s_j \in \Succ^T_\sigma[c]$ such that the formula associated with $T^j_{s_j}$ is $\varphi'_i$. The formula $\varphi := \left( \markform^T_c \land \Diamond_i^{n} \top \land \bigwedge_{i \leq m} \Diamond_i^{=n_i} \varphi'_i \right)$ satisfies the requirements of the claim.
\end{proof}

\noindent The next lemma shows that homomorphism counts from finite $\sigma$-trees are preserved between a $\sigma$-LTS and its unraveling up to depth $k$.

\begin{lemma}
\label{thm:unr-hom-count-preservation}
If $M_a$ is an image-finite LTS, then for each $k \in \mathbb{N}$, we have that $\homN(\KT^k,M_a) = \homN(\KT^k,\unr^k(M_a))$.
\end{lemma}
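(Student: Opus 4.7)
The plan is to exhibit an explicit bijection between $\Hom(T_c, M_a)$ and $\Hom(T_c, \unr^k(M_a))$ for each finite $\sigma$-tree $T_c$ of depth at most $k$, from which the equality of homomorphism counts (and hence the equality of the full left profiles over $\mathbb{N}$) follows immediately.

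First I would define the natural projection $\pi: \unr^k(M_a) \to M_a$ by $\pi(w) := \last(w)$. From clauses (2) and (3) of the definition of $\unr(M_a)$ it is essentially immediate that $\pi$ is a homomorphism of pointed $\sigma$-LTSs (it sends the distinguished string $\langle a \rangle$ to $a$, preserves proposition letters by construction, and any edge $(w, w^{\frown}\langle u\rangle) \in R^{\unr(M_a)}$ projects to an edge $(\last(w), u) \in R^M$). Hence post-composition with $\pi$ sends homomorphisms $T_c \to \unr^k(M_a)$ to homomorphisms $T_c \to M_a$.

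Next I would construct an inverse lifting operation $g \mapsto \tilde{g}$. Given a homomorphism $g: T_c \to M_a$, exploit the fact that $T_c$ is a $\sigma$-tree: for each $t \in \dom(T)$ there is a unique directed $\sigma$-path $c = t_0, t_1, \ldots, t_n = t$ in $T_c$, and because $g$ is a homomorphism with $g(c) = a$, the image $a = g(t_0), g(t_1), \ldots, g(t_n)$ is a directed $\sigma$-walk in $M$ starting at $a$. I define
\[
\tilde{g}(t) := \langle a, g(t_1), \ldots, g(t_n)\rangle \in \dom(\unr(M_a)).
\]
Since $\depth(t) \leq k$ in $T_c$, the resulting string has depth $\leq k$, so $\tilde{g}(t) \in \dom(\unr^k(M_a))$. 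A routine check using the definitions of $R^{\unr(M_a)}$ and $p^{\unr(M_a)}$, together with the fact that following an $R_i$-edge in $T_c$ extends the path by one step, shows that $\tilde{g}$ preserves edges and propositions, hence is a homomorphism $T_c \to \unr^k(M_a)$ with $\tilde{g}(c) = \langle a \rangle$.

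Finally I would verify that the two assignments are mutually inverse. The identity $\pi \circ \tilde{g} = g$ is immediate from $\last(\langle a, g(t_1), \ldots, g(t_n)\rangle) = g(t_n) = g(t)$. For the other direction, given a homomorphism $h: T_c \to \unr^k(M_a)$, I would use the fact that $\unr^k(M_a)$ is itself a $\sigma$-tree rooted at $\langle a \rangle$: the image under $h$ of the unique path from $c$ to $t$ must be the unique directed $\sigma$-path in $\unr^k(M_a)$ from $\langle a\rangle$ to $h(t)$, which forces $h(t) = \langle a, \last(h(t_1)), \ldots, \last(h(t_n))\rangle = \widetilde{\pi \circ h}(t)$. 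Thus $g \mapsto \tilde{g}$ is a bijection onto $\Hom(T_c, \unr^k(M_a))$, and summing over $T_c \in \KT^k_\sigma$ gives the desired equality. (Image-finiteness of $M_a$ is what guarantees each $\Hom(T_c, M_a)$ is finite, so the counts $\homN$ are well-defined on both sides.) The only subtle step is the lifting construction and the verification that the uniqueness of directed paths in the tree $T_c$ (and in $\unr^k(M_a)$) makes the lift and its inverse well-defined; everything else is bookkeeping.
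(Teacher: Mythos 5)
Your proof is correct and follows essentially the same route as the paper: the lift $g \mapsto \tilde{g}$ and the projection $h \mapsto \last \circ h$ are exactly the two maps the paper uses, except that the paper only verifies each is injective (which suffices for finite hom-sets), whereas you additionally check they are mutually inverse, giving a slightly cleaner bijective correspondence.
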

\begin{proof}
Given a directed tree-shaped LTS $T_c$ of depth at most $k$, we construct injections between $\Hom(T_c,M_a)$ and $\Hom(T_c,\unr^k(M_a))$.

($\leq$) For each $h \in \Hom(T_c,M_a)$, we define partial maps $\hat{h}_i: T_c \to \unr^k(M_a)$ for $i \leq k$ by recursion on the depth of elements of $T_c$, where $\hat{h}_0(c) = \langle a \rangle$ and $\hat{h}_{i+1}(m) = \hat{h}_i(\parent(m))^{\frown}h(m)$.
We claim that $\hat{h} = \bigcup_{i \leq k} \hat{h}_i$ is a homomorphism. To see that $\hat{h}$ preserves proposition letters, observe that, for any proposition letter $p \in P$, we have that if $p^T(m)$ holds, then $p^M(h(m))$ holds since $h$ is a homomorphism, and so $p^{\unr^k(M_a)}(\hat{h}(m))$ holds since $h(m) = \last(\hat{h}(m))$.

To show that $\hat{h}$ preserves actions, it suffices to show that, for all $R \in \Act$, all states $m \in T_c$, and all $s \in \Succ^M_R[m]$, we have that $R^{\unr^k(M_a)}(\hat{h}(m),s)$ holds. This follows from the observations that (1) $\hat{h}(s)$ extends $\hat{h}(m)$ (by the definition of $\hat{h}$), and (2) if $R^M(\last(\hat{h}(x)), \last(\hat{h}(s)))$ holds, then $R^M(h(x),h(s_j))$ (since $h_i$ is a homomorphism), in which case $R^{\unr^k(M_a)}(\hat{h}(m),s)$ holds by the definition of unravelings. Thus $\hat{h}$ is a homomorphism. Furthermore, it's clear that the map $h \mapsto \hat{h}$ is an injection from $\Hom(T_c,M_a)$ to $\Hom(T_c,\unr^k(M_a))$.

($\geq$) For each $g \in \Hom(T_c,\unr^k(M_a))$, define $\hat{g}: T_c \to M_a$ to be the map $m \mapsto \last(g(m))$. By the definition of unravelings, $\hat{g}$ is a homomorphism. We claim that $g \mapsto \hat{g}$ is an injective map from $\Hom(T_c,\unr^k(M_a))$ to $\Hom(T_c,M_a)$. To see this, let $g,g': T_c \to unr(M_a)$ be homomorphisms, and let $\hat{g},\hat{g}'$ be the corresponding maps in $\Hom(T_c,M_a)$. Suppose that $\hat{g} = \hat{g}'$. We now show by induction on depth of the elements of $T_c$ that $g = g'$.

The base case is immediate, since $g(c) = g'(c) = \langle a \rangle$. Now suppose inductively that $g$ and $g'$ agree on all elements of depth less than $k$, and let $m \in T_c$ be some element of depth $k$. By assumption, we have that $\hat{g}(m) = \hat{g}'(m)$, and so $last(g(m)) = last(g'(m))$. Let $n$ denote the unique predecessor of $m$ (i.e., its parent). Clearly $n$ has depth less than $k$, and so $g(n) = g'(n)$. Since $g$ and $g'$ are homomorphisms and $R_i(n,m)$ holds for some $i \in I$, we have that $R^{unr(M_a)}_i(g(n),g(m))$ and $R^{unr(M_a)}_i(g'(n),g'(m))$ hold. Then by the definition of the actions for unravelings, we have that $g(m) = g(n)^{\frown} \hat{g}(m) = g'(m)$.
\end{proof}

\noindent We are now ready to prove our characterization result for $\GML$.

\begin{theorem}
\label{thm:gml-main}
For image-finite LTSs $M_a$ and $N_b$, the following are equivalent:
\begin{enumerate}
\item $\homN(\KT^k,M_a) = \homN(\KT^k,N_b)$,
\item $\unr^k(M_a) \cong \unr^k(N_b)$,
\item $M_a \GMLkequiv^\sigma N_b$.
\end{enumerate}
\end{theorem}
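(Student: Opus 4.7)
The plan is to establish the cycle via two separate equivalences, $(1) \Leftrightarrow (2)$ and $(2) \Leftrightarrow (3)$, using the lemmas already at our disposal.

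For $(1) \Leftrightarrow (2)$, I would apply Lemma \ref{thm:unr-hom-count-preservation} to both $M_a$ and $N_b$, yielding
$$\homN(\KT^k,M_a) = \homN(\KT^k,\unr^k(M_a)) \quad \text{and} \quad \homN(\KT^k,N_b) = \homN(\KT^k,\unr^k(N_b)).$$
Since $M_a$ and $N_b$ are image-finite, $\unr^k(M_a)$ and $\unr^k(N_b)$ are \emph{finite} $\sigma$-trees of depth at most $k$, i.e., they lie in $\KT^k_\sigma$. By Fact \ref{prop:tree-ext-equivalence} we have $\KT^k_\sigma = \Ext(\KT^k_\sigma)$, so Theorem \ref{thm:AKW-lovasz} applied to the class $\KT^k_\sigma$ gives that $\homN(\KT^k_\sigma,\unr^k(M_a)) = \homN(\KT^k_\sigma,\unr^k(N_b))$ iff $\unr^k(M_a) \cong \unr^k(N_b)$, which completes this equivalence.

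For $(2) \Rightarrow (3)$, I would observe that if $\unr^k(M_a) \cong \unr^k(N_b)$, then the two unravelings agree on all elements of depth at most $k$. Since the truth of any $\GMLk$ formula at the distinguished element only inspects states within depth $k$, we get $\unr(M_a) \equiv_{\GMLk} \unr(N_b)$, and then unraveling invariance (Theorem \ref{thm:unr-inv}) lifts this back to $M_a \GMLkequiv^\sigma N_b$. For $(3) \Rightarrow (2)$, apply Proposition \ref{prop:finite-tree-iso} to the finite $\sigma$-tree $\unr^k(M_a)$ to obtain a $\GMLk$ formula $\varphi$ such that, for any $\sigma$-tree $X_x$, $X_x \models \varphi$ iff $X^k_x \cong \unr^k(M_a)$. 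Since $\unr(M_a)^k = \unr^k(M_a)$, we have $\unr(M_a) \models \varphi$; by unraveling invariance $M_a \models \varphi$; by hypothesis $N_b \models \varphi$; by unraveling invariance again $\unr(N_b) \models \varphi$; and finally by the defining property of $\varphi$, $\unr^k(N_b) \cong \unr^k(M_a)$.

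The only real subtlety is the step in $(2) \Rightarrow (3)$ that truth of a $\GMLk$ formula at the root of an unraveling depends only on its depth-$k$ truncation; this is a standard locality property of graded modal logic which needs a short inductive verification on formula complexity (any graded modality $\Diamond_i^{\geq n}$ at modal depth $j$ only probes successors at depth $j+1$). Apart from that, all the heavy lifting has already been done: Lemma \ref{thm:unr-hom-count-preservation} reduces counting to the tree case, Fact \ref{prop:tree-ext-equivalence} together with Theorem \ref{thm:AKW-lovasz} turns count identity into isomorphism of trees, and Proposition \ref{prop:finite-tree-iso} together with Theorem \ref{thm:unr-inv} bridges between isomorphism of the truncated unravelings and $\GMLk$-equivalence of the original structures.
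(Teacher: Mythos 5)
Your proposal is correct and follows essentially the same route as the paper: $(i)\Leftrightarrow(ii)$ via Lemma \ref{thm:unr-hom-count-preservation} combined with Fact \ref{prop:tree-ext-equivalence} and Theorem \ref{thm:AKW-lovasz}, and $(ii)\Leftrightarrow(iii)$ via Proposition \ref{prop:finite-tree-iso}, Theorem \ref{thm:unr-inv}, and the depth-$k$ locality of $\GMLk$ satisfaction in trees. The only difference is that you spell out the $(ii)\Leftrightarrow(iii)$ direction in more detail than the paper, which simply cites the same three ingredients.
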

\begin{proof}
The equivalence of $(ii)$ and $(iii)$ is a consequence of Proposition \ref{prop:finite-tree-iso}, Theorem \ref{thm:unr-inv} and the observation that satisfaction of $\GMLk$ formulas in $\sigma$-trees depends only on the elements up to depth $k$. For $(i)$ to $(ii)$, suppose that $\homN(\KT^k,M_a) = \homN(\KT^k,N_b)$. Then by Lemma \ref{thm:unr-hom-count-preservation}, we have that $\homN(\KT^k,\unr^k(M_a)) = \homN(\KT^k,\unr^k(N_b))$. Since $\unr^k(M_a),\unr^k(N_b) \in \KT^k$, this implies, by Fact \ref{prop:tree-ext-equivalence} and Theorem \ref{thm:AKW-lovasz}, that $\unr^k(M_a) \cong \unr^k(N_b)$. For $(ii)$ to $(i)$, suppose that $\unr^k(M_a) \cong \unr^k(N_b)$. Then $\homN(\KT^k,\unr^k(M_a)) = \homN(\KT^k,\unr^k(N_b))$, and so $\homN(\KT^k,M_a) = \homN(\KT^k,N_b)$ by Lemma \ref{thm:unr-hom-count-preservation}.
\end{proof}

\noindent Using Lemma \ref{lemma:iso-ext}, we easily obtain the following corollary.

\begin{corollary}
\label{cor:gml-main}
For image-finite LTSs $M_a$ and $N_b$, the following are equivalent:
\begin{enumerate}
\item $\homN(\KT,M_a) = \homN(\KT,N_b)$,
\item $\unr(M_a) \cong \unr(N_b)$,
\item $M_a \GMLequiv^\sigma N_b$.
\end{enumerate}
\end{corollary}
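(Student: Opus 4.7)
The plan is to deduce the corollary from Theorem \ref{thm:gml-main} by stratifying over all finite depths $k \in \omega$, using Lemma \ref{lemma:iso-ext} exactly once to lift a family of depth-$k$ isomorphisms into a full isomorphism of the (possibly infinite) unravelings. The two structural observations underlying the argument are that $\KT_\sigma = \bigcup_{k \in \omega} \KT^k_\sigma$ and $\GML = \bigcup_{k \in \omega} \GMLk$, and that every $T_c \in \KT_\sigma$ has some well-defined finite depth $k$, so any equality of the form $\homN(T_c, M_a) = \homN(T_c, N_b)$ is already witnessed inside the depth-$k$ profile for that particular $k$.

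For the equivalence of (i) and (iii), these observations immediately show that (i) is equivalent to the statement that $\homN(\KT^k_\sigma, M_a) = \homN(\KT^k_\sigma, N_b)$ holds for every $k \in \omega$, and that (iii) is equivalent to the statement that $M_a \GMLkequiv^\sigma N_b$ holds for every $k \in \omega$. Applying the $(i) \Leftrightarrow (iii)$ part of Theorem \ref{thm:gml-main} at each $k$ then settles this direction.

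For the equivalence with (ii), the implication (ii) $\Rightarrow$ (i) is immediate: any isomorphism $\unr(M_a) \cong \unr(N_b)$ restricts to isomorphisms of the depth-$k$ substructures for every $k$, and isomorphic structures admit equal homomorphism counts from every fixed source. For the converse (i) $\Rightarrow$ (ii), assume (i). The $(i) \Rightarrow (ii)$ portion of Theorem \ref{thm:gml-main}, applied at each $k$, yields $\unr^k(M_a) \cong \unr^k(N_b)$ for every $k \in \omega$. Both $\unr(M_a)$ and $\unr(N_b)$ are $\sigma$-trees, hence point-generated with distinguished elements $\langle a \rangle$ and $\langle b \rangle$, and image-finiteness of $M_a$ and $N_b$ ensures that each depth-$k$ substructure $\unr^k(M_a)$ and $\unr^k(N_b)$ is finite. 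These are precisely the hypotheses of Lemma \ref{lemma:iso-ext} applied to the pair $\unr(M_a)$ and $\unr(N_b)$, which therefore produces the full isomorphism $\unr(M_a) \cong \unr(N_b)$.

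The only mildly delicate bookkeeping step is verifying that the two notations $\unr^k(\cdot)$ (depth-$k$ substructure of the unraveling) and $(\cdot)^k$ (depth-$k$ substructure of a point-generated LTS, as used in Lemma \ref{lemma:iso-ext}) agree on the unravelings $\unr(M_a)$ and $\unr(N_b)$; this is immediate from the definitions of unraveling and depth on trees rooted at their unique distinguished element. I do not expect any real obstacle beyond this identification, since the whole argument is essentially a compactness-style assembly of the finite-depth theorem, with Lemma \ref{lemma:iso-ext} providing the only non-routine ingredient.
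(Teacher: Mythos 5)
Your proposal is correct and follows exactly the route the paper intends: the paper gives no explicit proof beyond the remark that the corollary is obtained from Theorem \ref{thm:gml-main} ``using Lemma \ref{lemma:iso-ext},'' and your stratification over depths $k$, with Lemma \ref{lemma:iso-ext} invoked once to assemble the depth-$k$ isomorphisms of the unravelings into a full isomorphism, is precisely that argument. The only cosmetic point is that for (ii)~$\Rightarrow$~(i) it is cleanest to cite the (ii)~$\Rightarrow$~(i) direction of Theorem \ref{thm:gml-main} at each $k$ (so that the counts land in $M_a$ and $N_b$ rather than in $\unr^k(M_a)$ and $\unr^k(N_b)$), which your argument effectively does.
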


\paragraph*{$\GML$ with backward modalities.}
Let $\GMLb$ denote the extension of $\GML$ with backward modalities for each $k \in \mathbb{N}$ (cf. Definition \ref{def:backward-modalities}). We write $\GMLbk$ for the fragment of $\GMLb$ formulas of modal depth at most $k$. Fix an expansion $\sigma_B = \Prop \cup \Act \cup \Act_B$ of $\sigma$, where $\Act_B = \{ B_i \mid i \in I \}$ is disjoint from $\Act$.

\begin{definition}
\label{def:backward-expansion}
The \emph{backward expansion} of a $\sigma$-LTS $M_a$ is the $\sigma_B$-expansion $M^B_a$ of $M_a$ given by setting $B_i^{M^B} = \{ \langle n,m \rangle \mid R_i^M(m,n) ~\text{holds} \}$ for each $i \in I$.
\end{definition}

\noindent Recall that $\AKT^k_\sigma$ denotes the class of connected acyclic $\sigma$-LTSs of depth at most $k$, and that $\AKT_\sigma = \bigcup_{k \in \omega} \AKT^k_\sigma$ (cf. Definition \ref{def:classes}).

\begin{definition}
\label{def:dirtree-around-Tc}
Given some $T_c \in \AKT_\sigma$, we define a $\sigma_B$-LTS $T^\downarrow_c := (T^\downarrow,c)$ with $\dom(T^\downarrow) := \dom(T)$ and $\marking^{T^\downarrow}_\sigma(m) := \marking^T_\sigma(m)$ for all $m \in \dom(T)$, where
\begin{enumerate}
\item If $R^T_i(m,n)$ holds where $depth(m) < depth(n)$, then $R^{T^\downarrow}_i(m,n)$ holds.
\item If $R^T_i(m,n)$ holds where $depth(n) < depth(m)$, then $B^{T^\downarrow}_i(m,n)$ holds.
\end{enumerate}
Intuitively, $(\cdot)^\downarrow$ replaces all $R_i$ transitions ``pointing toward'' the root $c$ with $B_i$ transitions in the opposite direction. For all $T_c \in \AKT_\sigma$, clearly $T^\downarrow_c$ is a $\sigma_B$-tree.
\end{definition}

\begin{definition}
\label{def:flip}
Let $S_d$ be a $\sigma_B$-LTS. Define a $\sigma$-LTS $\flip(S_d) := (\flip(S),d)$ with $\dom(\flip(S)) = \dom(S)$ and $\marking^{\flip(S)}_\sigma(m) := \marking^S_\sigma(m)$ for all $m \in \dom(S)$, where $R_i^{\flip(S)} = R_i^{S} \cup (B_i^S)^{-1}$ for each $i \in I$.
\end{definition}

Intuitively, $\flip$ forms a $\sigma$-LTS from a $\sigma_B$-LTS $S_d$ by replacing $B_i$ transitions in $S_d$ by the corresponding $R_i$ transition in the opposite direction. If $S_d$ is a $\sigma_B$-tree, then $\flip(S_d)$ is a connected acyclic $\sigma$-LTS. The $(\cdot)^\downarrow$ transformation on connected, acyclic $\sigma$-LTSs and the $\flip$ transformation on $\sigma_B$-trees are exact inverses of one another: $T_c = \flip(T^\downarrow_c)$ for all connected acyclic $\sigma$-LTSs $T_c$, and $S_d = (\flip(S_d))^\downarrow$ for all $\sigma_B$-trees $S_d$. These operations also clearly preserve the depth of the structures to which they are applied.

When we consider homomorphisms from finite connected acyclic $\sigma$-LTSs $T_c$, image-finiteness is not enough to guarantee that $\Hom(T_c,M_a)$ is finite. However, if $M_a$ is degree-finite, then $\bigcup_{h \in Hom(T_c,M_a)} Im(h)$
is finite, and hence $\homN(\AKT_\sigma,M_a)$ is well-defined. Furthermore, note that if $M_a$ is a degree-finite $\sigma$-LTS, then $M^B_a$ is also a degree-finite $\sigma^B$-LTS.

\begin{proposition}
\label{prop:gml-back-hom-count-pres}
Let $M_a$ be a degree-finite $\sigma$-LTS. Then
\begin{enumerate}
\item If $T_c$ is in $\AKT_\sigma$, then $\homN(T_c,M_a) = \homN(T^\downarrow_c,M^B_a)$.
\item If $T_c$ is in $\KT_{\sigma_B}$, then $\homN(T_c,M^B_a) = \homN(\flip(T_c),M_a)$.
\end{enumerate}
\end{proposition}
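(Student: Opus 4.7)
The plan is to exhibit, in both (1) and (2), a bijection between the relevant sets of homomorphisms given simply by the identity map on underlying sets. That is, a map $h: \dom(T) \to \dom(M)$ will serve simultaneously as a homomorphism on the $\sigma$-side and the $\sigma_B$-side, once we check that the edges of the source and target structures match up correctly under the transformations $(\cdot)^\downarrow$, $\flip$, and $(\cdot)^B$. Since the distinguished element and the proposition-letter markings are untouched by all three transformations, only the action facts require attention. The finiteness of $\Hom(T_c, M_a)$ (and its analogues) is guaranteed by degree-finiteness of $M_a$ and finiteness of $T_c$, so it suffices to show that the identity correspondence between maps induces a bijection on homomorphism sets; equality of the cardinalities then translates to equality of the $\mathbb{N}$-counts.

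For part (1), I would first define $\Phi : \Hom(T_c,M_a) \to \Hom(T^\downarrow_c, M^B_a)$ by $\Phi(h) = h$, and verify that $\Phi(h)$ really is a $\sigma_B$-homomorphism. The check splits into two cases according to Definition \ref{def:dirtree-around-Tc}: if $R^{T^\downarrow}_i(m,n)$ holds (so $R^T_i(m,n)$ holds in $T$ with $m$ shallower than $n$), then $R^M_i(h(m),h(n))$ holds, and since $R^{M^B}_i = R^M_i$ we are done; if the $B_i$ transition of $T^\downarrow_c$ arises from an $R^T_i(m,n)$ with $n$ shallower, then $R^M_i(h(m), h(n))$ still holds in $M$, so by Definition \ref{def:backward-expansion} the pair belongs to $B^{M^B}_i$ in the appropriate order. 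For the inverse direction, given any $\sigma_B$-homomorphism $h' : T^\downarrow_c \to M^B_a$, the same underlying function is a $\sigma$-homomorphism $T_c \to M_a$: forward-pointing $R_i$ edges are preserved directly, and every backward-pointing $R^T_i(m,n)$ edge becomes a $B_i$-edge in $T^\downarrow$, which $h'$ sends into $B^{M^B}_i$, whence by Definition \ref{def:backward-expansion} the corresponding $R^M_i$ fact must hold. This gives a two-sided inverse of $\Phi$, so $\Phi$ is a bijection and $|\Hom(T_c,M_a)| = |\Hom(T^\downarrow_c,M^B_a)|$, yielding the $\mathbb{N}$-count equality.

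Part (2) is handled symmetrically, using the defining equation $R^{\flip(T)}_i = R^T_i \cup (B^T_i)^{-1}$ from Definition \ref{def:flip} together with $B^{M^B}_i = (R^M_i)^{-1}$. Given any homomorphism $g : T_c \to M^B_a$, I claim the same function serves as a homomorphism $\flip(T_c) \to M_a$: an $R^{\flip(T)}_i(m,n)$ fact arises either from $R^T_i(m,n)$, in which case $g$ preserves it to $R^{M^B}_i(g(m),g(n)) = R^M_i(g(m),g(n))$; or from $B^T_i(n,m)$, in which case $g$ preserves it to $B^{M^B}_i(g(n),g(m))$, which by definition is exactly $R^M_i(g(m),g(n))$. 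The reverse direction is analogous, using the fact that every $R^T_i$ or $B^T_i$ fact in $T_c$ produces an $R^{\flip(T)}_i$ fact in $\flip(T_c)$ (possibly with direction reversed, in a manner compatible with unfolding $B^{M^B}_i$). This again gives a bijection of homomorphism sets and hence equality of $\mathbb{N}$-counts.

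The main potential obstacle is purely bookkeeping: keeping track of the two orientation conventions (the $B_i$ edges of $T^\downarrow_c$ and the definitional reversal in $B^{M^B}_i$) so that the case analysis matches up and so that I actually invoke the correct clause of Definition \ref{def:dirtree-around-Tc}. There is no deep ingredient — essentially the transformations $(\cdot)^\downarrow \leftrightarrow \flip$ on source structures and $(\cdot)^B$ on target structures are arranged precisely so that a homomorphism is a homomorphism on either side without any modification.
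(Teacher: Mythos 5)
Your proposal is correct and follows essentially the same route as the paper, which likewise observes that the underlying map $h:\dom(T)\to\dom(M)$ is a homomorphism on the $\sigma$-side if and only if it is one on the $\sigma_B$-side, so the identity correspondence gives the bijection of hom-sets. Your explicit case analysis on edge orientations is just a fleshed-out version of what the paper leaves as ``straightforward from the definitions.''
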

\begin{proof}[Sketch]
For part $(i)$, we show that a map $h: \dom(T) \to \dom(M)$ is a homomorphism from $T_c$ to $M_a$ if and only if it is also a homomorphism from $T^\downarrow_c$ to $M^B_a$, which is straightforward from the definitions of $(\cdot)^\downarrow$ and $M^B_a$. The proof of part $(ii)$ is analogous.
\end{proof}

\begin{lemma}
\label{lemma:gml-back-count-equiv}
Let $M_a$ and $N_b$ be degree-finite $\sigma$-LTSs. Then $$\homN(\AKT^k_\sigma,M_a) = \homN(\AKT^k_\sigma,N_b) \iff \homN(\KT^k_{\sigma_B},M^B_a) = \homN(\KT^k_{\sigma_B},N^B_b).$$
\end{lemma}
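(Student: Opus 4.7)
The plan is to exploit the two operations $(\cdot)^\downarrow$ and $\flip$ introduced in Definitions \ref{def:dirtree-around-Tc} and \ref{def:flip}, which form mutually inverse, depth-preserving bijections between $\AKT^k_\sigma$ and $\KT^k_{\sigma_B}$. Combined with the homomorphism-count preservation result of Proposition \ref{prop:gml-back-hom-count-pres}, the lemma follows by translating homomorphism counts componentwise between the two profiles.

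More concretely, for the forward direction, I would take an arbitrary $S_d \in \KT^k_{\sigma_B}$. Since $\flip$ sends $\sigma_B$-trees to connected, acyclic $\sigma$-LTSs and preserves depth, $\flip(S_d)$ lies in $\AKT^k_\sigma$. Applying part (ii) of Proposition \ref{prop:gml-back-hom-count-pres} gives
\[
\homN(S_d, M^B_a) = \homN(\flip(S_d), M_a),
\]
and analogously for $N_b$. Applying the hypothesis $\homN(\AKT^k_\sigma,M_a) = \homN(\AKT^k_\sigma,N_b)$ at the coordinate $\flip(S_d)$ yields $\homN(\flip(S_d),M_a) = \homN(\flip(S_d),N_b)$, and thus $\homN(S_d,M^B_a) = \homN(S_d,N^B_b)$. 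Since $S_d$ was arbitrary, this proves the right-hand equality.

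For the reverse direction, I would do the symmetric argument using part (i) of Proposition \ref{prop:gml-back-hom-count-pres}: for any $T_c \in \AKT^k_\sigma$, the structure $T^\downarrow_c$ lies in $\KT^k_{\sigma_B}$ (and has depth at most $k$), so
\[
\homN(T_c, M_a) = \homN(T^\downarrow_c, M^B_a) = \homN(T^\downarrow_c, N^B_b) = \homN(T_c, N_b),
\]
where the middle equality uses the hypothesis and the outer two use Proposition \ref{prop:gml-back-hom-count-pres}(i).

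There is no real obstacle here; the only thing worth double-checking is the bookkeeping that $(\cdot)^\downarrow$ and $\flip$ actually send $\AKT^k_\sigma$ onto $\KT^k_{\sigma_B}$ bijectively and preserve the depth bound $k$, which the remarks following Definition \ref{def:flip} already establish. The lemma is essentially a one-line reindexing of the two profiles using these inverse bijections.
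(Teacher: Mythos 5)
Your proposal is correct and matches the paper's own argument: both rest on the observation that $(\cdot)^\downarrow$ and $\flip$ are mutually inverse, depth-preserving bijections between $\AKT^k_\sigma$ and $\KT^k_{\sigma_B}$, combined with Proposition \ref{prop:gml-back-hom-count-pres} to transfer the counts coordinatewise. The only cosmetic difference is that the paper phrases both directions by contraposition while you argue directly, which is immaterial.
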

\begin{proof}[Sketch]
Observe that $(\cdot)^\downarrow$ is a bijective map from $\AKT^k_\sigma$ to $\KT^k_{\sigma_B}$, while $\flip$ is its inverse. The forward direction is by contraposition. If we have $\homN(T_c,M^B_a) \neq \homN(T_c,N^B_b)$ for some $\sigma_B$-tree $T_c$ of depth at most $k$, then we have $\homN(T^\downarrow_c,M_a) \neq \homN(T^\downarrow_c,N_b)$ by Proposition \ref{prop:gml-back-hom-count-pres}. The reverse direction is proven by contraposition in a similar fashion.
\end{proof}

\begin{lemma}
\label{lemma:gml-back-logic-equiv}
Let $M_a$ and $N_b$ be degree-finite $\sigma$-LTSs. Then $M_a \GMLbkequiv^\sigma N_b$ if and only if $M^B_a \GMLkequiv^{\sigma_B} N^B_b$.
\end{lemma}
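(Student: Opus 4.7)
The plan is to prove both directions at once by exhibiting a syntactic bijection between $\GMLbk$-formulas over $\sigma$ and $\GMLk$-formulas over $\sigma_B$ that commutes with satisfaction under the backward-expansion operation $(\cdot)^B$. Define $\tau: \GMLb(\sigma) \to \GML(\sigma_B)$ by structural recursion: $\tau(p) = p$ for $p \in \Prop$; $\tau$ commutes with $\land$, $\lor$, and $\lnot$; $\tau(\Diamond_i^{\geq k} \varphi) = \Diamond_i^{\geq k} \tau(\varphi)$, where the target modality is the graded modality for $R_i \in \sigma_B$; and $\tau(\dback_i^{\geq k} \varphi) = \Diamond_{B_i}^{\geq k} \tau(\varphi)$, where $\Diamond_{B_i}^{\geq k}$ is the graded modality for the action $B_i \in \sigma_B$. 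Since $\tau$ simply relabels modalities, it preserves modal depth and restricts to a map $\tau: \GMLbk(\sigma) \to \GMLk(\sigma_B)$. Because no $B_i$ occurs in any $\GMLb(\sigma)$-formula, $\tau$ is a bijection onto $\GMLk(\sigma_B)$, with inverse replacing each $\Diamond_{B_i}^{\geq k}$ by $\dback_i^{\geq k}$.

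The crux is to prove by induction on $\varphi \in \GMLb(\sigma)$ that, for every $\sigma$-LTS $M$ and every $m \in \dom(M)$,
$$M_m \models \varphi \iff M^B_m \models \tau(\varphi).$$
The base case and Boolean cases are immediate, since by Definition \ref{def:backward-expansion} the expansion $M^B$ agrees with $M$ on every proposition letter. The forward-modality case is equally immediate, since $R_i^{M^B} = R_i^M$ for each $i \in I$. The one substantive case is the backward modality: by Definition \ref{def:backward-expansion}, $R_i^M(b,a)$ holds iff $B_i^{M^B}(a,b)$ holds, so $\Pred^M_{R_i}[a] = \Succ^{M^B}_{B_i}[a]$ as sets. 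Combined with the inductive hypothesis applied to $\varphi$, this yields
$$M_a \models \dback_i^{\geq k}\varphi \iff \bigl|\{b \in \Pred^M_{R_i}[a] : M_b \models \varphi\}\bigr| \geq k \iff \bigl|\{b \in \Succ^{M^B}_{B_i}[a] : M^B_b \models \tau(\varphi)\}\bigr| \geq k \iff M^B_a \models \Diamond_{B_i}^{\geq k} \tau(\varphi),$$
which is exactly $M^B_a \models \tau(\dback_i^{\geq k} \varphi)$.

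With the truth-preservation claim in hand, the lemma is immediate. For the left-to-right direction, suppose $M_a \GMLbkequiv^\sigma N_b$, and let $\psi \in \GMLk(\sigma_B)$. Setting $\varphi := \tau^{-1}(\psi) \in \GMLbk(\sigma)$, two applications of the truth-preservation claim together with the assumption give $M^B_a \models \psi \iff M_a \models \varphi \iff N_b \models \varphi \iff N^B_b \models \psi$. The right-to-left direction is symmetric, starting from an arbitrary $\varphi \in \GMLbk(\sigma)$ and using $\tau(\varphi) \in \GMLk(\sigma_B)$. There is no real obstacle here: the only non-routine step is the backward-modality case of the induction, and that step is forced by the very definition of $M^B$. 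Everything else is syntactic bookkeeping, and degree-finiteness plays no role in this lemma since we are merely comparing the truth of formulas rather than counting homomorphisms.
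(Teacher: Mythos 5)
Your proof is correct and takes essentially the same approach as the paper, which sketches exactly this argument: a modality-relabeling translation from $\GMLbk$ over $\sigma$ to $\GMLk$ over $\sigma_B$, a truth-preservation induction whose only substantive case is the backward modality (handled via $\Pred^M_{R_i}[a] = \Succ^{M^B}_{B_i}[a]$), and a conclusion from bijectivity of the translation. You merely fill in the details the paper leaves as ``a straightforward induction,'' and your observation that degree-finiteness plays no role in this particular lemma is accurate.
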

\begin{proof}[Sketch]
Consider the translation $\tr$ from $\GMLbk$ to $\GMLk$ which replaces backward modalities with the corresponding forward modalities in $\Act_{\sigma_B}$. It is a straightforward induction to show that $M_a \models \varphi$ if and only if $M^B_a \models \tr(\varphi)$. Since this translation is bijective, the result follows immediately.
\end{proof}

\noindent
We now prove our $\GMLb$ characterization result.

\begin{theorem}
\label{thm:gml-back-result}
Let $M_a$ and $N_b$ be degree-finite $\sigma$-LTSs. Then
$$\hom(\AKT^k_\sigma,M_a) = \hom(\AKT^k_\sigma,N_b) \iff M_a \GMLbkequiv^\sigma N_b.$$
\end{theorem}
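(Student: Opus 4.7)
The plan is to reduce this theorem to Theorem \ref{thm:gml-main} via the backward expansion, chaining together the two preceding lemmas. The essential observation is that Lemma \ref{lemma:gml-back-count-equiv} translates the homomorphism-count hypothesis from $\AKT^k_\sigma$ over $\sigma$ to $\KT^k_{\sigma_B}$ over $\sigma_B$, while Lemma \ref{lemma:gml-back-logic-equiv} translates the logical equivalence $\GMLbkequiv^\sigma$ to $\GMLkequiv^{\sigma_B}$. The pre-existing result Theorem \ref{thm:gml-main} then bridges these two reformulations, since it already equates $\KT^k$-homomorphism-count indistinguishability with $\GMLk$-equivalence for image-finite LTSs.

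First, I would verify that the input to Theorem \ref{thm:gml-main} is legitimate: since $M_a$ and $N_b$ are degree-finite $\sigma$-LTSs, their backward expansions $M^B_a$ and $N^B_b$ are degree-finite $\sigma_B$-LTSs (each $B_i$-edge is just the reverse of the corresponding $R_i$-edge, so $\Succ^{M^B}_{\sigma_B}[m] = \Succ^M_\sigma[m] \cup \Pred^M_\sigma[m]$, which is finite). In particular, $M^B_a$ and $N^B_b$ are image-finite, so Theorem \ref{thm:gml-main} applies to them over the signature $\sigma_B$.

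Next, I would simply chain the three equivalences:
\begin{align*}
\homN(\AKT^k_\sigma,M_a) = \homN(\AKT^k_\sigma,N_b)
&\iff \homN(\KT^k_{\sigma_B},M^B_a) = \homN(\KT^k_{\sigma_B},N^B_b) \\
&\iff M^B_a \GMLkequiv^{\sigma_B} N^B_b \\
&\iff M_a \GMLbkequiv^\sigma N_b,
\end{align*}
where the first equivalence is Lemma \ref{lemma:gml-back-count-equiv}, the second is Theorem \ref{thm:gml-main} applied to the $\sigma_B$-LTSs $M^B_a$ and $N^B_b$, and the third is Lemma \ref{lemma:gml-back-logic-equiv}.

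There is no real obstacle here: all the genuine work has been packaged into the bijective correspondences $(\cdot)^\downarrow$ and $\flip$ between $\AKT^k_\sigma$ and $\KT^k_{\sigma_B}$ (Definitions \ref{def:dirtree-around-Tc}--\ref{def:flip}) together with the homomorphism-count preservation of Proposition \ref{prop:gml-back-hom-count-pres}, the syntactic translation $\tr$ in Lemma \ref{lemma:gml-back-logic-equiv}, and the tree-unraveling machinery already used for $\GML$. The only thing to flag explicitly in writing is the degree-finiteness-to-image-finiteness step, which is needed to legally invoke Theorem \ref{thm:gml-main} on the backward expansions.
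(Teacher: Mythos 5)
Your proof is correct and is essentially identical to the paper's: the paper proves Theorem \ref{thm:gml-back-result} by exactly the same chain of equivalences through Lemma \ref{lemma:gml-back-count-equiv}, Theorem \ref{thm:gml-main}, and Lemma \ref{lemma:gml-back-logic-equiv}. Your explicit check that degree-finiteness of $M_a$ yields degree-finiteness (hence image-finiteness) of $M^B_a$ is a point the paper also notes just before Proposition \ref{prop:gml-back-hom-count-pres}.
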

\begin{proof}
Let $M_a$ and $N_b$ be degree-finite $\sigma$-LTSs. Then we have that
\begin{align*}
\hom(\AKT^k_\sigma,M_a) &= \hom(\AKT^k_\sigma,N_b) \\
&\iff \hom(\KT^k_{\sigma_B},M^B_a) = \hom(\KT^k_{\sigma_B},N^B_b) &(\text{Lemma}~ \ref{lemma:gml-back-count-equiv}) \\
&\iff M^B_a \GMLkequiv^{\sigma_B} N^B_b &(\text{Theorem}~ \ref{thm:gml-main}) \\
&\iff M_a \GMLbkequiv^\sigma N_b. &(\text{Lemma}~ \ref{lemma:gml-back-logic-equiv})
\end{align*}
This completes the proof.
\end{proof}

\paragraph*{$\GML$ with the global modality.}
Let $\GMLg$ denote the extension of $\GML$ with the global modalities for each $k \in \mathbb{N}$ (cf. Definition \ref{def:global-modality}). The proof of our characterization result for $\GMLg$ mirrors that of Theorem \ref{thm:gml-back-result}. Let $\sigma_G = \Prop \cup \Act \cup \{R_G\}$, where $R_G$ is a fresh action not in $\Act$. We write $\Diamond^{\geq k}_{G}$ for the graded modalities associated with the action $R_G$. The following definition is analogous to the ``backwards expansion'' (cf. Definition \ref{def:backward-expansion}).

\begin{definition}
Given a pointed $\sigma$-LTS $M_a$, the \emph{global expansion} of $M_a$ is the $\sigma_G$-expansion $M^G_a$ of $M_a$ given by setting $R^{M^G}_G = \dom(M) \times \dom(M)$.
\end{definition}

\noindent Recall that $\FKT^k_\sigma$ denotes the class of connected acyclic $\sigma$-LTSs of depth at most $k$, and that $\FKT_\sigma = \bigcup_{k \in \omega} \FKT^k_\sigma$ (cf. Definition \ref{def:classes}). The next definition defines a relation between structures in $\FKT_\sigma$ and those in $\KT_\sigma$. Its role is analogous to that of $\flip$ (cf. Definition \ref{def:flip}) in the proof of Theorem \ref{thm:gml-back-result}.

\begin{definition}
Let $T_c \in \FKT_\sigma$. We say that a $\sigma_G$-expansion $T'_c$ of $T_c$ is an \emph{$R_G$-connection of $T_c$} if $T'_c$ is a $\sigma_G$-tree and $T_c = T'_c \upharpoonright \sigma$.
\end{definition}

It's easy to see that every $T'_c \in \KT_{\sigma_G}$ is an $R_G$-connection of some $T_c \in \FKT_\sigma$. Similarly, for all $T_c \in \FKT_\sigma$, we have that $T_c = T'_c \upharpoonright \sigma$ (the reduct of $T'_c$ to the signature $\sigma$) for some $T'_c \in \KT_{\sigma_G}$. From these definitions, it is straightforward to prove the following analogues of Lemma \ref{lemma:gml-back-count-equiv} and Lemma \ref{lemma:gml-back-logic-equiv}.

\begin{lemma}
\label{lemma:glob-equiv}
Let $M_a$ and $N_b$ be finite $\sigma$-LTSs. Then
$$\homN(\FKT_\sigma,M_a) = \homN(\FKT_\sigma,N_b) \iff \homN(\KT_{\sigma_G},M^G_a) = \homN(\KT_{\sigma_G},N^G_b).$$
\end{lemma}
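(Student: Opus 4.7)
The plan is to mirror the proof strategy of Lemma \ref{lemma:gml-back-count-equiv}, although here the reduct map $T'_c \mapsto T'_c \upharpoonright \sigma$ from $\KT_{\sigma_G}$ to $\FKT_\sigma$ is only surjective, not bijective (a given $\sigma$-forest typically admits many $R_G$-connections). The key step, playing the role of Proposition \ref{prop:gml-back-hom-count-pres}, is the identity: for every $T'_c \in \KT_{\sigma_G}$, writing $T_c := T'_c \upharpoonright \sigma$, we have
\[
\homN(T_c, M_a) \;=\; \homN(T'_c, M^G_a).
\]

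To establish this identity, I would show that a map $h : \dom(T'_c) \to \dom(M)$ with $h(c) = a$ is a homomorphism $T'_c \to M^G_a$ if and only if it is a homomorphism $T_c \to M_a$. The ``only if'' direction is immediate, since preservation of $\sigma$-facts is inherited when restricting to the $\sigma$-reducts and $M^G_a \upharpoonright \sigma = M_a$. The ``if'' direction uses the fact that $R_G^{M^G} = \dom(M) \times \dom(M)$, so any map whatsoever preserves $R_G$-facts. Thus the identity on underlying maps restricts to a bijection between $\Hom(T_c, M_a)$ and $\Hom(T'_c, M^G_a)$, and the hom-counts agree.

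With this identity in hand, both directions of the biconditional follow by contraposition, exactly as in Lemma \ref{lemma:gml-back-count-equiv}. If some $T'_c \in \KT_{\sigma_G}$ distinguishes $\homN(\KT_{\sigma_G}, M^G_a)$ from $\homN(\KT_{\sigma_G}, N^G_b)$, then $T'_c \upharpoonright \sigma \in \FKT_\sigma$ distinguishes the $\FKT_\sigma$-profiles of $M_a$ and $N_b$ by applying the identity to both sides. Conversely, if some $T_c \in \FKT_\sigma$ distinguishes the forest profiles, then any $R_G$-connection $T'_c$ of $T_c$ (which exists by the remark preceding the lemma) distinguishes the $\sigma_G$-tree profiles. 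The non-injectivity of the reduct map is harmless here, precisely because the identity above holds uniformly across all $R_G$-connections of a given forest.

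The only point requiring any care is confirming that the $\sigma$-reduct of a $\sigma_G$-tree is a $\sigma$-forest, which the excerpt flags as obvious just before the lemma; it follows from the observation that removing the $R_G$-edges from a $\sigma_G$-tree leaves each vertex with at most one $\sigma$-predecessor, so the connected components of the $\sigma$-reduct are themselves $\sigma$-trees. Once this is granted, the lemma reduces entirely to the bijection argument sketched above, and no depth-tracking is needed since the lemma is stated without a depth restriction.
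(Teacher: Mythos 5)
Your proof is correct and follows exactly the route the paper intends: it is the analogue of Lemma \ref{lemma:gml-back-count-equiv}, with the key hom-count identity $\homN(T'_c \upharpoonright \sigma, M_a) = \homN(T'_c, M^G_a)$ (valid because $R_G^{M^G}$ is total, so every map preserves $R_G$-facts) replacing Proposition \ref{prop:gml-back-hom-count-pres}, and with the reduct map and $R_G$-connections playing the roles of $(\cdot)^\downarrow$ and $\flip$. Your observation that the non-bijectivity of the reduct correspondence is harmless, since the identity holds uniformly over all $R_G$-connections of a given forest, is precisely the point the paper leaves implicit when it calls the lemma straightforward.
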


\begin{lemma}
\label{prop:glob-exp-lemma}
Let $M_a$ and $N_b$ be finite $\sigma$-LTSs. Then $M_a \GMLgequiv^\sigma N_b$ if and only if $M^G_a \GMLequiv^{\sigma_G} N^G_b$.
\end{lemma}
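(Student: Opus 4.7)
The plan is to mirror the proof of Lemma~\ref{lemma:gml-back-logic-equiv}, exhibiting a bijective syntactic translation $\tr : \GMLg \to \GML$ (now over $\sigma_G$) that swaps each global modality for a graded modality along the fresh action $R_G$, and showing that this translation is truth-preserving when one passes from $M_a$ to its global expansion $M^G_a$.

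First I will define $\tr$ by recursion on formula structure: set $\tr(p) = p$ for $p \in \Prop$, have $\tr$ commute with $\land, \lor, \lnot$, set $\tr(\Diamond^{\geq k}_i \varphi) = \Diamond^{\geq k}_i \tr(\varphi)$ for each $i \in I$, and set $\tr(\glob \varphi) = \Diamond^{\geq k}_G \tr(\varphi)$ where $\glob$ denotes the global modality with threshold $k$. It is immediate that $\tr$ is a bijection from $\GMLg$ onto the set of $\GML$ formulas over $\sigma_G$, with inverse replacing every occurrence of $\Diamond^{\geq k}_G$ by the corresponding $\glob$.

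Second, I will prove by structural induction that for every finite $\sigma$-LTS $M$, every $m \in \dom(M)$, and every $\varphi \in \GMLg$, $M,m \models \varphi$ if and only if $M^G,m \models \tr(\varphi)$. The base cases for proposition letters and the inductive cases for Boolean connectives are immediate because $M^G$ is an expansion of $M$ over a strictly larger signature and so agrees with $M$ on the interpretation of every symbol of $\sigma$. For the forward graded modalities $\Diamond^{\geq k}_i$ with $i \in I$, the inductive step uses that $R^{M^G}_i = R^M_i$, so the $R_i$-successors of $m$ and their satisfying counts coincide in both structures. The key case is the global modality: since $R^{M^G}_G = \dom(M) \times \dom(M)$, we have $\Succ^{M^G}_{R_G}[m] = \dom(M)$ for every $m$, hence $M^G,m \models \Diamond^{\geq k}_G \tr(\varphi)$ iff at least $k$ elements $n \in \dom(M)$ satisfy $M^G,n \models \tr(\varphi)$, which by the induction hypothesis is equivalent to at least $k$ elements $n$ satisfying $M,n \models \varphi$, i.e.\ $M,m \models \glob \varphi$.

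Finally, I will chain the two facts: $M_a \GMLgequiv^\sigma N_b$ means $M_a$ and $N_b$ agree on every $\GMLg$ formula, which by the truth-preservation claim is equivalent to $M^G_a$ and $N^G_b$ agreeing on every formula of the form $\tr(\varphi)$, and by bijectivity of $\tr$ this amounts to agreement on all $\GML$ formulas over $\sigma_G$, i.e.\ $M^G_a \GMLequiv^{\sigma_G} N^G_b$. I anticipate no substantive obstacle: the only nontrivial point is ensuring that the universal semantics of $\glob$ are indeed recovered by $\Diamond^{\geq k}_G$ in $M^G$, and this follows directly from the definition of the global expansion.
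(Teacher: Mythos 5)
Your proposal is correct and follows essentially the same route the paper intends: the paper leaves this lemma as a ``straightforward'' analogue of Lemma~\ref{lemma:gml-back-logic-equiv}, whose proof is exactly the bijective translation-plus-induction argument you give, here with $\glob$ mapped to $\Diamond^{\geq k}_G$ and the key case resting on $R^{M^G}_G = \dom(M) \times \dom(M)$. Your version just spells out the details the paper omits.
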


In the case of the global modality, we state our result only for finite $\sigma$-LTSs $M_a$. This is necessary, since homomorphisms out of $\sigma$-forests could map connected components which do not contain $c$ to any connected component in $M_a$, and so $\Hom(T_c,M_a)$ may be infinite even if $M_a$ is degree-finite.

\begin{theorem}
If $M_a$ and $N_b$ are finite $\sigma$-LTSs, then $M_a \GMLgequiv^\sigma N_b$ if and only if $\homN(\FKT_\sigma,M_a) = \homN(\FKT_\sigma,N_b)$.
\end{theorem}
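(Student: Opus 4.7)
The plan is to mirror the proof of Theorem~\ref{thm:gml-back-result} almost verbatim, using the global-modality analogues that have already been set up. Concretely, I would prove the biconditional by chaining together Lemma~\ref{lemma:glob-equiv}, Corollary~\ref{cor:gml-main}, and Lemma~\ref{prop:glob-exp-lemma}, in that order. Since both $M_a$ and $N_b$ are finite, their global expansions $M^G_a$ and $N^G_b$ are finite, and in particular image-finite, which is exactly the hypothesis needed to apply the unranked characterization for $\GML$.

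First I would observe that, because $M_a$ and $N_b$ are finite, all three hom-count profiles involved ($\homN(\FKT_\sigma,\cdot)$, $\homN(\KT_{\sigma_G},(\cdot)^G)$, and the $\GML$-equivalence class of $(\cdot)^G$) are well defined; this is the same subtlety already flagged just before the theorem statement, and it is the reason the result is only claimed for the finite case. With that in hand, I would write the chain
\begin{align*}
\homN(\FKT_\sigma,M_a) = \homN(\FKT_\sigma,N_b)
&\iff \homN(\KT_{\sigma_G},M^G_a) = \homN(\KT_{\sigma_G},N^G_b) &(\text{Lemma}~\ref{lemma:glob-equiv}) \\
&\iff M^G_a \GMLequiv^{\sigma_G} N^G_b &(\text{Corollary}~\ref{cor:gml-main}) \\
&\iff M_a \GMLgequiv^\sigma N_b. &(\text{Lemma}~\ref{prop:glob-exp-lemma})
\end{align*}

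The first step transfers the question from the class of finite $\sigma$-forests to the class of finite $\sigma_G$-trees by pre- and post-composing with the global expansion, which turns a forest into a tree by linking all components to the root through $R_G$. The second step is the unranked $\GML$ characterization applied to the $\sigma_G$-LTSs $M^G_a$ and $N^G_b$, whose hypothesis of image-finiteness is met since $M_a$ and $N_b$ are finite. The third step is just the straightforward translation between $\GMLg$ over $\sigma$ and $\GML$ over $\sigma_G$, obtained by reading $\Diamond^{\geq k}_G$ as $\Diamond^{\geq k}_{R_G}$.

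The only place that requires any real care is the first step, but since Lemma~\ref{lemma:glob-equiv} has already been established, the remaining obstacle is essentially bookkeeping: verifying that the finiteness of $M_a$ and $N_b$ really does guarantee that every intermediate hom-count in the chain is well defined, so that each biconditional is genuinely an equivalence of defined quantities rather than a vacuous equality of undefined ones. Once that is observed, the chain above is the complete proof.
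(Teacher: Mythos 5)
Your proof is correct and follows essentially the same route as the paper, which proves the theorem by citing Lemma~\ref{lemma:glob-equiv}, Theorem~\ref{thm:gml-main}, and Lemma~\ref{prop:glob-exp-lemma} in exactly this chain. The only (harmless) difference is that you invoke the unranked Corollary~\ref{cor:gml-main} where the paper cites the depth-parametrized Theorem~\ref{thm:gml-main}; since the statement involves the full classes $\FKT_\sigma$ and $\KT_{\sigma_G}$ and unbounded $\GML$-equivalence, your choice is if anything the more directly applicable one.
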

\begin{proof}
By Lemma \ref{lemma:glob-equiv}, Theorem \ref{thm:gml-main}, and Lemma \ref{prop:glob-exp-lemma}.
\end{proof}


\section{Hybrid Logic}
\label{sec:hybrid-logic}
The hybrid logic $\HL$ is the extension of the basic modal language with the $\downarrow$-binder, the $@$-operator, and a countably infinite collection $\wvar$ of \emph{world variables} \cite{areces2001hybrid}. Formulas of $\HL$ are generated by the following grammar:
$$\varphi :=  p \mid x \mid \varphi \land \varphi \mid \varphi \lor \varphi \mid \lnot \varphi \mid \Diamond_i \varphi \mid \square_i \varphi \mid \downarrow x. \varphi \mid @_x \varphi,$$
where $i \in I$, $p \in \Prop$, and $x \in \wvar$\footnote{Readers familiar with $\HL$ should note that we omit \emph{nominals} from our presentation.}. A world variable $x$ occurs \emph{free} in a formula $\varphi$ if it does not occur in a subformula of $\varphi$ of the form $\downarrow x. \psi$, and \emph{bound} otherwise. A formula is a \emph{sentence} if it contains no free (world) variables.

An \emph{assignment} for a $\sigma$-LTS $M_a$ is a map $g: \wvar \to \dom(M)$. Given an assignment $g$, a world variable $x_i$, and a state $m \in \dom(M)$, we let $g[x_i\mapsto m]$ denote the assignment which is the same as $g$, except that it maps $x_i$ to $m$. The semantics (omitting the propositional, Boolean, and modal clauses, which are defined as usual) for $\HL$ are given as follows
\begin{center}
\begin{tabular}{l l l}
$M_a,g \models x$ &if $g(x) = a$ for $x \in \wvar$, \\
$M_a,g \models \downarrow x_i. \varphi$ &if $M_a,g[x_i\mapsto a] \models \varphi$, ~\text{and} \\
$M_a,g \models @_x \varphi$ &if $M_b,g \models \varphi$, where $g(x) = b$,
\end{tabular}
\end{center}
For $\HL$ sentences $\varphi$, the assignment chosen does not matter, and so we write $M_a \models \varphi$ instead of $M_a,g \models \varphi$. Given a $\sigma$-LTS $M_a$, the \emph{submodel of $M$ generated by $a$} is the structure $\gsub(M_a)$, defined to be the smallest substructure $M'_a$ of $M_a$ containing $a$ and such that, whenever $b \in \dom(M')$ and $R^M(b,c)$ holds, then $c \in \dom(M')$. Clearly $\gsub(M_a)$ is a point-generated $\sigma$-LTS. The following known result relates $\HL$ to the generated submodel-invariant fragment of $\FOL$, where a formula $\varphi$ is \emph{invariant for generated submodels} if, for any $\sigma$-LTS $M_a$, we have $M_a \models \varphi$ if and only if $\gsub(M_a) \models \varphi$.

\begin{theorem}[Generated Submodel Invariance, \cite{areces2001hybrid}]
\label{thm:gsub-inv}
If $\varphi(x)$ is a first-order formula in a modal signature, then $\varphi(x)$ is equivalent to a (nominal-free) $\HL$ sentence if and only if $\varphi(x)$ is invariant for generated submodels.
\end{theorem}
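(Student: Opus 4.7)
The plan is a classical van Benthem--style characterization argument.

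\textbf{Easy direction ($\Leftarrow$).} I would proceed by structural induction on $\HL$ formulas, strengthening the hypothesis to handle free world variables: for every $\HL$ formula $\varphi$, $\sigma$-LTS $M_a$, and assignment $g$ with $\rng(g) \subseteq \dom(\gsub(M_a))$, we have $M_a, g \models \varphi$ iff $\gsub(M_a), g \models \varphi$. The Boolean and world-variable cases are immediate. For $\Diamond_i$ and $\square_i$, the key observation is that $\dom(\gsub(M_a))$ is closed under $R_i$-successors by definition, so successor quantification yields the same set in both structures. For $\downarrow x. \varphi$, the current state belongs to $\dom(\gsub(M_a))$, so the updated assignment preserves the range condition and the IH applies. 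For $@_x \varphi$, since $g(x) \in \dom(\gsub(M_a))$, the submodel of $M$ generated by $g(x)$ coincides with the submodel of $\gsub(M_a)$ generated by $g(x)$, so the IH applies after a jump. The theorem's hypothesis (that $\varphi$ is a sentence) is the special case of the empty assignment.

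\textbf{Hard direction ($\Rightarrow$).} Assume $\varphi(x)$ is invariant for generated submodels. Let $T_\varphi$ be the set of $\HL$-sentences $\psi$ such that $\varphi(x) \models ST_x(\psi)$. By compactness it suffices to show $T_\varphi \models \varphi(x)$: any finite $T_0 \subseteq T_\varphi$ with $T_0 \models \varphi$ gives the equivalence $\varphi \equiv \bigwedge T_0$, a single $\HL$-sentence. So fix a pointed LTS $N_b$ with $N_b \models T_\varphi$. A standard compactness argument produces a pointed LTS $M_a$ with $M_a \models \varphi$ and $M_a \HLequiv^\sigma N_b$. Pass to $\omega$-saturated elementary extensions $M_a \preceq M'_a$ and $N_b \preceq N'_b$; then $M'_a \models \varphi$ and $M'_a \HLequiv^\sigma N'_b$. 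By invariance, $\gsub(M'_a) \models \varphi(a)$. If the pointed structures $(\gsub(M'_a), a)$ and $(\gsub(N'_b), b)$ are $\FOL$-elementarily equivalent, then $\gsub(N'_b) \models \varphi(b)$; invariance again gives $N'_b \models \varphi(b)$, whence $N_b \models \varphi(b)$ by elementary equivalence with $N'_b$.

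\textbf{Main obstacle.} The crux is verifying elementary equivalence of the generated submodels, for which I would build a back-and-forth system. Define $Z$ to be the set of finite partial functions $\{(m_1,n_1),\ldots,(m_k,n_k)\}$ with $m_i \in \dom(\gsub(M'_a))$ and $n_i \in \dom(\gsub(N'_b))$ such that for every $\HL$ formula $\chi(x_1,\ldots,x_k)$, $M'_a \models \chi[m_1,\ldots,m_k]$ iff $N'_b \models \chi[n_1,\ldots,n_k]$. The singleton $\{(a,b)\}$ lies in $Z$ by $\HL$-equivalence. Three features of $\HL$ turn $Z$ into a back-and-forth of partial isomorphisms: the $@$-operator with world variables expresses equality, since $M'_a \models @_{x_i} x_j[\ldots]$ iff $m_i = m_j$, so any pair in $Z$ is a partial bijection that preserves all incident facts; the modalities combined with $\omega$-saturation supply the forth property by realizing the $\HL$-type of a new successor (with $\downarrow$ used to name the new point before recursing); and $@_x$ ensures the hybrid theory controls truth at every already-named point, so agreement propagates globally across the generated submodels and not merely locally. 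Since any successor extension stays inside the generated submodels by closure under successors, $Z$ is an unbounded-depth back-and-forth of partial isomorphisms between $\gsub(M'_a)$ and $\gsub(N'_b)$, giving the desired elementary equivalence. The delicate step is the forth clause: one needs $\omega$-saturation to realize a type over finitely many named parameters on the $N'_b$ side, which works precisely because each extension only enlarges the tuple by a single element.
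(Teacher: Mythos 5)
The paper offers no proof of this statement: it is imported from Areces, Blackburn, and Marx \cite{areces2001hybrid} as a cited black box, so there is nothing in the text to compare your argument against. Your reconstruction follows the standard van Benthem--style detour that the cited source itself uses --- invariance of $\HL$ formulas by induction for one direction; for the converse, the set $T_\varphi$ of $\HL$ consequences of $\varphi$, compactness, passage to $\omega$-saturated elementary extensions, and a back-and-forth system extracted from $\HL$-equivalence, with $@$ supplying equality and atomic facts, $\downarrow$ naming fresh successors, and saturation realizing types. The architecture is the right one. (Minor point: the direction you label ``$\Leftarrow$'' is in fact the left-to-right half of the stated equivalence.)

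Two steps need repair before the argument closes. First, the induction hypothesis in the easy direction identifies the generating point with the evaluation point, and as stated it does not close: in the $\Diamond_i$ case the IH applied at a successor $a'$ compares $M_{a'}$ with $\gsub(M_{a'})$, which is in general a proper substructure of $(\gsub(M_a))_{a'}$, and the range condition $\rng(g) \subseteq \dom(\gsub(M_{a'}))$ can fail even when $\rng(g) \subseteq \dom(\gsub(M_a))$; the same issue recurs at $@_x$. The standard fix is to fix $\gsub(M_a)$ once and prove, for all $b \in \dom(\gsub(M_a))$ and all $g$ with range in $\dom(\gsub(M_a))$, that $M_b, g \models \chi$ iff $(\gsub(M_a))_b, g \models \chi$; all cases then go through because $\dom(\gsub(M_a))$ is successor-closed. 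Second, in the hard direction your forth and back clauses only extend a partial map by a \emph{successor} of an already-named point, whereas elementary equivalence of $\gsub(M'_a)$ and $\gsub(N'_b)$ requires the unrestricted back-and-forth property: any element of either generated submodel must be addable. The missing observation is exactly point-generatedness: every $m \in \dom(\gsub(M'_a))$ sits at the end of a directed $\sigma$-path from $a$, so one reaches $m$ by iterating the successor-forth clause along that path (adding the intermediate points to the domain as one goes), and dually on the other side using $\omega$-saturation of $M'_a$ for the back steps. With these two repairs your proof is correct and is essentially the argument of \cite{areces2001hybrid}.
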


We write $\gsub^k(M_a)$ to denote the substructure of $\gsub(M_a)$ containing only elements of depth at most $k$. If $M_a$ is image-finite, then $\gsub^k(M_a)$ is finite for all $k \in \mathbb{N}$. The next proposition follows easily from Theorem \ref{thm:gsub-inv}.

\begin{proposition}
\label{prop:finite-gen-iso}
For each $N_b \in \PG^k_\sigma$, there is a formula $\varphi \in \HL$ such that, if $M_a$ is an image-finite point-generated $\sigma$-LTS, then $M_a \models \varphi$ if and only if $\gsub^k(M_a) \cong N_b$.
\end{proposition}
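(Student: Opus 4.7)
My approach is to apply the generated-submodel invariance characterization (Theorem~\ref{thm:gsub-inv}). I will construct a first-order formula $\psi(x)$ in the modal signature $\sigma$ such that, for any $\sigma$-LTS $M_a$, $M_a \models \psi(a)$ iff $\gsub^k(M_a) \cong N_b$. I then verify that $\psi(x)$ is invariant for generated submodels, so Theorem~\ref{thm:gsub-inv} supplies the desired equivalent $\HL$ sentence $\varphi$.

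To build $\psi(x)$, enumerate $\dom(N_b) = \{n_0, n_1, \ldots, n_m\}$ with $n_0 = b$, and record each depth $d_i := \depth(n_i) \le k$ together with a fixed directed $\sigma$-path from $n_0$ to $n_i$ in $N_b$. Then $\psi(x)$ is
\[
\exists x_1 \cdots \exists x_m \Bigl[\, \bigwedge_{i} \alpha_i(x, x_i) \,\wedge\, \bigwedge_{i \ne j} x_i \ne x_j \,\wedge\, \Delta(x, x_1, \ldots, x_m) \,\wedge\, \Gamma^k(x, x_1, \ldots, x_m) \,\Bigr],
\]
where $\alpha_i(x, x_i)$ asserts the existence of a directed path from $x$ to $x_i$ matching the chosen path in $N_b$; $\Delta$ is the complete atomic diagram of $N_b$ stated both positively and negatively, so each proposition letter and each action $R_l$ is affirmed or denied between $x_i, x_j$ exactly as it is between $n_i, n_j$; and $\Gamma^k$ is a universal statement ensuring that every element reachable from $x$ by a directed path of length at most $k$ equals some $x_i$. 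The last conjunct is expressible in first-order logic by $k$-fold nested universal quantification over successors in each of the finitely many actions of $\sigma$.

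For invariance, every element reachable from $a$ by a directed path lies in $\dom(\gsub(M_a))$, and the atomic facts at such elements agree in $M_a$ and $\gsub(M_a)$; hence $M_a \models \psi(a)$ iff $\gsub(M_a) \models \psi(a)$, so Theorem~\ref{thm:gsub-inv} yields an equivalent $\HL$ sentence $\varphi$. A direct verification then shows $M_a \models \varphi$ iff $\gsub^k(M_a) \cong N_b$ for image-finite point-generated $M_a$: the witness map $n_i \mapsto x_i$ is a bijection onto the depth-$\le k$ elements by the distinctness clause together with $\Gamma^k$, and is an isomorphism by $\Delta$. The main subtlety lies in $\Gamma^k$: without it, $\psi$ would express only the existence of a homomorphism from $N_b$ rather than pinning down $\gsub^k(M_a)$ up to isomorphism; combined with the negative atomic facts in $\Delta$, it forces $\gsub^k(M_a)$ to match $N_b$ exactly.
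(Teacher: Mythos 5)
Your proposal is correct and follows essentially the same route as the paper: build a first-order formula that existentially posits distinct witnesses for the elements of $N_b$, asserts their atomic configuration, and universally confines the reachable elements to these witnesses, then observe generated-submodel invariance and invoke Theorem~\ref{thm:gsub-inv}. If anything, your version is slightly more careful than the paper's sketch, which states only the positive relational facts and the reachability/distinctness clause, whereas you explicitly include the negative atomic facts and proposition letters needed to pin down $\gsub^k(M_a)$ up to isomorphism rather than merely up to homomorphic image.
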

\begin{proof}
Fix some $N_b$ in $\PG^k$ with $\dom(N) = \{ b_1 \hdots b_n \}$, where $b = b_1$. Let $\delta(x_1,\hdots,x_n)$ be the $\FOL$ formula expressing that the $x_i$ are distinct, and that for all $y$, $y$ is reachable from $x_1$ by a directed $\sigma$-path of length at most $n$ if and only if $y = x_j$ for some $1 \leq j \leq n$. Consider the $\FOL$ formula
$$\psi(x_1) := \exists x_2 \hdots \exists x_n \left( \delta(x_1,\hdots,x_n) \land \left( \bigwedge_{i,j \leq n : R^N(b_i,b_j)} R(x_i,x_j) \right) \right).$$
If $M_a$ is an image-finite point-generated LTS, then $M \models \psi(a)$ if and only if $\gsub^k(M_a) \cong N_b$. Clearly $\psi(x_1)$ is a first-order formula in a modal signature which is invariant for generated submodels, and so there exists a nominal-free $\HL$ sentence equivalent to $\psi(x_1)$, which is what we wanted to show.
\end{proof}

\noindent The next lemma is obvious, since homomorphisms preserve path lengths and map distinguished elements to distinguished elements.

\begin{lemma}
\label{lemma:gsub-hom-count-preservation}
If $M_a$ is an image-finite LTS, then for each $k \in \mathbb{N}$, we have that $\homN(\PG^k_\sigma,M_a) = \homN(\PG^k_\sigma,\gsub^k(M_a))$.
\end{lemma}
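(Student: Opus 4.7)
The plan is to show that for each finite point-generated $\sigma$-LTS $N_b$ of depth at most $k$, every homomorphism $h: N_b \to M_a$ in fact lands inside $\gsub^k(M_a)$, so that $\Hom(N_b, M_a)$ and $\Hom(N_b, \gsub^k(M_a))$ coincide as sets under the natural identification, and the equality of counts follows immediately.

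First, I would verify that $\homN(\PG^k_\sigma,M_a)$ and $\homN(\PG^k_\sigma,\gsub^k(M_a))$ are well-defined. Since $M_a$ is image-finite and each $N_b \in \PG^k_\sigma$ is finite and point-generated of bounded depth, an easy induction on depth shows that only finitely many elements of $M$ are reachable from $a$ by a directed $\sigma$-path of length at most $k$; thus $\gsub^k(M_a)$ is finite, and each $\Hom(N_b, M_a)$ and $\Hom(N_b, \gsub^k(M_a))$ is finite as well.

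Next, for the key containment, take any $h \in \Hom(N_b, M_a)$. Because $h$ respects distinguished elements we have $h(b) = a$. Fix an arbitrary $n \in \dom(N)$; since $N_b \in \PG^k_\sigma$, there is a directed $\sigma$-path of length $\ell \leq k$ from $b$ to $n$ in $N$, say $\langle (b_0,b_1),\ldots,(b_{\ell-1},b_\ell)\rangle$ with $b_0 = b$ and $b_\ell = n$. Because $h$ is a homomorphism, $\langle (h(b_0),h(b_1)),\ldots,(h(b_{\ell-1}),h(b_\ell))\rangle$ is a directed $\sigma$-path from $a$ to $h(n)$ in $M$ of the same length $\ell \leq k$. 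Hence $h(n) \in \dom(\gsub^k(M_a))$, so $\mathrm{Im}(h) \subseteq \dom(\gsub^k(M_a))$, and $h$ is a homomorphism into $\gsub^k(M_a)$. Conversely, since $\gsub^k(M_a)$ is a substructure of $M_a$, any homomorphism $N_b \to \gsub^k(M_a)$ is already a homomorphism $N_b \to M_a$.

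The two maps between $\Hom(N_b, M_a)$ and $\Hom(N_b, \gsub^k(M_a))$ are inverse to each other, so the cardinalities agree; applying $\cnt_\mathbb{N}$ gives $\homN(N_b, M_a) = \homN(N_b, \gsub^k(M_a))$ for every $N_b \in \PG^k_\sigma$, which is precisely the claim. There is no real obstacle here: the entire argument rests on the observation that homomorphisms preserve directed $\sigma$-paths and distinguished elements, which is baked into the definitions.
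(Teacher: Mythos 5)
Your proposal is correct and follows exactly the reasoning the paper gives (the paper simply declares the lemma obvious ``since homomorphisms preserve path lengths and map distinguished elements to distinguished elements''): you show every homomorphism from a finite point-generated structure of depth at most $k$ has image inside $\gsub^k(M_a)$, and that the restriction/inclusion maps are mutually inverse bijections on hom-sets. Your write-up merely spells out the details that the paper leaves implicit, so there is nothing to add.
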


\noindent We now prove our characterization result for $\HL$.

\begin{theorem}
\label{thm:hl-main}
For image-finite LTSs $M_a$ and $N_b$, the following are equivalent:
\begin{enumerate}
\item $\homN(\PG_\sigma,M_a) = \homN(\PG_\sigma,N_b)$,
\item $\gsub(M_a) \cong \gsub(N_b)$,
\item $M_a \HLequiv N_b$.
\end{enumerate}
\end{theorem}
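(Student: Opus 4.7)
The plan is to mirror the structure of Theorem~\ref{thm:gml-main}, with generated submodels playing the role that tree-unravelings played for $\GML$. I would establish the cycle $(i) \Leftrightarrow (ii) \Leftrightarrow (iii)$, using Lemma~\ref{lemma:gsub-hom-count-preservation} and Fact~\ref{prop:pg-ext-equivalence} as the analogues of Lemma~\ref{thm:unr-hom-count-preservation} and Fact~\ref{prop:tree-ext-equivalence}, and bringing in Lemma~\ref{lemma:iso-ext} to pass from approximating finite isomorphisms to a full isomorphism of generated submodels.

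For $(i) \Leftrightarrow (ii)$, the plan is to argue layer by layer. Assuming $\homN(\PG_\sigma, M_a) = \homN(\PG_\sigma, N_b)$, I restrict to each $\PG^k_\sigma$ and invoke Lemma~\ref{lemma:gsub-hom-count-preservation} to obtain $\homN(\PG^k_\sigma, \gsub^k(M_a)) = \homN(\PG^k_\sigma, \gsub^k(N_b))$. Since $\gsub^k(M_a), \gsub^k(N_b) \in \PG^k_\sigma$, Fact~\ref{prop:pg-ext-equivalence} together with Theorem~\ref{thm:AKW-lovasz} gives $\gsub^k(M_a) \cong \gsub^k(N_b)$ for every $k$, and Lemma~\ref{lemma:iso-ext} lifts this to $\gsub(M_a) \cong \gsub(N_b)$. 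The converse simply restricts an isomorphism $\gsub(M_a) \cong \gsub(N_b)$ to each depth, yields matching hom-counts on the $\gsub^k$'s, and transports them back to $M_a$ and $N_b$ via Lemma~\ref{lemma:gsub-hom-count-preservation}.

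For $(ii) \Rightarrow (iii)$, I would appeal directly to Theorem~\ref{thm:gsub-inv}: every $\HL$ sentence is invariant under generated submodels, so $\gsub(M_a) \cong \gsub(N_b)$ forces $M_a$ and $N_b$ to satisfy the same $\HL$ sentences. For $(iii) \Rightarrow (ii)$, I would use Proposition~\ref{prop:finite-gen-iso}: for each $k$, pick an $\HL$ sentence $\varphi_k$ characterizing $\gsub^k(M_a) \in \PG^k_\sigma$ up to isomorphism among image-finite point-generated LTSs. Since $\gsub(M_a)$ is image-finite and point-generated with $\gsub^k(\gsub(M_a)) = \gsub^k(M_a)$, it satisfies $\varphi_k$, and by Theorem~\ref{thm:gsub-inv} so does $M_a$. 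From $\HL$-equivalence, $N_b \models \varphi_k$, and another application of Theorem~\ref{thm:gsub-inv} yields $\gsub(N_b) \models \varphi_k$, hence $\gsub^k(N_b) \cong \gsub^k(M_a)$. Lemma~\ref{lemma:iso-ext} then produces $\gsub(M_a) \cong \gsub(N_b)$.

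The main obstacle I anticipate is the bookkeeping around generated submodels and image-finiteness: verifying that $\gsub^k$ is well-behaved under iteration (so that $\gsub^k(\gsub(N_b)) = \gsub^k(N_b)$), that image-finiteness of $M_a$ descends to $\gsub(M_a)$ so that each $\gsub^k(M_a)$ is finite and lies in $\PG^k_\sigma$, and that image-finiteness is needed precisely to make Proposition~\ref{prop:finite-gen-iso} and Theorem~\ref{thm:AKW-lovasz} applicable in the two nontrivial directions above.
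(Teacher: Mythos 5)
Your proposal is correct and follows essentially the same route as the paper: the cycle through generated submodels using Lemma~\ref{lemma:gsub-hom-count-preservation}, Fact~\ref{prop:pg-ext-equivalence}, Theorem~\ref{thm:AKW-lovasz}, and Lemma~\ref{lemma:iso-ext} for $(i)\Leftrightarrow(ii)$, and Theorem~\ref{thm:gsub-inv} together with Proposition~\ref{prop:finite-gen-iso} for $(ii)\Leftrightarrow(iii)$. The only cosmetic difference is that for $(ii)\Rightarrow(i)$ the paper argues directly that homomorphic images of point-generated LTSs land inside the generated submodel, whereas you route the same observation through Lemma~\ref{lemma:gsub-hom-count-preservation}; your $(iii)\Rightarrow(ii)$ is in fact stated a bit more carefully than the paper's.
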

\begin{proof}
For $(i)$ to $(ii)$, suppose that 
$\homN(\PG_\sigma,M_a) = \homN(\PG_\sigma,N_b)$. Then clearly $\homN(\PG^k_\sigma,M_a) = \homN(\PG^k_\sigma,N_b)$, and so by Lemma \ref{lemma:gsub-hom-count-preservation}, we have that $\homN(\PG^k_\sigma,\gsub^k(M_a)) = \homN(\PG^k_\sigma,\gsub^k(N_b))$. Hence by Fact \ref{prop:pg-ext-equivalence} and Theorem \ref{thm:AKW-lovasz}, $\gsub^k(M_a) \cong \gsub^k(N_b)$ for each $k \in \mathbb{Z}^+$, and so by Lemma \ref{lemma:iso-ext}, we have that $\gsub(M_a) \cong \gsub(N_b)$. For $(ii)$ to $(i)$, suppose that we have $\gsub(M_a) \cong \gsub(N_b)$. Since the range of a homomorphism from a point-generated $\sigma$-LTS to $M_a$ (resp. $N_b$) is contained within $\gsub(M_a)$ (resp. $\gsub(N_b)$), we have that $\homN(\PG_\sigma,M_a) = \homN(\PG_\sigma,N_b)$. The direction $(ii)$ to $(iii)$ is immediate from Theorem \ref{thm:gsub-inv}. For $(iii)$ to $(ii)$, Proposition \ref{prop:finite-gen-iso} gives us formulas $\varphi_k \in \HL$ such that $N_b \models \varphi$ if and only if $\gsub^k(N_b) \cong \gsub^k(M_a)$ for all $k \in \mathbb{N}$. Since $\gsub^k(M_a) \models \varphi$, we have by Theorem \ref{thm:gsub-inv} that $M_a \models \varphi$, and so by the assumption that $M_a \HLequiv^k N_b$, we have $N_b \models \varphi$. Hence $\gsub^k(M_a) \cong \gsub^k(N_b)$ for all $k \in \mathbb{N}$. Then by Lemma \ref{lemma:iso-ext}, $\gsub(M_a) \cong \gsub(N_b)$.
\end{proof}

\noindent We do not provide a version of this theorem which is parametrized by modal depth, as we did for $\GML$ (cf. Theorem \ref{thm:gml-main}), because Proposition \ref{prop:finite-gen-iso} does not offer a bound (as a function of $k$) on the modal depth of the $\HL$ formula describing a point-generated submodel of depth at most $k$ up to isomorphism.

\paragraph{Backward and Global Modalities.} $\HLe$, the extension of $\HL$ with the global modality, is known to have the expressive power of full first-order logic \cite{areces2007hybrid}, which we noted previously is captured by the left profile over the natural number semiring with respect to the class of all structures. This implies that $\HLe$ equivalence is captured by restricting the left profile over the natural semiring to the class of all $\sigma$-LTSs. We now provide a characterization result for $\HLb$, the extension of $\HL$ with the backward modalities for $k=1$ (cf. Definition \ref{def:backward-modalities}). As in Section \ref{sec:graded-ml}, we fix an expanded signature $\sigma_B = \Prop \cup \Act \cup \Act_B$, where $\Act_B = \{ B_i \mid i \in I \}$ is disjoint from $\Act$.

\begin{definition}
Let $T_c \in \C_\sigma$ and $T'_c \in \PG_{\sigma_B}$ with $\dom(T) = \dom(T')$. We say that $T'_c$ is a \emph{$\PG$-augmentation} of $M_a$ if, for each $i \in I$, there exists some $X_i \subseteq R_i^T$ such that $R_i^{T'} = R_i^T \setminus X_i$ and $B_i^{T'} = B_i^T \cup X_i^{-1}$.
\end{definition}

Thus $T'_c$ is a $\PG$-augmentation of $T_c$ if it can be obtained by replacing $R_i$ transitions in $T_c$ with $B_i$ transitions in the opposite direction. Recall the $\flip$ operation (cf. Definition \ref{def:flip}). If $T'_c$ is a point-generated $\sigma_B$-LTS, then clearly $\flip(T'_c)$ is a connected $\sigma$-LTS.

\begin{definition}
\label{def:expand}
Let $M_a$ be a connected $\sigma_B$-LTS. We write $\reach{M_a}$ to denote the set of elements in $\dom(M)$ reachable by a directed $\sigma_B$-path from $a$, and we set $\unreach{M_a} = \dom(M) \setminus \reach{M_a}$. If all transitions from elements of $\unreach{M_a}$ to elements of $\reach{M_a}$ are actions in $\Act$, then we write $\rec(M_a)$. Furthermore, if $\rec(M_a)$ is satisfied, then we define $\expand{M_a}$ to be the $\sigma_B$-LTS with $\dom(\expand{M_a}) = \dom(M)$ and $p^{\expand{M_a}} = p^M$ for all $p \in \Prop$, such that for all $i \in I$,
\begin{align*}
R_i^{\expand{M_a}} &= R_i^M \setminus \{ \langle m,n \rangle \mid \langle m, n \rangle \in R_i^M, m \in \unreach{M_a}, n \in \reach{M_a} \}, ~\text{and} \\
B_i^{\expand{M_a}} &= B_i^M \cup \{ \langle n,m \rangle \mid \langle m, n \rangle \in R_i^M, m \in \unreach{M_a}, n \in \reach{M_a} \}.
\end{align*}
\end{definition}

Intuitively, $\rec(M_a)$ asserts that all transitions out of elements of $\unreach{M_a}$ to elements of $\reach{M_a}$ are actions in $\Act$. The $\expfunc$ operation replaces $R_i$ transitions from elements of $\unreach{M_a}$ to elements of $\reach{M_a}$ with the corresponding $B_i$ actions in the opposite direction. The next proposition shows that $\expfunc$ is an operation on the class of connected $\sigma_B$-LTSs $M_a$ satisfying $\rec(M_a)$ which grows the set of elements reachable by a $\sigma_B$-path from $a$. The proof is straightforward.

\begin{proposition}
\label{prop:expand-facts}
For all $M_a \in \C_{\sigma_B}$ satisfying $\rec(M_a)$, we have that $\reach{M_a} \subseteq \reach{\expand{M_a}}$ and $\unreach{\expand{M_a}}_a \subseteq \unreach{M_a}$, and these inclusions are proper if $\unreach{M_a} \neq \emptyset$. Furthermore, $\expand{M_a}$ is a connected $\sigma_B$-LTS satisfying $\rec(M_a)$.
\end{proposition}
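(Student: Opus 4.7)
The plan is to establish the four assertions in turn, all resting on a single basic observation: the $\expfunc$ operation modifies only transitions on the cut between $\reach{M_a}$ and $\unreach{M_a}$, and specifically only those $R_i$-facts directed from the former side into the latter. Since $\expand{M_a}$ and $M_a$ share the same underlying domain, the two stated inclusions $\reach{M_a} \subseteq \reach{\expand{M_a}}$ and $\unreach{\expand{M_a}} \subseteq \unreach{M_a}$ are equivalent, and I would prove the first as follows: given $m \in \reach{M_a}$, fix a directed $\sigma_B$-path $\pi$ witnessing reachability. Since $\reach{M_a}$ is closed under directed $\sigma_B$-successors, every vertex of $\pi$ lies in $\reach{M_a}$, so none of the transitions of $\pi$ have the form $R_i^M(u,r)$ with $u \in \unreach{M_a}$ and $r \in \reach{M_a}$; consequently $\pi$ is preserved verbatim by $\expfunc$, placing $m$ in $\reach{\expand{M_a}}$.

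For strictness of the inclusions when $\unreach{M_a} \neq \emptyset$, I would use connectedness of $M_a$ to obtain an undirected $\sigma_B$-path from $a$ to some element of $\unreach{M_a}$, then focus on the first boundary fact of this path, which has one endpoint in each of the two sides of the cut. A short case analysis over the four configurations of this fact (varying relation symbol and orientation) eliminates all but the case of some $R_i^M$-transition from $u \in \unreach{M_a}$ to $r \in \reach{M_a}$: forward-closure of $\reach{M_a}$ under directed transitions rules out $R_i$- or $B_i$-transitions oriented from $\reach{M_a}$ into $\unreach{M_a}$, while $\rec(M_a)$ rules out a $B_i$-transition oriented from $\unreach{M_a}$ into $\reach{M_a}$. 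The remaining configuration is precisely the one flipped by $\expfunc$, yielding a new directed edge $B_i(r,u)$ which, prepended by a directed path from $a$ to $r$, shows $u \in \reach{\expand{M_a}} \setminus \reach{M_a}$. Connectedness of $\expand{M_a}$ is then immediate: $\expfunc$ only reverses the orientation and relation symbol of certain facts, leaving each fact's unordered pair of endpoints unchanged, so every undirected $\sigma_B$-path in $M_a$ yields one in $\expand{M_a}$ between the same endpoints.

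The delicate remaining step, which I expect to be the main obstacle, is verifying the $\rec$ condition for $\expand{M_a}$. Given a transition $(u,v)$ in $\expand{M_a}$ with $u \in \unreach{\expand{M_a}}$ and $v \in \reach{\expand{M_a}}$, I would split on whether $v \in \reach{M_a}$. If so, then $u \in \unreach{M_a}$ together with $\rec(M_a)$ forces $(u,v) \in R_i^M$ for some $i$; but such a transition is deleted from $R_i^{\expand{M_a}}$ by construction, and the additions to $B_i^{\expand{M_a}}$ go only from $\reach{M_a}$ into $\unreach{M_a}$. So $(u,v)$ could not be a transition in $\expand{M_a}$ at all, a contradiction. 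The subtle case is $v \in \unreach{M_a}$, where $v$ has become newly reachable via some flipped $B$-edge in the course of applying $\expfunc$. Here one must trace how vertices in $\unreach{M_a}$ come to lie in $\reach{\expand{M_a}}$ and carefully analyze the transitions still available out of $u$, in order to rule out an offending $B_i$-edge witnessing the transition $(u,v)$. This is the core technical point where the bulk of the argument lies.
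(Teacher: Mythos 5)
The paper offers no proof of this proposition (it is declared straightforward), so there is nothing to compare against line by line; I assess your argument on its own terms. Your handling of the two inclusions, their properness, and connectedness is correct. (One slip in your opening sentence: $\expfunc$ flips the $R_i$-facts directed from $\unreach{M_a}$ \emph{into} $\reach{M_a}$, not the reverse; your subsequent case analysis uses the correct orientation, so nothing downstream is affected.) In particular, the four-way analysis of the first boundary fact on an undirected $\sigma_B$-path from $a$ into $\unreach{M_a}$ is the right way to get properness, and the observation that $\expfunc$ preserves each fact's unordered pair of endpoints settles connectedness.

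The step you flagged as ``the core technical point'' and left open --- ruling out a $B_i$-fact from $u \in \unreach{\expand{M_a}}$ to some $v \in \unreach{M_a} \cap \reach{\expand{M_a}}$ --- cannot be completed, because the $\rec$-preservation claim is false as stated. Take $\dom(M) = \{a,u,v\}$ with $R^M = \{(v,a)\}$ and $B^M = \{(u,v)\}$. This is a finite connected $\sigma_B$-LTS with $\reach{M_a} = \{a\}$, and $\rec(M_a)$ holds: the only fact crossing from $\unreach{M_a}$ into $\reach{M_a}$ is $R^M(v,a)$, which lies in $\Act$, while $B^M(u,v)$ has both endpoints in $\unreach{M_a}$ and is therefore unconstrained by $\rec$. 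Applying $\expfunc$ replaces $R(v,a)$ by $B(a,v)$, so $\reach{\expand{M_a}} = \{a,v\}$, and the untouched fact $B(u,v)$ now runs from $\unreach{\expand{M_a}}$ into $\reach{\expand{M_a}}$, violating $\rec(\expand{M_a})$. The hypothesis is simply too weak to propagate: $\rec(M_a)$ constrains only the current cut, while $\expfunc$ moves the cut. The invariant that does propagate is the stronger condition that every $B_i$-fact originates in $\reach{M_a}$ (equivalently, no $B_i$-fact leaves $\unreach{M_a}$ at all): this implies $\rec(M_a)$, is preserved by $\expfunc$ (old $B_i$-edges originate in $\reach{M_a} \subseteq \reach{\expand{M_a}}$, and the new ones originate in $\reach{M_a}$ by construction), and holds trivially for the $B$-free structures to which the proposition is applied in the proof of Proposition \ref{prop:pg-aux-exists}. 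So the paper's downstream argument survives, but both the statement and any proof of it require this strengthened hypothesis; your instinct that this was where the difficulty lay was exactly right, and no amount of ``careful tracing'' will close it without changing the statement.
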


\begin{proposition}
\label{prop:pg-aux-exists}
If $T_c$ is in $\C_\sigma$, then there is a $\PG$-augmentation $T'_c$ of $T_c$.
\end{proposition}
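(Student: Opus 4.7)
The plan is to construct the desired $\PG$-augmentation by iteratively applying the $\expfunc$ operation, starting from $T_c$ viewed trivially as a $\sigma_B$-LTS. Since $T_c$ is connected and finite, the process should terminate in finitely many steps, leaving a point-generated $\sigma_B$-LTS obtained from $T_c$ by flipping a subset of its transitions.

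First I would view $T_c$ as a $\sigma_B$-LTS $T^{(0)}_c$ by setting $B_i^{T^{(0)}} = \emptyset$ for each $i \in I$. Since all transitions in $T^{(0)}_c$ lie in $\Act$, the condition $\rec(T^{(0)}_c)$ holds vacuously, and $T^{(0)}_c$ is connected because $T_c$ is. I would then recursively define $T^{(n+1)}_c := \expand{T^{(n)}_c}$ for each $n \geq 0$. By Proposition \ref{prop:expand-facts}, each $T^{(n)}_c$ is a connected $\sigma_B$-LTS satisfying $\rec$, so the recursion is well-defined, and moreover $\reach{T^{(n)}_c} \subsetneq \reach{T^{(n+1)}_c}$ whenever $\unreach{T^{(n)}_c}$ is nonempty.

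Because $T$ is finite (as $T_c \in \C_\sigma$), this strictly increasing chain of subsets of $\dom(T)$ stabilizes at some index $N \leq |\dom(T)|$, with $\unreach{T^{(N)}_c} = \emptyset$. Consequently every element of $\dom(T)$ is reachable from $c$ by a directed $\sigma_B$-path in $T^{(N)}_c$, so $T^{(N)}_c$ is point-generated. I would set $T'_c := T^{(N)}_c$.

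Finally, to verify that $T'_c$ is a $\PG$-augmentation of $T_c$, note that each application of $\expfunc$ only removes certain pairs from some $R_i$ and adds their inverses to $B_i$, with no other changes to the domain, the propositional valuations, or the other relations. Letting $X_i$ denote the union over $n < N$ of the $R_i$-transitions flipped at stage $n$, a straightforward induction on $n$ gives $R_i^{T'} = R_i^T \setminus X_i$ and $B_i^{T'} = X_i^{-1}$, matching the definition of a $\PG$-augmentation (with $B_i^T$ read as $\emptyset$). The main work is supplied by Proposition \ref{prop:expand-facts}, which simultaneously delivers progress (strict growth of $\reach$) and preservation of the hypotheses needed to iterate; once that is in hand, termination and the augmentation property follow from the finiteness of $\dom(T)$.
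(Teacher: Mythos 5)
Your proposal is correct and follows essentially the same route as the paper: both view $T_c$ as a $\sigma_B$-LTS with empty backward relations, note that $\rec$ holds vacuously, iterate $\expfunc$ using Proposition \ref{prop:expand-facts} for both progress and preservation of the hypotheses, and conclude termination from finiteness of $\dom(T)$. Your explicit verification at the end that the result satisfies the definition of a $\PG$-augmentation is slightly more detailed than the paper's ``clearly,'' but the argument is the same.
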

\begin{proof}
Suppose $T_c$ is a finite connected $\sigma$-LTS. Since $T_c$ contains no $\sigma_B$ transitions, it clearly satisfies $\rec(T_c)$. Recall from Proposition \ref{prop:expand-facts} that $\unreach{\expand{T_c}}$ is a proper subset of $\unreach{T_c}$ whenever $\unreach{T_c} \neq \emptyset$, and so there exists some $k \in \mathbb{N}$ such that $\unreach{\expandk{T_c}} = \emptyset$, where the exponent $k$ indicates iterated application of the $\expfunc$ operation (which we can do by Proposition \ref{prop:expand-facts}). Hence $\reach{\expandk{T_c}} = \dom(\expandk{T_c})$, and so $\expandk{T_c}$ is a point-generated $\sigma_B$-LTS. Furthermore, $T'_c := \expandk{T_c}$ is clearly a $\PG$-augmentation of $T_c$.
\end{proof}

\noindent The following proposition is straightforward to prove from the definitions.

\begin{proposition}
\label{lemma:pg-pres}
Let $M_a$ be a degree-finite $\sigma$-LTS. Then
\begin{enumerate}
\item If $T_c$ is in $\C_\sigma$, then $\homN(T_c,M_a) = \homN(T'_c,M^B_a)$ for any $\PG$-augmentation $T'_c$ of $T_c$.
\item If $T'_c$ is in $\PG_{\sigma_B}$, then $\homN(\flip(T'_c),M_a) = \homN(T'_c,M^B_a)$.
\end{enumerate}
\end{proposition}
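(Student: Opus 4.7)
The plan is to establish both parts by showing that the identity on set-theoretic functions $h : \dom(T) \to \dom(M)$ yields a bijection between the two sets of homomorphisms whose cardinalities are compared; the counts then agree. This is parallel to the sketch for Proposition~\ref{prop:gml-back-hom-count-pres}, and is essentially definition-chasing through $\PG$-augmentation, $\flip$, and the backward expansion $M^B_a$. Before unfolding definitions, I would note that $M_a$ being degree-finite and $T_c$ (resp.\ $T'_c$) being connected and pointed means both hom-sets under consideration are finite, so the counts over $\mathbb{N}$ are well-defined.

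For part (i), since $T_c \in \C_\sigma$ has no $B_i$-edges, the definition of $\PG$-augmentation gives $R_i^{T'} = R_i^T \setminus X_i$ and $B_i^{T'} = X_i^{-1}$ for some $X_i \subseteq R_i^T$, hence $R_i^T = R_i^{T'} \uplus X_i$. The key step is to check that a function $h : \dom(T) \to \dom(M)$ fixing $c \mapsto a$ and preserving proposition letters is in $\Hom(T_c, M_a)$ iff it is in $\Hom(T'_c, M^B_a)$. Unfolding: $h$ is in the first set iff $R_i^M(h(m),h(n))$ holds whenever $(m,n) \in R_i^{T'}$ or $(m,n) \in X_i$; $h$ is in the second set iff $R_i^{M^B}(h(m),h(n)) = R_i^M(h(m),h(n))$ holds for $(m,n) \in R_i^{T'}$ and $B_i^{M^B}(h(m),h(n))$ holds for $(m,n) \in B_i^{T'} = X_i^{-1}$. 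By the definition of $M^B_a$, this last condition unfolds to $R_i^M(h(n),h(m))$ for $(n,m) \in X_i$, which matches the first formulation. So the two conditions are literally the same system of constraints on $h$, and the identity on functions is the required bijection.

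For part (ii), the argument is entirely analogous and arguably simpler. Recall $R_i^{\flip(T')} = R_i^{T'} \cup (B_i^{T'})^{-1}$ from Definition~\ref{def:flip}. A function $h$ is in $\Hom(\flip(T'_c), M_a)$ iff for each $i \in I$ it sends $R_i^{T'}$-pairs to $R_i^M$-pairs and sends $(B_i^{T'})^{-1}$-pairs to $R_i^M$-pairs. The second condition, rewritten, says that whenever $(m,n) \in B_i^{T'}$ we have $R_i^M(h(n),h(m))$, which by the definition of $M^B_a$ is exactly $B_i^{M^B}(h(m),h(n))$. Thus the conjunction of these two conditions coincides with the homomorphism condition from $T'_c$ to $M^B_a$, and again the identity on functions gives the desired bijection.

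The main obstacle is purely bookkeeping: tracking which direction each $R_i$ or $B_i$ transition points, and correctly reading off the conditions imposed by the backward expansion on the codomain. There is no genuine combinatorial difficulty — once the four definitions (backward expansion, $\PG$-augmentation, $\flip$, and homomorphism) are lined up side-by-side, the claim reduces to observing that two presentations of the same set of constraints on a single function $h$ cut out the same collection of maps.
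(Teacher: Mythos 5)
Your proof is correct and follows exactly the argument the paper intends: the paper omits the proof as ``straightforward from the definitions,'' with the intended route being the same identity-on-functions bijection sketched for the analogous Proposition~\ref{prop:gml-back-hom-count-pres}. Your careful unfolding of the $\PG$-augmentation, $\flip$, and backward-expansion definitions fills in precisely that sketch, with no gaps.
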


\begin{lemma}
\label{lemma:hybrid-back-equiv}
Let $M_a$ and $N_b$ be degree-finite $\sigma$-LTSs. Then $\homN(\C_\sigma,M_a) = \homN(\C_\sigma,N_b)$ if and only if $\homN(\PG_{\sigma_B},M^B_a) = \homN(\PG_{\sigma_B},N^B_b)$.
\end{lemma}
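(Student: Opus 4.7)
The plan is to mirror the structure of Lemma~\ref{lemma:gml-back-count-equiv}, but with $\C_\sigma$ playing the role of $\AKT_\sigma$, $\PG_{\sigma_B}$ playing the role of $\KT_{\sigma_B}$, and the $\PG$-augmentation / $\flip$ pair playing the role of $(\cdot)^\downarrow / \flip$. The two propositions already stated (Proposition~\ref{prop:pg-aux-exists} and Proposition~\ref{lemma:pg-pres}) give exactly the ingredients needed to move counts back and forth between the two sides.

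For the forward direction ($\Rightarrow$), I would argue contrapositively. Suppose $\homN(T'_c,M^B_a) \neq \homN(T'_c,N^B_b)$ for some $T'_c \in \PG_{\sigma_B}$. Then $\flip(T'_c)$ lies in $\C_\sigma$ (since flipping backward transitions in a point-generated $\sigma_B$-LTS produces a connected $\sigma$-LTS), and by part (ii) of Proposition~\ref{lemma:pg-pres} applied to both $M_a$ and $N_b$,
\[
\homN(\flip(T'_c),M_a) = \homN(T'_c,M^B_a) \neq \homN(T'_c,N^B_b) = \homN(\flip(T'_c),N_b),
\]
witnessing inequality of the $\C_\sigma$-profiles. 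For the reverse direction ($\Leftarrow$), I would again proceed contrapositively: if $\homN(T_c,M_a) \neq \homN(T_c,N_b)$ for some $T_c \in \C_\sigma$, use Proposition~\ref{prop:pg-aux-exists} to obtain a $\PG$-augmentation $T'_c \in \PG_{\sigma_B}$ of $T_c$, and then apply part (i) of Proposition~\ref{lemma:pg-pres} to both $M_a$ and $N_b$ to transfer the inequality to $\homN(T'_c,M^B_a) \neq \homN(T'_c,N^B_b)$.

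The main conceptual subtlety — compared with the $\GML$-with-backward case — is that the correspondence between $\C_\sigma$ and $\PG_{\sigma_B}$ is not a bijection: a single connected $\sigma$-LTS admits many different $\PG$-augmentations, depending on which $R_i$ transitions get flipped. Fortunately, for the argument to go through one only needs the existence of \emph{some} $\PG$-augmentation (Proposition~\ref{prop:pg-aux-exists}) together with the fact that $\flip$ maps $\PG_{\sigma_B}$ \emph{into} $\C_\sigma$ and is inverted by any $\PG$-augmentation (i.e., $\flip(T'_c) = T_c$ whenever $T'_c$ is a $\PG$-augmentation of $T_c$). This makes the argument entirely symmetric to the $\GML$ case despite the lack of a bijection. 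I do not expect any step to be particularly hard; the bulk of the work has been absorbed into Propositions~\ref{prop:pg-aux-exists} and~\ref{lemma:pg-pres}, and the present lemma is essentially a transport-of-structure argument using them.
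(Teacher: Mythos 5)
Your proof is correct and follows essentially the same route as the paper's: both directions by contraposition, using Proposition~\ref{prop:pg-aux-exists} together with part (i) of Proposition~\ref{lemma:pg-pres} for one direction and the $\flip$ map together with part (ii) for the other. Your added observation that the correspondence between $\C_\sigma$ and $\PG_{\sigma_B}$ is not a bijection (unlike the $(\cdot)^\downarrow/\flip$ pair in the $\GML$ case) but that mere existence of a $\PG$-augmentation suffices is accurate and consistent with what the paper's proof implicitly relies on.
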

\begin{proof}
Both directions are by contraposition. For the reverse direction, if $\homN(\C_\sigma,M_a) \neq \homN(\C_\sigma,N_b)$, then there is a finite connected $\sigma$-LTS $T_c$ such that $\homN(T_c,M_a) \neq \homN(T_c,N_b)$. Then by Proposition \ref{prop:pg-aux-exists}, there exists a $\PG$-augmentation $T'_c$ of $T_c$. Then by Lemma \ref{lemma:pg-pres}, we have that $\homN(T'_c,M_a) \neq \homN(T'_c,N_b)$, and hence $\homN(\PG_{\sigma_B},M^B_a) \neq \homN(\PG_{\sigma_B},N^B_b)$. The forward direction is similar, using the $\flip$ function and Lemma \ref{lemma:pg-pres}.
\end{proof}

\noindent The proof of the next lemma is analogous to that of Lemma \ref{lemma:gml-back-logic-equiv}.

\begin{lemma}
\label{lemma:hybrid-back-to-reg}
Let $M_a$ and $N_b$ be degree-finite $\sigma$-LTSs. Then $M_a \HLbequiv^\sigma N_b$ if and only if $M^B_a \HLequiv^{\sigma_B} N^B_b$.
\end{lemma}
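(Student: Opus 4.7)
The plan is to mimic the argument for Lemma \ref{lemma:gml-back-logic-equiv}: define a bijective syntactic translation $\tr$ between $\HLb$ sentences over $\sigma$ and $\HL$ sentences over $\sigma_B$, and show by induction that it preserves satisfaction when passing from $M_a$ to $M^B_a$. Since $\HL$ extends the basic modal language only with the $\downarrow$-binder, the $@$-operator, and world variables, and since none of these interact with the transitions, the only nontrivial inductive case will be the backward modality, which is handled exactly as in the $\GML$ proof.

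Concretely, I would define $\tr: \HLb[\sigma] \to \HL[\sigma_B]$ by recursion on formula structure, setting $\tr(p) = p$, $\tr(x) = x$, letting $\tr$ commute with the Boolean connectives, with $\Diamond_i$, $\square_i$ (for $i \in I$), with $\downarrow x_i.(\cdot)$, and with $@_x(\cdot)$, and finally stipulating
\[ \tr(\dback_i\,\varphi) := \Diamond_{B_i}\,\tr(\varphi). \]
This $\tr$ is manifestly a bijection between $\HLb$ formulas over $\sigma$ and $\HL$ formulas over $\sigma_B$, with inverse obtained by replacing each $\Diamond_{B_i}$ by $\dback_i$ and each $\square_{B_i}$ by the corresponding dual.

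Next I would prove by induction on $\varphi \in \HLb$ that for every $\sigma$-LTS $M_a$ and every assignment $g$,
\[ M_a, g \models \varphi \quad \Longleftrightarrow \quad M^B_a, g \models \tr(\varphi). \]
The propositional and world-variable cases are immediate since $M$ and $M^B$ agree on all proposition letters and share the same domain. The Boolean, $\Diamond_i$, $\square_i$ (for $i \in I$), $\downarrow$, and $@$ cases all follow from the inductive hypothesis, using that the binder and the $@$-operator only shuffle the distinguished element (which lives in both $M_a$ and $M^B_a$), and that the interpretations of the original actions $R_i$ coincide in $M$ and $M^B$. The one case requiring the definition of the backward expansion is
\[ M_a, g \models \dback_i\,\varphi \iff \exists b \in \Pred^M_{R_i}[a] : M_b, g \models \varphi, \]
which, by Definition \ref{def:backward-expansion}, is equivalent to $\exists b$ with $B_i^{M^B}(a,b)$ and (by IH) $M^B_b, g \models \tr(\varphi)$, i.e., $M^B_a, g \models \Diamond_{B_i}\,\tr(\varphi) = \tr(\dback_i\,\varphi)$.

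The lemma then follows immediately by restricting to sentences: if $M_a \HLbequiv^\sigma N_b$, then for any $\HL[\sigma_B]$ sentence $\psi$ we apply the biconditional twice with $\varphi = \tr^{-1}(\psi)$ to obtain $M^B_a \models \psi \iff M_a \models \varphi \iff N_b \models \varphi \iff N^B_b \models \psi$; the converse is symmetric, invoking the biconditional with $\psi = \tr(\varphi)$ for arbitrary $\varphi \in \HLb[\sigma]$. I do not anticipate a genuine obstacle here — the argument is essentially bookkeeping — the only thing to be mildly careful about is that the inductive statement must quantify over \emph{all} assignments (not only those relevant to sentences) so that the $\downarrow$- and $@$-cases go through; restricting to sentences only at the very end.
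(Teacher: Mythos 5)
Your proposal is correct and matches the paper's approach: the paper proves this lemma by the same translation replacing each backward modality with the forward modality for the corresponding fresh action $B_i$, followed by an induction showing satisfaction is preserved between $M_a$ and $M^B_a$ (the paper only sketches this, citing analogy with Lemma \ref{lemma:gml-back-logic-equiv}). Your added care about quantifying over all assignments in the inductive statement is exactly the right bookkeeping for the $\downarrow$ and $@$ cases.
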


\begin{theorem}
\label{thm:hlb-main}
If $M_a$ and $N_b$ are degree-finite $\sigma$-LTSs, then $M_a \HLbequiv^\sigma N_b$ if and only if $\hom(\C_\sigma,M_a) = \hom(\C_\sigma,N_b)$.
\end{theorem}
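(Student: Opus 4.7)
The plan is to prove the theorem by a three-step chain of equivalences that exactly mirrors the structure of Theorem \ref{thm:gml-back-result}. All of the translation machinery required has been assembled in the preceding lemmas, so the main theorem should follow by direct composition, with no further combinatorial or logical ingenuity required at this stage.

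First, I would apply Lemma \ref{lemma:hybrid-back-equiv} to convert the homomorphism count condition $\homN(\C_\sigma,M_a) = \homN(\C_\sigma,N_b)$ into the corresponding condition $\homN(\PG_{\sigma_B},M^B_a) = \homN(\PG_{\sigma_B},N^B_b)$ over the backward expansions. Next, since the backward expansion of a degree-finite $\sigma$-LTS is itself a degree-finite $\sigma_B$-LTS, I would apply the equivalence of clauses $(i)$ and $(iii)$ in Theorem \ref{thm:hl-main}, now instantiated in the expanded signature $\sigma_B$, to obtain $M^B_a \HLequiv^{\sigma_B} N^B_b$. Finally, Lemma \ref{lemma:hybrid-back-to-reg} translates $\HL$-equivalence of the backward expansions back into $\HLb$-equivalence of the original $\sigma$-LTSs, closing the chain.

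Since each step is a biconditional, the composition yields the desired biconditional. I do not expect any significant obstacle at this stage: the real combinatorial work has already been absorbed into Lemma \ref{lemma:hybrid-back-equiv} together with its supporting Proposition \ref{prop:pg-aux-exists}, where the $\PG$-augmentation construction handles the mismatch between connected $\sigma$-LTSs and point-generated $\sigma_B$-LTSs, and where Proposition \ref{prop:expand-facts} guarantees termination of the $\expfunc$ procedure on finite inputs via the fact that $\unreach{\expand{T_c}}$ is a proper subset of $\unreach{T_c}$ whenever the latter is nonempty. The one point worth double-checking when writing up the proof is that Theorem \ref{thm:hl-main} is indeed applicable in the expanded signature $\sigma_B$, which is immediate since that theorem was stated for an arbitrary modal signature and degree-finite LTSs.
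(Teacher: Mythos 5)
Your proposal is correct and matches the paper's proof exactly: the paper establishes Theorem \ref{thm:hlb-main} by precisely the same chain, citing Lemma \ref{lemma:hybrid-back-equiv}, Theorem \ref{thm:hl-main} (instantiated over $\sigma_B$), and Lemma \ref{lemma:hybrid-back-to-reg}. Your additional remark that the backward expansion of a degree-finite $\sigma$-LTS is itself degree-finite is a detail the paper also notes earlier in Section \ref{sec:graded-ml}, so nothing is missing.
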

\begin{proof}
By Lemma \ref{lemma:hybrid-back-equiv}, Theorem \ref{thm:hl-main}, and Lemma \ref{lemma:hybrid-back-to-reg}.
\end{proof}

\section{Negative Results}
\label{sec:neg-res}
Recall that $\BML$ denotes the basic (multi)modal language. \emph{Positive modal logic} (notation: $\PBML$) is the fragment of $\BML$ without negation. We now show that $\PBML$-equivalence and $\BML$-equivalence do not admit homomorphism count indistinguishability characterizations. For this, it will be convenient to work with the modal equivalence relations corresponding to these languages.

\begin{definition}
Let $M_a$ and $N_b$ denote $\sigma$-LTSs. A \textit{directed simulation} from $M_a$ to $N_b$ is a relation $Z \subseteq \dom(M) \times \dom(N)$ with $(a,b) \in Z$ such that
\begin{enumerate}[leftmargin=50pt]
\item[(prop$^-$)] If $(m,n) \in Z$, then $\marking^M_\sigma(m) \subseteq \marking^N_\sigma(n)$;
\item[(forth)] For each $i \in I$, if $(m,n) \in Z$ and there's $s \in M$ such that $R_i^M(m,s)$, then there's some $t \in N$ such that $R_i^N(n,t)$ and $(s,t) \in Z$; and
\item[(back)] For each $i \in I$, if $(m,n) \in Z$ and there's $t \in N$ such that $R_i^N(n,t)$, then there's some $s \in M$ such that $R_i^M(m,s)$ and $(s,t) \in Z$.
\end{enumerate}
If directed simulations from $M_a$ to $N_b$ and $N_b$ to $M_a$ exist, then we say that they are \emph{directed simulation equivalent} (notation: $M_a \dirsim N_b$). $Z$ is a \emph{bisimulation} between $M_a$ and $N_b$ if it also satisfies the stronger condition (prop) asserting that $\marking^M_\sigma(m) = \marking^N_\sigma(n)$ whenever $(m,n) \in Z$. If a bisimulation between $\sigma$-LTSs $M_a$ and $N_b$ exists, then they are \emph{bisimilar} (notation: $M_a \bisim N_b$).
\end{definition}

\noindent Directed simulation equivalence and bisimulation capture $\PBML$-equivalence and $\BML$-equivalence, respectively, over image-finite $\sigma$-LTSs.

\begin{theorem} (Directed Simulation Equivalence Invariance, \cite{kurtonina1997simulating})
\label{thm:dirsiminvariance}
For image-finite $\sigma$-LTSs $M_a$ and $N_b$, we have that $M_a \dirsim N_b$ if and only if $M_a \PBMLequiv N_b$.
\end{theorem}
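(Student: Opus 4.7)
The plan is to prove the theorem along the lines of the classical Hennessy–Milner argument for bisimulation/bisimilarity, adapted to the asymmetry of directed simulations and the presence of both $\Diamond$ and $\square$ modalities in $\PBML$.

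For the forward direction, I would show by induction on the complexity of $\varphi \in \PBML$ that, whenever $Z$ is a directed simulation from $M_a$ to $N_b$ and $(m,n) \in Z$, we have $M,m \models \varphi$ implies $N,n \models \varphi$. The propositional case uses condition (prop$^-$) directly: if $p \in \marking^M_\sigma(m)$, then $p \in \marking^N_\sigma(n)$. Conjunction and disjunction are routine. The case $\varphi = \Diamond_i \psi$ uses condition (forth) to produce an $R_i$-successor $t$ of $n$ related to the witnessing $R_i$-successor $s$ of $m$, and the inductive hypothesis gives $N,t \models \psi$. The case $\varphi = \square_i \psi$ uses condition (back): for any $R_i$-successor $t$ of $n$, condition (back) supplies an $R_i$-successor $s$ of $m$ with $(s,t) \in Z$, and since $M,m \models \square_i \psi$ implies $M,s \models \psi$, the inductive hypothesis gives $N,t \models \psi$. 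Applying this inductive result to a directed simulation from $M_a$ to $N_b$ and another from $N_b$ to $M_a$ yields $M_a \PBMLequiv N_b$.

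For the backward direction, assume $M_a \PBMLequiv N_b$. Define
\[
Z := \{ (m,n) \in \dom(M) \times \dom(N) \mid \forall \varphi \in \PBML,\ M,m \models \varphi \Rightarrow N,n \models \varphi \},
\]
and show $Z$ is a directed simulation from $M_a$ to $N_b$. The pair $(a,b)$ lies in $Z$ by the assumption $M_a \PBMLequiv N_b$, and (prop$^-$) is immediate since proposition letters are $\PBML$ formulas. For (forth), suppose $(m,n) \in Z$ and $R_i^M(m,s)$ holds but no $R_i^N$-successor $t$ of $n$ has $(s,t) \in Z$. Then for each $t \in \Succ^N_{R_i}[n]$ there is some $\varphi_t \in \PBML$ with $M,s \models \varphi_t$ and $N,t \not\models \varphi_t$; image-finiteness of $N_b$ makes $\Succ^N_{R_i}[n]$ finite, so $\varphi := \bigwedge_{t} \varphi_t$ is a well-formed $\PBML$ formula, and $M,m \models \Diamond_i \varphi$ while $N,n \not\models \Diamond_i \varphi$, contradicting $(m,n) \in Z$. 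For (back), suppose $(m,n) \in Z$ and $R_i^N(n,t)$ holds but no $R_i^M$-successor $s$ of $m$ has $(s,t) \in Z$; then for each $s \in \Succ^M_{R_i}[m]$ there is $\varphi_s \in \PBML$ with $M,s \models \varphi_s$ and $N,t \not\models \varphi_s$. Using image-finiteness of $M_a$, set $\psi := \bigvee_s \varphi_s$; then $M,s' \models \psi$ for every $s' \in \Succ^M_{R_i}[m]$, so $M,m \models \square_i \psi$, while $N,t \not\models \psi$ witnesses $N,n \not\models \square_i \psi$, again contradicting $(m,n) \in Z$. A symmetric definition yields a directed simulation from $N_b$ to $M_a$, so $M_a \dirsim N_b$.

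The main obstacle is the (back) clause in the backward direction: the natural approach using $\Diamond_i$ and conjunction as in standard Hennessy–Milner does not directly apply, and one must instead use $\square_i$ with a \emph{disjunction} of distinguishing formulas, which is available precisely because $\PBML$ contains $\square_i$. This is exactly why positive modal logic corresponds to directed simulation (whereas the $\Diamond$-only fragment corresponds merely to simulation, which satisfies only the (forth) clause). Image-finiteness is essential on both sides so that the conjunctions and disjunctions in the two critical clauses remain finite, producing genuine $\PBML$ formulas.
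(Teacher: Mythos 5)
The paper does not prove this statement; it is imported directly from Kurtonina and de Rijke \cite{kurtonina1997simulating}, so there is no in-paper proof to compare against. Your argument is the standard one from that source and is essentially correct: the forward direction is a routine induction in which (back) handles $\square_i$, and the backward direction is a Hennessy--Milner-style argument whose key twist --- verifying the (back) clause via $\square_i$ applied to a \emph{disjunction} of distinguishing formulas --- is exactly the right move and is precisely where image-finiteness of $M_a$ enters. The one caveat concerns empty successor sets: if $\Succ^N_{R_i}[n]=\emptyset$ in your (forth) argument the conjunction is empty (i.e.\ $\top$), and if $\Succ^M_{R_i}[m]=\emptyset$ in your (back) argument the disjunction is empty (i.e.\ $\bot$); the grammar for \BML given in Section 2 does not literally contain $\top$ or $\bot$, and without them one can build image-finite models with states satisfying \emph{no} (dually, \emph{every}) \PBML formula, so these degenerate cases are not automatically dischargeable --- indeed, without $\square_i\bot$ the preservation preorder you define need not satisfy (back) at such pairs. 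Kurtonina and de Rijke's positive modal language includes $\top$ and $\bot$; with that convention (or by reading empty conjunctions and disjunctions as $\top$ and $\bot$) your proof is complete.
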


\begin{theorem} (Bisimulation Invariance, \cite{van1976modal})
\label{thm:bisiminvariance}
For image-finite $\sigma$-LTSs $M_a$ and $N_b$, we have that $M_a \bisim N_b$ if and only if $M_a \MLequiv N_b$.
\end{theorem}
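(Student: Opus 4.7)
The plan is to prove the two directions separately, with the harder direction (the Hennessy-Milner-style converse) being the one that actually uses the image-finiteness hypothesis.

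For the direction $M_a \bisim N_b \Rightarrow M_a \MLequiv N_b$, I would proceed by induction on the structure of an arbitrary $\BML$ formula $\varphi$. Fixing a bisimulation $Z$ between $M_a$ and $N_b$, I would show that for every $(m,n) \in Z$, $M, m \models \varphi$ iff $N, n \models \varphi$. The atomic case is handled directly by the strengthened (prop) clause ($\marking^M_\sigma(m) = \marking^N_\sigma(n)$); the Boolean cases are routine; and the interesting cases are $\Diamond_i \psi$ and $\square_i \psi$, which invoke (forth) and (back) together with the inductive hypothesis. Since $(a,b) \in Z$, this yields $M_a \models \varphi$ iff $N_b \models \varphi$, i.e.\ $M_a \MLequiv N_b$. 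This direction does \emph{not} require image-finiteness.

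For the converse, assume $M_a \MLequiv N_b$ with both LTSs image-finite. The natural candidate relation is
\[
Z := \{ (m,n) \in \dom(M) \times \dom(N) \mid (M,m) \MLequiv (N,n) \},
\]
and I would show it is a bisimulation witnessing $M_a \bisim N_b$. Clearly $(a,b) \in Z$. The (prop) condition is immediate since proposition letters are $\BML$ formulas. For (forth), suppose $(m,n) \in Z$ and $R_i^M(m,s)$ holds, and assume toward contradiction that no $t \in \Succ^N_{R_i}[n]$ satisfies $(s,t) \in Z$. Then for each such $t$ there exists $\psi_t \in \BML$ with $M,s \models \psi_t$ but $N,t \not\models \psi_t$. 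Here is the key step where image-finiteness enters: since $\Succ^N_{R_i}[n]$ is finite, the conjunction $\psi := \bigwedge_{t \in \Succ^N_{R_i}[n]} \psi_t$ is itself a $\BML$ formula, and $M,m \models \Diamond_i \psi$ while $N,n \not\models \Diamond_i \psi$, contradicting $m \MLequiv n$. The (back) condition is symmetric, using image-finiteness of $M$.

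The main obstacle is precisely the converse direction, and more specifically the role of image-finiteness: without it, the distinguishing set $\{\psi_t\}$ could be infinite and the conjunction would no longer be a $\BML$ formula, which is why the classical counterexamples to Hennessy-Milner (e.g.\ the $\omega$-branching trees of van Benthem) break the result in the unrestricted setting. Everything else is bookkeeping: the induction in the forward direction is standard, and the construction of $Z$ in the converse is forced by the semantics. No additional machinery beyond the definitions of $\bisim$ and $\MLequiv$ is needed.
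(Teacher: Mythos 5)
The paper does not prove this statement; it is quoted as a known result (the Hennessy--Milner theorem) with a citation to van Benthem. Your proof is the standard argument for that theorem and is correct: the forward direction by induction on formulas, and the converse by showing that the modal-equivalence relation is itself a bisimulation, with image-finiteness used exactly where you say, to make the distinguishing conjunction $\bigwedge_{t}\psi_t$ finite. The only cosmetic point is that the paper's grammar for $\BML$ has no $\top$ constant, so in the degenerate case $\Succ^N_{R_i}[n]=\emptyset$ you should read the empty conjunction as, say, $p\lor\lnot p$; this does not affect the argument.
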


For equivalence relations $\sim$ and $\approx$ on $\sigma$-LTSs, if $M_a \sim N_b$ implies $M_a \approx N_b$, then we say that $\sim$ is \textit{finer} than $\approx$, and $\approx$ is \textit{coarser} than $\sim$. A function $f$ with $\dom(f) = \mathbb{N}$ is \emph{ultimately periodic} if there exist $P \in \mathbb{Z}^+$ and $L \in \mathbb{N}$ such that $f(n) = f(n+P)$ for all $n \geq L$. If $L$ and $P$ are the least integers such that the ultimate periodicity condition is satisfied, then we refer to the sequence $\langle f(0), \hdots, f(L-1) \rangle$ as the \emph{preperiod} of $f$, and we refer to the sequence $\langle f(L), \hdots, f(L+P-1) \rangle$ as the \emph{periodic segment} of $f$.

\begin{proposition}
\label{prop:periodic-counting}
Let $\mathcal{S} = \langle S, +_S, \cdot_S, 0_S, 1_s \rangle$ be a semiring such that $\cntS$ is not injective. Then $\cnt_\mathcal{S}$ is ultimately periodic. Furthermore, the preperiod and the periodic segment are disjoint, and there are no elements which occur more than once in either the preperiod or the periodic segment.
\end{proposition}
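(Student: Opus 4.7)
The plan is to view the sequence $(a_n)_{n \geq 0}$ defined by $a_n := \cntS(n)$ as the forward orbit of $0_\mathcal{S}$ under the function $f : \mathcal{S} \to \mathcal{S}$ given by $f(x) := x + 1_\mathcal{S}$: since $a_0 = 0_\mathcal{S}$ and $a_{n+1} = a_n + 1_\mathcal{S} = f(a_n)$, we have $a_n = f^n(0_\mathcal{S})$. The essential consequence, in which the semiring structure enters, is that whenever $a_i = a_j$ one also has $a_{i+k} = f^k(a_i) = f^k(a_j) = a_{j+k}$ for every $k \geq 0$. Using non-injectivity of $\cntS$, I would pick the least $L \geq 0$ such that $a_L = a_m$ for some $m > L$, and then the least such $m$; setting $P := m - L$ and iterating $f$ on $a_L = a_{L + P}$ yields $a_n = a_{n + P}$ for all $n \geq L$, establishing ultimate periodicity.

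From this choice the distinctness clauses fall out directly. Any equality $a_k = a_l$ with $k < l$ and $k < L$ would exhibit a repeat at an index strictly below $L$, contradicting the definition of $L$; hence $a_0, \dots, a_{L-1}$ are pairwise distinct and none of them recurs at a later index, which gives both preperiod distinctness and disjointness from the periodic segment. For distinctness within the periodic segment, suppose $a_{L + r} = a_{L + s}$ with $0 \leq r < s < P$; applying $f^{P - s}$ to both sides yields $a_{L + r + (P - s)} = a_{L + P} = a_L$, and since $0 < r + (P - s) < P$, this exhibits a repeat of $a_L$ at an index strictly between $L$ and $m = L + P$, contradicting the minimality of $m$.

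Finally, I would confirm that $L$ and $P$ chosen in this way really are the least integers witnessing the ultimate periodicity condition. A smaller starting point $L' < L$ with period $P$ would produce a repeat $a_{L'} = a_{L' + P}$, contradicting the minimality of $L$; and a smaller period $P' < P$ valid from some starting point must, by the minimality of $L$, already be valid from some $L' \geq L$, so the tail $(a_n)_{n \geq L'}$ would admit both periods $P$ and $P'$ and hence also the period $\gcd(P, P') < P$ by the Euclidean algorithm, which propagates into the periodic segment and forces two of the $P$ values $a_L, \dots, a_{L + P - 1}$ to coincide, contradicting the distinctness already proved. I expect the main obstacle is keeping the two minimality conditions on $L$ and $P$ in sync; collapsing them into the single primitive choice of the pair $(L, m)$ — the first index admitting any repeat together with the first repeat of that value — is what makes every clause of the proposition fall out cleanly.
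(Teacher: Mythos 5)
Your proposal is correct and follows essentially the same route as the paper: both locate the first repeated value of $\cntS$, set $L$ and $P$ from that first repeat, and propagate $\cntS(n)=\cntS(n+P)$ forward using the fact that each term is obtained from the previous one by adding $1_\mathcal{S}$ (your orbit-of-$f$ framing is just the paper's induction via associativity of $+_S$). The only difference is one of rigor, in your favor: your explicit contradiction argument for distinctness within the periodic segment, and your verification that the constructed $(L,P)$ really are the least witnesses of ultimate periodicity, spell out steps the paper dispatches with informal remarks.
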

\begin{proof}
If $\rng(\cntS)$ is not injective, then there exists some least $m$ such that $\cntS(m) = \cntS(L)$ for some $L < m$. Let $P = m - L$. Observe that $\cntS(a + b) = \cntS(a) +_S \cntS(b)$ for all $a,b \in \mathbb{N}$, by associativity of addition in $\mathcal{S}$. We show by induction on $n \in \mathbb{N}$ that $\cntS(n) = \cntS(n+P)$ if $n \geq L$. If $n = L$, then $\cntS(n) = \cntS(m) = \cntS(n+P)$. Now suppose inductively that $\cntS(n) = \cntS(n+P)$. Then
\begin{align*}
\cntS(n+1) &= \cntS(n) ~+_S~ \cntS(1) \hspace*{20pt} 
&\text{(Assoc. of $+_S$)} \\
&= \cntS(n+P) ~+_S~ \cntS(1) &\text{(Inductive Hypothesis)} \\
&= \cntS(n+1+P). &\text{(Assoc. of $+_S$)}
\end{align*}
Hence $\cntS(n) = \cntS(n+P)$ for all $n \geq L$. By the above argument, the periodic segment begins with the first appearance of an element of $\mathcal{S}$ which occurs twice in $\rng(\cntS)$, and so the preperiod does not contain repeated elements. This also implies that the preperiod and period are disjoint. Finally, the fact that the periodic segment must also not contain any duplicate elements is clear, since successive elements are obtained by adding $1_S$, and so a duplicate element must mark the start of another repetition of the periodic segment.
\end{proof}

\begin{theorem}
\label{thm:neg-res}
Let $\mathcal{S} = \langle S, +_S, \cdot_S, 0_S, 1_S \rangle$ be an arbitrary semiring, and let $\sim$ denote any relation finer than directed simulation and coarser than bisimulation. There does not exist a class $\mathcal{C}$ of $\sigma$-LTSs such that, for all finite $\sigma$-LTSs $M_a$ and $N_b$, we have $\homS(\mathcal{C},M_a) = \homS(\mathcal{C},N_b)$ if and only if $M_a \sim N_b$.
\end{theorem}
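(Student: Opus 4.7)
I would assume for contradiction that some class $\mathcal{C}$ captures $\sim$ via $\homS$-indistinguishability on finite $\sigma$-LTSs, and then combine Proposition~\ref{prop:periodic-counting} with a family of bisimilar LTSs to force either an impoverished $\mathcal{C}$ or a Boolean-like counting function. For each $n \geq 1$, let $\mathbf{K}_n$ be the $n$-state $\sigma$-LTS pointed at state $1$, with every proposition letter false at every state and with $R_i^{\mathbf{K}_n} = \dom(\mathbf{K}_n) \times \dom(\mathbf{K}_n)$ for every $R_i \in \Act$. All of the $\mathbf{K}_n$ are pairwise bisimilar, hence pairwise $\sim$-equivalent, and an elementary count yields $|\Hom((T,c),\mathbf{K}_n)| = n^{|\dom(T)|-1}$ whenever $(T,c)$ is propositionally empty. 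The capture hypothesis therefore forces $\cnt_\mathcal{S}(n^{|\dom(T)|-1}) = \cnt_\mathcal{S}(1) = 1_\mathcal{S}$ for every $n \geq 1$ and every propositionally empty $T \in \mathcal{C}$.

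I would then split into two cases. In Case~A, every propositionally empty $T \in \mathcal{C}$ is a single-state structure, and I would exhibit the finite LTSs $U_0 = (\{u\},u)$ (no transitions, no propositions) and $U_1 = (\{u,v\},u)$ with a single edge $R_i(u,v)$ for some $R_i \in \Act$. These are not $\sim$-equivalent, since $U_1 \models \Diamond_i \top$ while $U_0 \not\models \Diamond_i \top$ and $\sim \subseteq \PBMLequiv$ preserves $\PBML$-formulas; yet every $T \in \mathcal{C}$ yields the same value on $U_0$ and $U_1$: any propositionally nonempty $T$ gives $\cnt_\mathcal{S}(0)$ on both (as $U_0, U_1$ are propositionally empty), and any single-state propositionally empty $T$ gives the same count on both (neither has a self-loop at its distinguished element). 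In Case~B, some propositionally empty $T \in \mathcal{C}$ has $k := |\dom(T)| - 1 \geq 1$, so $\cnt_\mathcal{S}(n^k) = 1_\mathcal{S}$ for all $n \geq 1$. By Proposition~\ref{prop:periodic-counting}, $\cnt_\mathcal{S}$ cannot be injective and so is ultimately periodic with preperiod length $L$ and period $P$. For large $n$, $n^k$ lies in the periodic segment; since the preperiod and periodic segment have disjoint images, $\cnt_\mathcal{S}(1)$ must also lie in the periodic image (so $L \leq 1$) and $n^k \equiv 1 \pmod P$ for all large $n$. Taking $n = P$ then gives $0 \equiv 1 \pmod P$, forcing $P = 1$; barring the trivial semiring (which immediately contradicts non-triviality of $\sim$), we obtain $\cnt_\mathcal{S}(0) = 0_\mathcal{S}$ and $\cnt_\mathcal{S}(n) = 1_\mathcal{S}$ for all $n \geq 1$, so $\homS$-indistinguishability over $\mathcal{C}$ coincides with $\homB$-indistinguishability.

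To finish Case~B, I would argue that every $T \in \mathcal{C}$ must be homomorphically equivalent to a $\sigma$-tree: otherwise, by a positive variant of van Benthem's characterization theorem, the Boolean query ``there is a homomorphism from $T$'' is not $\PML$-expressible and so fails to be bisimulation-invariant on finite LTSs, yielding a bisimilar (and hence $\sim$-equivalent) finite pair distinguished by $T$ in $\mathbb{B}$---a contradiction. Once each $T \in \mathcal{C}$ is tree-equivalent, $\homB(\mathcal{C},\cdot)$-indistinguishability is at least as coarse as the equivalence captured by the class of $\sigma$-trees, which is $\PMLequiv$ by Theorem~\ref{thm:sim-eq}. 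But $\sim \subseteq \PBMLequiv$ is strictly finer than $\PMLequiv$ (since $\PBML$ properly extends $\PML$ via the $\square$ modality, so there exist finite LTSs that are $\PMLequiv$-related but not $\PBMLequiv$-related); such witnesses lie in the same $\homB(\mathcal{C},\cdot)$-class but not in the same $\sim$-class, the desired contradiction. The main obstacle I anticipate is the positive van Benthem step---exhibiting, for any finite pointed structure not homomorphically equivalent to a $\sigma$-tree, an explicit pair of finite bisimilar LTSs on which its Boolean homomorphism query differs. This should follow from analyzing tree unravelings (a non-tree $T$ fails to map into a sufficiently long unraveling of itself while still mapping into the original), but the combinatorial details require care.
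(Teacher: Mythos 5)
Your overall strategy matches the paper's: the family of complete, fully-labeled (in your case, unlabeled) LTSs $\mathbf{K}_n$, all pairwise bisimilar, forces $\cntS$ to collapse on the values $n^k$; Proposition~\ref{prop:periodic-counting} then pins down the counting function; and the endgame requires a pair of finite LTSs that are not $\sim$-equivalent but cannot be separated by any homomorphism count. Your periodicity analysis in Case~B (deriving $L \leq 1$ from disjointness of the preperiod and periodic images, then $n^k \equiv 1 \pmod P$ for large $n$, hence $P = 1$) is a valid and arguably cleaner alternative to the paper's three-way case split on whether $0_S$ or $1_S$ occurs in the periodic segment. Case~A is also handled correctly, and is an artifact of your choice to make all proposition letters false in $\mathbf{K}_n$; the paper instead makes every proposition letter true everywhere, so that \emph{every} map into $K^n_a$ fixing the distinguished element is a homomorphism and every multi-state $T \in \mathcal{C}$ yields the constraint $\cntS(n^k) = \cntS(1)$, avoiding your propositionally-empty/nonempty split.

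The genuine gap is the last paragraph of Case~B. You invoke a ``positive variant of van Benthem's characterization theorem'' to conclude that every $T \in \mathcal{C}$ is homomorphically equivalent to a $\sigma$-tree, and your sketched justification --- that a non-tree $T$ maps into itself but not into ``a sufficiently long unraveling of itself'' --- does not produce the required witness pair, because a truncated unraveling $\unr^d(T_c)$ is \emph{not} bisimilar to $T_c$ when $T_c$ has cycles, so no contradiction with bisimulation-invariance follows. Making this step rigorous (a finite-model preservation argument for conjunctive queries under bisimulation) is substantially harder than anything else in the proof, and it is also unnecessary. Once you know the counting is Boolean ($\cntS(0) = 0_S \neq 1_S = \cntS(n)$ for $n \geq 1$), homomorphism counts over \emph{any} class $\mathcal{C}$ cannot distinguish homomorphically equivalent structures; the paper therefore finishes immediately with the pair $M_a, N_b$ of Figure~\ref{fig:trivial-class-counterexample}, which are homomorphically equivalent yet satisfy $M_a \not\dirsim N_b$, hence $M_a \not\sim N_b$. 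Substituting that one observation for your final paragraph (there is no need to analyze the shape of the structures in $\mathcal{C}$, nor to compare $\sim$ with $\PMLequiv$ via Theorem~\ref{thm:sim-eq}) closes the gap and makes your proof complete.
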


\begin{proof}
Suppose toward a contradiction that some such class $\mathcal{C}$ exists. For $n \in \mathbb{Z}^+$, let $K^n_a$ denote the $\sigma$-LTS with $n$ states, distinguished element $a$, $p^{K^n} = \dom(K^n)$, and $R^{K^n} = \dom(K^n) \times \dom(K^n)$. Clearly $K^n_a \bisim K^{n'}_{a'}$ for all $n,n' \in \mathbb{Z}^+$. Furthermore, for all $\sigma$-LTSs $T_c$ with $\lvert \dom(T) \rvert = k$, every map $h: T_c \to K^n_a$ with $h(c) = a$ is a homomorphism, so $\lvert \Hom(T_c,K^n_a) \rvert = n^{k-1}$.

We first rule out that $\mathcal{C}$ contains only $\sigma$-LTSs $T_c$ with $\lvert \dom(T) \rvert = 1$. If it did, then for all $\sigma$-LTSs $S_d$, we have $\homS(T_c,S_d) = 1$ if and only if $\Hom(T_c,S_d) \neq \emptyset$. Consider the $\sigma$-LTSs in Figure \ref{fig:trivial-class-counterexample}. By homomorphic equivalence, $\Hom(T_c,M_a) \neq \emptyset$ if and only if $\Hom(T_c,N_b) \neq \emptyset$. Hence $\homS(\mathcal{C},M_a) = \homS(\mathcal{C},N_b)$. However, since $M_a \not\dirsim N_b$, this implies that $M_a \not\sim N_b$, contradicting our assumption about $\mathcal{C}$. Thus $\mathcal{C}$ must contain a structure $T_c$ with $\lvert \dom(T) \rvert = k + 1$ for some $k \in \mathbb{Z}^+$.

\begin{figure}[h]
\centering
\begin{tikzpicture}
\node (a) at (1,0) {$a$};
\node (x1) at (0,0) {$\bullet$};
\node (x2) at (2,0) {$\bullet$};
\node (b) at (4.3,0) {$b$};
\node (y1) at (5.3,0) {$\bullet$};
\node at (-0.5,0) {$\lnot p$};
\node at (2.4,0) {$p$};
\node at (5.7,0) {$p$};

\draw[->] (a) to (x1);
\draw[->] (a) to (x2);
\draw[->] (b) to (y1);

\node at (1,-0.9) {$M_a$};
\draw[decorate, decoration = {brace}] (2.75,-0.5) --  (-0.75,-0.5);

\node at (5,-0.9) {$N_b$};
\draw[decorate, decoration = {brace}] (6,-0.5) --  (4,-0.5);
\end{tikzpicture}
\caption{Homomorphically-equivalent $\sigma$-LTSs such that $M_a \not \bisim N_b$ and $M_a \not \dirsim N_b$.}
\label{fig:trivial-class-counterexample}
\end{figure}

We now claim that $\cntS$ is non-injective. If $\cntS$ were injective, then $\homS(T_c,K^1_a) = \cntS(1) \neq \cntS(2^k) = \homS(T_c,K^2_a).$
Since $K^1_a \bisim K^2_a$ (and hence $K^1_a \sim K^2_a$), this contradicts our assumption about $\mathcal{C}$. Thus we may assume that $\cntS$ is non-injective, and so by Proposition \ref{prop:periodic-counting}, it is ultimately periodic: there exist $P \in \mathbb{Z}^+$ and $L \in \mathbb{N}$ such that $\cnt_\mathcal{S}(n) = \cnt_\mathcal{S}(n+P)$ for all $n \geq L$. Let $\pi = \pi_0\hdots\pi_{P-1}$ denote the periodic segment of $\cntS$, and assume that $L$ and $P$ are minimal, so that, by Proposition \ref{prop:periodic-counting}, $\pi$ contains no duplicate elements. Figure \ref{fig:cntS-sequence} depicts the range of $\cntS$.

\begin{figure}[h]
\centering
\begin{tikzpicture}
    \node (s0) at (0,0) {$0_S$};
    \node (s1) at (0.75,0) {$1_S$};
    \node(sdots) at (1.4,0) {$\hdots$};
    \node(sN-1) at (3,0) {$\cntS(L-1)$};
    
    \node (pi0) at (4.75,0) {$\pi_0$};
    \node (pi1) at (5.4,0) {$\hdots$};
    \node (pi2) at (6.3,0) {$\pi_{P-1}$};
    \node (pi3) at (7.5,0) {$\pi_0$};
    \node (pi4) at (8.15,0) {$\hdots$};
    \node (pi5) at (9.05,0) {$\pi_{P-1}$};
    \node (pi6) at (10,0) {$\hdots$};
    
    

\end{tikzpicture}
\caption{The counting sequence in $\mathcal{S}$.}
\label{fig:cntS-sequence}
\end{figure}

\noindent We now distinguish several cases, deriving a contradiction in each.
\begin{enumerate}
\item If $0_S$ occurs in $\pi$, then $L = 0$ (i.e., $\cntS$ is \emph{purely} periodic), and $\pi_0 = 0_\mathcal{S}$. Then $\cntS(n) = \cntS(n \mod P)$ for all $n \in \mathbb{N}$. Hence we have that $\homS(T_c,K^P_a) = \cntS(P^k) = \cntS(P^k \mod P) = 0_\mathcal{S}$,
while $\hom(T_c,K^1_a) = \cntS(1) = 1_\mathcal{S}$. This implies that  $\homS(T_c,K^1_a) \neq \homS(T_c,K^P_a)$, which is a contradiction since $K^1_a \bisim K^P_a$.
\item If $1_S$ occurs in $\pi$ but $0_S$ does not, then $\pi_0 = 1_S$. Distinguish cases.
\begin{enumerate}
\item If $P = 1$, then for all $\sigma$-LTSs $S_d$, we have $\homS(T_c,S_d) = 1$ if and only if $\Hom(T_c,S_d) \neq \emptyset$. Consider the example in Figure \ref{fig:trivial-class-counterexample}: by homomorphic equivalence, we have $\homS(\mathcal{C},M_a) = \homS(\mathcal{C},N_b)$. This is again a contradiction, since $M_a \not \dirsim N_b$.
\item If $P > 1$, then $\cntS(0) = 0_S$, and $\cntS(n) = \pi_{((n-1) \mod P)}$ for $n > 0$. Then since $P^k-1 \mod P = P-1$, we have that $\homS(T_c,K^P_a) = \pi_{((P^K -1) \mod P)} = \pi_{P-1}$. Furthermore, since $\pi_0 = 1$, $P-1 \neq 0$, and the periodic segment contains no repeated elements, $\pi_{P-1} \neq 1_\mathcal{S}$. Hence $\homS(T_c,K^1_a) \neq \homS(T_c,K^P_a)$.
\end{enumerate}
\item If $1_\mathcal{S}$ does not occur in $\pi$, then $\homS(T_c,K^n_a) = n^k \neq 1_\mathcal{S}$ for $n$ sufficiently large. Hence we have $\homS(T_c,K^1_a) \neq \homS(T_c,K^n_a)$.
\end{enumerate}
Since we reach a contradiction in each case, no such class $\mathcal{C}$ can exist.
\end{proof}

\section{Discussion}
\label{sec:discussion}
Our positive characterization results, summarized in Figure \ref{fig:summary}, could also be seen as characterizations of certain modal equivalence relations, just as our negative result (Theorem \ref{thm:neg-res}) was. For example, image-finite LTSs are $\GML$-equivalent if and only if there exists a \emph{graded bisimulation} between them \cite{de2000note}. Similarly, two LTSs are equivalent with respect to nominal-free $\HL$ formulas if and only if there is an $\omega$-bisimulation between them \cite{areces2001hybrid}.

\begin{figure}[h]
\centering
\begin{minipage}{.49\linewidth}
\centering
\begin{tabular}{|c|c|}
\hline
\textbf{Language} & \textbf{Captured by} \\
\hline
\PML & $\hom_\mathbb{B}(\KT_\sigma,M_a)$ \\
\hline
\PMLb & $\hom_\mathbb{B}(\AKT_\sigma,M_a)$ \\
\hline
\PMLg & $\hom_\mathbb{B}(\FKT_\sigma,M_a)$ \\
\hline
\GML & $\hom_\mathbb{N}(\KT_\sigma,M_a)$ \\
\hline
\GMLb & $\hom_\mathbb{N}(\AKT_\sigma,M_a)$ \\
\hline
\end{tabular}
\end{minipage}%
\begin{minipage}{.49\linewidth}
\centering
\begin{tabular}{|c|c|}
\hline
\textbf{Language} & \textbf{Captured by} \\
\hline
\GMLg & $\hom_\mathbb{N}(\FKT_\sigma,M_a)$ \\
\hline
\HL & $\hom_\mathbb{N}(\GK_\sigma,M_a)$ \\
\hline
\HLb & $\hom_\mathbb{N}(\C_\sigma,M_a)$ \\
\hline
\PBML & None \\
\hline
\BML & None \\
\hline
\end{tabular}
\end{minipage}
\caption{Summary of Characterization Results.}
\label{fig:summary}
\end{figure}

\paragraph{Related work.}
An initial catalyst for investigating a left-profile characterization for $\GML$ was recent work by Barcelo et. al. showing that nodes of undirected graphs are indistinguishable by a special case of GNNs (aggregate-combine GNNs) if and only if they are graded modal equivalent \cite{barcelo2020logical}. Given that $\GML$ is a syntactic fragment of $\Ctwo$, and that both $\Ctwo$-equivalence and indistinguishability by GNNs can be captured by the restriction of the left homomorphism vector to the class of undirected trees, this result naturally suggested that a similar restriction to appropriate classes of trees should capture graded modal logic, as we have shown (cf. Theorem \ref{thm:gml-main}).

Sections \ref{sec:pos-ext-ml}, \ref{sec:graded-ml}, and \ref{sec:hybrid-logic} provide homomorphism count indistinguishability characterizations using model-theoretic methods. An important line of related work studies categorical generalizations of Lov\'{a}sz's original result; early work in this direction includes \cite{isbell1991some} and \cite{pultr1973isomorphism}. More recent work on \emph{game comonads} formalizes model-comparison games (such as the bisimulation game) in category-theoretic terms \cite{abramsky2017pebbling,abramsky2021relating}. These game comands can be used to derive homomorphism count indistinguishability results from general categorical results. For example, Theorem \ref{thm:sim-eq} is a consequence of a general categorical result proven in \cite{abramsky2021relating}. Similarly, a weaker version of Theorem \ref{thm:gml-main}, applying to finite structures, was obtained in \cite{dawar2021lovasz} using these methods. Categorical and topological arguments were used in \cite{reggio2022polyadic} to provide the first Lov\'{a}sz-style results for classes of infinite structures.

Early negative results pertaining to characterizations of logical equivalences via homomorphism count indistinguishability begin with \cite{atserias2021expressive}, in this case limited to negative results with respect to counting done in the Boolean and natural number semirings. In \cite{lichter2024limitations}, the authors show that equivalence with respect to \emph{linear-algebraic logic} cannot be captured by homomorphism count indistinguishability with respect to any class of graphs, both when counting is done in the natural numbers, and when counting is done in an arbitrary finite prime field. The present paper goes a step further, using the more general algebraic structure of semirings as the basis of counting for its negative results.

\paragraph*{Future work.}
Our combinatorial model-theoretic arguments for Theorems \ref{thm:sim-eq}, \ref{thm:gml-main}, and \ref{thm:hl-main} are analogous to earlier results for Lov\'{a}sz-style theorems. However, the method of lifting these results to extensions of the languages with backward or global modalities is, to the author's knowledge, a novel approach. One future avenue of research would be to generalize these methods to a categorical setting. Furthermore, while the aforementioned categorical work has provided interesting sufficient conditions for Lov\'{a}sz-style theorems, there is not yet a concise \emph{necessary} condition for a logic to admit such a result. Another interesting avenue of research would be to use the insight gained from our broad negative result in Theorem \ref{thm:neg-res} to identify such a condition. A last direction for future work is to identify modal relations captured by homomorphism indistinguishability with respect to \emph{finite} classes of LTSs; these are naturally related to the notion of \emph{homomorphism query algorithms} \cite{cate2024when}.

\paragraph*{Acknowledgements.} 
I sincerely thank Balder ten Cate for his invaluable advising during the development of much of this work, and I thank Scott Weinstein and Val Tannen for helpful comments on an earlier draft.


\bibliographystyle{aiml}
\bibliography{bib}

\end{document}